\documentclass[11pt,a4paper]{article}
\usepackage{amsmath, amssymb, amsfonts, amsthm}
\usepackage{hyperref}

\setlength{\textwidth}{18cm}
\setlength{\textheight}{24cm}
\setlength{\oddsidemargin}{0cm}
\setlength{\evensidemargin}{0cm}
\setlength{\marginparwidth}{2cm}
\hoffset=-1truecm
\voffset=-2truecm
\footskip = 30pt
\marginparsep=-0.2cm

\newtheorem{theorem}{Theorem}[section]
\newtheorem{prop}[theorem]{Proposition}
\newtheorem{cor}[theorem]{Corollary}
\newtheorem{lemma}[theorem]{Lemma}

\newtheorem{rem}[theorem]{Remark}

\def\z*{\bar z}

\def\RE{\mathbb R}
\def\NN{\mathbb N}

\def\p{\par\noindent}
\newcommand{\lf}{\left}
\newcommand{\ri}{\right}

\newcommand{\be}{\begin{equation}}
\newcommand{\ee}{\end{equation}}
\newcommand{\bey}{\begin{eqnarray}}
\newcommand{\eey}{\end{eqnarray}}
\newcommand{\ls}{\leqslant}

\newcommand{\bees}{{\begin{split}}}
\newcommand{\ees}{{\end{split}}}
\newcommand{\bes}{{\begin{equation}\begin{split}}}
\newcommand{\es}{{\end{split}\end{equation}}}
\newcommand{\erre}{{\mathbb R}}

\newcommand{\sech} {{\rm{sech}}}

\newcommand{\bea}{\begin{eqnarray}}

\newcommand{\eea}{\end{eqnarray}}

\newcommand{\n}{\noindent}

\newcommand{\f}{\frac}

\newcommand{\al}{\alpha}

\newcommand{\si}{\sigma}

\newcommand{\mc}{\mathcal}

\numberwithin{equation}{section}




\newcommand{\wt}{\widetilde}

\newcommand{\donothing}[1]{}

\newcommand{\de}{\delta}

\newcommand{\GG}{\mathcal{G}}
\newcommand{\LG}[1]{L^{#1}}
\newcommand{\LT}[2]{L^{#1}_{[#2]}}
\newcommand{\phisol}[1]{\phi_{#1}}
\renewcommand{\Im}{\operatorname{Im}\,}
\renewcommand{\Re}{\operatorname{Re}\,}

\date{}
\usepackage{graphicx}
\begin{document}
\title{Fast solitons on star graphs}

\author{Riccardo Adami${}^1$, Claudio Cacciapuoti${}^2$,
Domenico Finco${}^3$, and Diego Noja${}^1$
\\
\\
${}^1$Dipartimento di Matematica e Applicazioni, Universit\`a
 di Milano
Bicocca \\
via R. Cozzi 53, 20125 Milano, Italy\\
riccardo.adami@unimib.it, diego.noja@unimib.it
 \\ \\
${}^2$Hausdorff Center for Mathematics,
 Institut f\"ur Angewandte Mathematik \\
Endenicher Allee 60, 53115 Bonn, Germany\\
cacciapuoti@him.uni-bonn.de\\ \\
${}^3$
Facolt\`a di Ingegneria, Universit\`a Telematica Internazionale Uninettuno\\
Corso Vittorio Emanuele II, 00186 Roma, Italy\\
d.finco@uninettunouniversity.net}

\maketitle

\begin{abstract}
\par\noindent
We define the Schr\"odinger equation with focusing, cubic
nonlinearity on one-vertex graphs. We prove global well-posedness in
the energy domain and conservation laws for some self-adjoint boundary
conditions at the vertex, i.e. Kirchhoff boundary condition and the so
called $\delta$ and $\delta'$ boundary conditions. Moreover, in the
same setting we study the collision of a fast solitary wave with the
vertex and we show that it splits in reflected and transmitted
components. The outgoing waves preserve a soliton character over a
time which depends on the logarithm of the velocity of the ingoing
solitary wave. Over the same timescale the reflection and transmission
coefficients of the outgoing waves coincide with the corresponding
coefficients of the linear problem. In the analysis of the problem we
follow ideas borrowed from the seminal paper \cite {[HMZ07]} about
scattering of fast solitons by a delta interaction on the line, by
Holmer, Marzuola and Zworski; the present paper represents
an extension of their work to the case of graphs and, as a byproduct,
it shows how to extend the analysis of soliton scattering by other
point interactions on the line, interpreted as a degenerate graph.
\end{abstract}

\begin{small}
\n
\emph{Keywords: }quantum graphs, non-linear Schr\"odinger equation, solitary waves.\\
\emph{MSC 2010: }35Q55, 81Q35, 37K40.
\end{small}

\section{Introduction}
\setcounter{equation}{0} In the present paper we study the {\it
  nonlinear} wave propagation on graphs.  As far as we know the
subject of nonlinear Schr\"odinger evolution on graphs is at its
beginnings. An extensive literature on the behaviour of linear wave
and Schr\"odinger equations on graphs exists (\cite
{[Kuc04],[Kuc05],[KS99],[BCFK06],[BEH]}, and references therein) and a
certain activity concerning the so called {\it discrete} nonlinear
Schr\"odinger equation (DNLSE) in chains with edges and graphs
inserted (``decorations'', interpreted as defects in the chain), both
from the physical and the numerical point of view (see for example
\cite {[KFTK],[BCSTV]}). To our knowledge, however, there are only
very few papers in which nonlinear Schr\"odinger evolution on graphs
has been introduced and studied (see \cite{[brando],[Sob]}). In the first paper \cite{[brando]}
(and similar ones quoted therein) NLS on graphs emerges in some models of quantum field theory on ``bulks'';
the second recent paper \cite {[Sob]} addresses from a physical point of view some
general questions related to the ones here studied and is briefly commented in
the conclusions of the present paper. Besides several results on existence and
stability of stationary states for the NLS on star graphs can be found in
\cite{[ACFN2]}.
\par In any case the study of nonlinear propagation in ramified
structures could be of relevance in several branches of pure and
applied science, from condensed matter physics to nonlinear fiber
optics, hydrodynamics and fluid transport (a non-traditional example
is blood flow in veins and arterias), and finally neural networks (see
for example the study of {\it reaction-diffusion type}
FitzHugh-Nagumo-Rall equations on networks in \cite{[CMu]}, and
references therein).\par In all these examples there is a strong
dependence on the modelization. The nonlinear Schr\"odinger equation
with cubic nonlinearity is especially suitable to describe nonlinear
electromagnetic pulse propagation in optical fibers and, under the
name of Gross-Pitaevskii equation, the dynamics of Bose-Einstein
condensates.  Better suited for other applications, for example the
hydrodynamic flow, is the KdV equation or its relatives, not treated
here. \par Of course real networks are
 not strictly
one dimensional, and an abstract graph, which is just a set of copies
of $\RE^+$ (``edges'', or ``branches'')
with functions living on it satisfying certain boundary conditions at $0$ lacks some of the geometric meaningful characteristics of a real
network, such as thickness and curvature of the branch or orientation between
edges. On the other hand,
problems
related to wave propagation on networks
are far from being well
understood also for the linear propagation (see for example \cite {[ACF],[CE]}),
and so we content ourselves with posing and analysing the nonlinear problem in the idealized and simplified
framework of an abstract graph.
\par We study here the special case of a star graph with three edges. A
generalization to a star graph with $n$ edges would be
straightforward, but here our interest is in clarifying the main
features of the evolution and the techniques involved in its analysis. A
preliminary
point and not a trivial issue is the definition of the dynamics.
Let us recall that for a star graph $\GG$, the linear Schr\"odinger
dynamics is defined by giving a self-adjoint operator $H$ on the
product of $n$ copies of $L^2(\RE^+)$ (briefly $L^2(\GG)$), with a
domain $D(H)$ in which appears a linear condition involving the values at $0$ of
the functions of the domain and of their derivatives. The admissible
boundary conditions characterize the
interaction at the vertex of the star graph. For the nonlinear
problem, we establish the well-posedness of the dynamics in the case
of a star graph with some distinguished boundary condition at the
vertex, namely the free (or Kirchhoff), and the $\delta$ and $\delta'$
boundary conditions (see section $2$ for the relevant definitions). \par
To
clarify
the exact problem we are faced to, the
differential equation to be studied is of the form
\begin{equation}
\label{diffform}
 i \frac{d}{dt}\Psi_t \ = \ H \Psi_t - | \Psi_t |^2 \Psi_t\,,\qquad t\geqslant0\,,
\end{equation}
where the function $\Psi$ is, for a three-edge graph, a column
vector
\begin{equation*}
\Psi =
\lf(
\begin{array}{c}
\psi_1 \\ \psi_2 \\ \psi_3
\end{array}
\ri)\ ,
\end{equation*}
that lies in the domain $D(H)$ of the linear Hamiltonian $H$ on the graph, so
emboding the relevant boundary conditions. This is the abstract strong form of
the equation, which is equivalent to a particular nonlinear
coupled system of scalar equations. The coupling is not due to the
nonlinearity, because of the definition
\begin{equation*}
|\Psi|^2\Psi\equiv \lf(
\begin{array}{c}
|\psi_1|^2\psi_1 \\ |\psi_2|^2\psi_2 \\ |\psi_3|^2\psi_3
\end{array}
\ri)\ ,
\end{equation*} but to the boundary condition at the vertex. For example,
in the simple case of a Kirchhoff boundary condition, the coupling between the edges is given by
\begin{equation*}
 \Psi \in L^2(\GG) \text{ s.t. } \psi_i \in H^2 (\RE^+ ), \, \psi_1
 (0) = \psi_2 (0) = \psi_3 (0), \, \psi_1' (0) + \psi_2' (0) +
 \psi_3'(0)=0 \ .
\end{equation*}
The $\delta$ or $\delta'$ boundary conditions allow a coupling between
the values of the function $\Psi$ and the values of their derivatives at
the origin, but in principle the nature of the problem is
unaltered.\par\noindent In the present paper, for several reasons, we
prefer to write the dynamics in weak form, which is the following
\begin{equation}
\label{intform1}
\Psi_t \ = \ e^{-iH t} \Psi_0 + i \int_0^t e^{-i
  H (t-s)}|\Psi_s|^2 \Psi_s \, ds\,,\qquad t\geqslant0\ ;
\end{equation}
here the $\Psi$
belongs to
the
form domain ${\mathcal D} ({\mathcal E}_{lin})$, where ${\mathcal
  E}_{lin}$ is the quadratic form of the linear Hamiltonian $H$. We
interpret, according to the use, the form domain as the finite
energy space.
An adaptation of the methods in
\cite{[AN09]} gives the local well-posedness of the equation
\eqref{intform1} for every initial data
in ${\mathcal D} ({\mathcal
  E}_{lin})$. Moreover, charge and energy conservation laws hold true
for such weak solutions, and as a consequence, the NLS on graph admits
global solutions. The generalization of the well-posedness
  result to the case of general self-adjoint boundary conditions will
  be treated elsewhere.

\par Apart
from well-posedness, the main goal of this paper is to provide
information on the interaction between a solitary wave and the
boundary condition at the vertex. As it is well-known, the NLS on the
line admits a family of solitary non dispersive solutions, rigidly
translating with a fixed velocity. A rich family of solitary solutions
(``solitons'') is given by the action of the Galilei group on the
function
\begin{equation*}
 \phi (x) \ = \ \sqrt 2 \cosh^{-1} x\ ,\qquad x\in\RE\,,
 \end{equation*}
 or
explicitly,
\begin{equation*}
 \phisol{x_0,v}(x,t) = e^{i \f v 2 x} {e^{- i t \f
    {v^2} 4}}e^{it} \phi ( x - x_0 - vt)\,\qquad x\in\RE\ ,\ t\in\RE\ ,\ v\in\RE.
\end{equation*}
\par

We show that, after
the collision of a
solitary wave with the vertex there exists
a timescale during which the dynamics can be described as the
scattering of three split solitary waves, one reflected on the same
branch where the originary soliton was running asymptotically in the
past, and two transmitted solitary waves on the other branches. On
the same timescale, the amplitudes of the reflected and transmitted
solitary waves are given by the scattering matrix of the linear
dynamics on the graph. The soliton-like character persists over time
intervals that depend on the velocity of the impinging original
soliton: the faster is the original soliton, the (logarithmically in
the velocity $v$) longer is the survival time of the solitary wave
behaviour on every branch of the graph. The non-trivial point is that
the persistence time of solitary behaviour after collision with the
vertex is much longer, for fast solitons, than the time over which it
is reasonable to approximate the nonlinear dynamics with the linear
one.
The same timescale of the order $\ln v$ of persistence of
solitary behaviour appears in the paper \cite{[AbFS]} where the
collision of two solitary waves with an underlying smooth potential is
studied, and in the paper by Holmer, Marzuola and Zworski
\cite{[HMZ07]} on the fast NLS-soliton scattering by a delta potential
on the line, which is the main source of inspiration for
 our result and for the techniques employed in the present
paper. \par We give now an outline of our result and its proof.\p The
initial data are of the following form
\begin{equation} \label{data}
 \Psi_0 (x) \ =
\ \left( \begin{array}{c} \chi (x) e^{-i \f v 2 x} \phi (x - x_0) \\ 0
  \\ 0 \end{array} \right), \qquad x_0 \geqslant v^{1 - \delta}, \ 0 < \delta < 1,
\end{equation}
 where $\chi$ is a cut off function,
that is $\chi \in C^\infty (\erre^+ )$, $\chi = 1$ in $(2, + \infty)$
and $\chi = 0$ in $(0,1)$. Apart from a small tail term truncated by
the cutoff function, the first component is the initial condition
of a
free (i.e. without external potentials) NLS which on the line yields
a solitary wave running with velocity $v$; the center $x_0$ of the
initial soliton is chosen far from the vertex.
We are interested in the evolution $\Psi_t$ of this
initial condition. \par
The
dynamics can be divided into three
phases. The pre-interaction phase, where the evolved initial condition is
far from the vertex, and the undisturbed NLS evolution dominates. At
the end of this phase, the solution enters the vertex zone, and differs
(in $L^2$ norm) from the evolved solitary wave by an exponentially
small error in the velocity $v$. The second phase is the interaction
phase, in which a substantial fraction of the mass of the initial
soliton has reached the vertex, and the linear dynamics dominates due
to the shortness of the interaction time, leaving the system at the
end of this phase with three scattered waves, the amplitudes of which
are given by the action of the scattering matrix of the associated
linear graph on the incoming solitary wave. The size of the
corresponding error is (again in $L^2$ norm) a suitable negative
inverse power ($v^{{-\frac{1}{2}}\delta}$) of the velocity. In
the phase one, the main technical tool consists in the accurate
use, as fixed by \cite{[HMZ07]}, of the Strichartz's estimates to control the deviations between
the unperturbed NLS flow and the NLS flow on the graph. In the phase
two, we need to compare the nonlinear evolution with the linear flow on
the graph in the relevant time interval. Finally there is the
post-interaction phase, where the free NLS dynamics dominates again;
however, now the initial data are not exact solitary waves, but waves
with soliton-like profiles and ``wrong'' amplitudes (due to the
scattering process in the interaction phase). \par The true evolution is compared
with a reference dynamics given by the superposition of the
nonlinear evolution of the outgoing scattered profiles, and it turns
out that the error is, in $L^2$ norm, of the order of an inverse power
of velocity (depending on the size of the time interval of
approximation). For a precise formulation one has to tackle the
problem of representing the reference soliton dynamics to be
compared with the true dynamics. This problem arises because one
would like to use crucial and known properties of NLS on the line
(such as existence of an infinite number of constants of motion), and various
associated estimates, while on a star graph one has a NLS on
halflines, jointly with boundary conditions. The problem occurs, of
course, in each of the three phases in which the dynamics is
decomposed.
\par
Our choice of reference dynamics is the following. We associate to
every edge of the star graph a companion edge chosen between the other
two, in such a way to have three fictitious lines; then we glue the
soliton on every single edge with the right tail on the companion
edge, respecting the free nonlinear dynamics. One of the main technical points in the
analysis of the true dynamics is to have a control in the errors brought by this schematization.
More precisely, let us define

\begin{equation*}
\tilde\Phi^1_t(x_1 ,x_2, x_3)\equiv
\begin{pmatrix}
\tilde r e^{-i \f {v^2} 4 t}e^{i\f v 2 x_1} e^{it} \phi (x_1 + x_0 - vt)\\
\tilde r e^{-i \f {v^2} 4 t}e^{-i\f v 2 x_2} e^{it} \phi (x_2 - x_0 + vt)\\
0
\end{pmatrix}
\end{equation*}
\begin{equation} \label{fittilde}
\tilde\Phi^2_{t}(x_1 ,x_2, x_3)\equiv
\begin{pmatrix}
0\\
\tilde t e^{-i \f {v^2} 4 t}e^{i\f v 2 x_2} e^{it} \phi (x_2 + x_0 - vt)\\
\tilde t e^{-i \f {v^2} 4 t}e^{-i\f v 2 x_3} e^{it} \phi (x_3 - x_0 + vt)
\end{pmatrix}
\end{equation}
\begin{equation*}
\tilde\Phi^3_{t}(x_1 ,x_2, x_3)\equiv
\begin{pmatrix}
\tilde te^{-i \f {v^2} 4 t}e^{-i\f v 2 x_1} e^{it} \phi (x_1 - x_0 + vt)\\
0\\
\tilde te^{-i \f {v^2} 4 t}e^{i\f v 2 x_3} e^{it} \phi (x_3 + x_0 - vt)
\end{pmatrix}
\end{equation*} Each of these vectors represents a soliton on the fictitious line
given by an edge and its companion, multiplied by the scattering
coefficients of the linear dynamics considered, Kirchhoff, $\delta$ or
$\delta'$ (and here left unspecified). Up to a small error, these functions
represent outgoing waves at the end ($t=t_2$) of the interaction phase,
which is essentially a scattering process. Taking these as initial
data for the free nonlinear dynamics on the pertinent fictitious line,
we define their time evolution $\Phi_t^{j}$ as given by
\begin{equation*}
\Phi^j_{t}=e^{-i H_j (t-t_2)} \tilde\Phi^j_{t_2} + i \int_{t_2}^t ds
\, e^{-i H_j (t-s)} | \Phi^j_s |^2 \Phi^j_s\qquad j=1,2,3,\,t\geqslant t_2\,,
\end{equation*}
where
the $H_j$ are the linear Hamiltonians that decouple the $j+2$-branch from the others.
With these premises, the main result of the paper is the following.

\begin{theorem}
Let $\Psi_t$ the unique, global solution to the Cauchy problem
\eqref{diffform} with initial
data \eqref{data}.
There exist $\tau_* >0$ and $T_{\ast}>0 $ such that for $0< T <T_{\ast}$ one has
\begin{equation*}
\| \Psi_t - \Phi^1_t- \Phi^2_t- \Phi^3_t \|_{L_x^2(\GG)} \ \leqslant \ C v^{-\f{T_{\ast}-T}{\tau_*}}
\end{equation*}
for every time $t$ in the interval $t_2 < t < t_2 + T\ln v\ ,$
where $C$ is a positive constant independent of $t$ and $v$.
\end{theorem}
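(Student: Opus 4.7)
Following the template of \cite{[HMZ07]}, I would study the difference $w(t) := \Psi_t - \sum_{j=1}^3 \Phi^j_t$ on the post-interaction interval $t \in [t_2, t_2 + T\ln v]$, derive an integral equation for $w$, and close it through Strichartz estimates on the graph together with a Gronwall bootstrap. Subtracting \eqref{intform1} from the analogous Duhamel equations satisfied by the three references $\Phi^j_t$ (each on its own fictitious line) yields
\begin{equation*}
w(t) = e^{-iH(t-t_2)} w(t_2) + R(t) + i \int_{t_2}^t e^{-iH(t-s)} \Big( |\Psi_s|^2 \Psi_s - \sum_{j=1}^3 |\Phi^j_s|^2 \Phi^j_s \Big)\, ds,
\end{equation*}
where $R(t)$ accounts for the mismatch between the graph propagator $e^{-iH(t-s)}$ and the fictitious-line propagators $e^{-iH_j(t-s)}$ implicit in the definition of the $\Phi^j$. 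The initial size $\|w(t_2)\|_{L^2(\GG)}$ is controlled by the pre-interaction and interaction phase estimates proved earlier in the paper, giving $\|w(t_2)\|_{L^2(\GG)} \leqslant C v^{-\delta/2}$, consistent with the heuristics announced in the introduction.

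The decisive geometric observation is that for $s\geqslant t_2$ the soliton profiles entering each $\Phi^j_s$ are centered at distance at least $v(s-t_2) + v^{1-\delta}$ from the vertex; their traces and derivative-traces at $0$ therefore decay like $\sech(vs)$, i.e.\ exponentially in $v$. This makes $R(t)$ negligible, since the difference of propagators $(H - H_j)$ acts effectively only at the vertex through the boundary conditions, and the same mechanism kills the cross terms $|\Phi^j_s + \Phi^k_s|^2(\Phi^j_s+\Phi^k_s) - |\Phi^j_s|^2\Phi^j_s - |\Phi^k_s|^2\Phi^k_s$ for $j\neq k$, whose soliton supports on $\GG$ are essentially disjoint.

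One then invokes Strichartz estimates for $e^{-itH}$ on $L^2(\GG)$ (available for the Kirchhoff, $\delta$ and $\delta'$ boundary conditions via standard half-line dispersive bounds or a direct spectral decomposition), bounds the nonlinear difference pointwise through the elementary identity $|a|^2 a - |b|^2 b = O((|a|^2 + |b|^2)(a-b))$, and absorbs the soliton $L^\infty$-norms, which are of order $1$ uniformly in $v$ and $t$. A Gronwall argument then produces $\|w(t)\|_{L^2(\GG)} \leqslant C v^{-\delta/2}\, e^{C(t-t_2)}$; setting $\tau_* := 1/C$ and $T_* := (\delta/2)\tau_*$ and restricting to $t-t_2 \leqslant T\ln v$ with $T<T_*$ yields the announced bound $Cv^{-(T_*-T)/\tau_*}$.

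The hardest step is to keep the exponential Gronwall growth $e^{C(t-t_2)} = v^{CT}$ strictly smaller than the initial decay $v^{-\delta/2}$ over the long scale $T\ln v$; this forces a careful bookkeeping of all constants arising from the Strichartz inequalities and the pointwise bounds on the $\Phi^j_s$, so as to optimize the ratio $T_*/\tau_*$. A secondary delicate point is that each $\Phi^j_t$ lives on a \emph{fictitious} full line and must be matched with the true graph dynamics across the vertex: the legitimacy of the whole comparison scheme rests on the exponential-in-$v$ smallness of the soliton tails at the vertex, which guarantees that the reference $\sum_j \Phi^j_t$ approximately belongs to the form domain ${\mathcal D} ({\mathcal E}_{lin})$ and approximately solves the nonlinear equation on $\GG$.
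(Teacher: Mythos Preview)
Your overall architecture is right, and matches the paper: write a Duhamel equation for $\Xi_t=\Psi_t-\sum_j\Phi^j_t$, bound the initial datum $\|\Xi_{t_2}\|\leqslant Cv^{-\delta/2}$ via the phase-2 estimate, control the source terms, and run a (discrete or continuous) Gronwall-type iteration, which over a window of length $T\ln v$ converts the exponential growth $e^{C(t-t_2)}$ into a power $v^{CT}$.

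However, there is a genuine gap in your treatment of the source terms. You assert that for $s\geqslant t_2$ the ``soliton profiles entering each $\Phi^j_s$ are centered at distance at least $v(s-t_2)+v^{1-\delta}$ from the vertex'' and hence have exponentially small traces at $0$. This is not true. The $\Phi^j_t$ are \emph{not} solitons for $t>t_2$: by definition (equations \eqref{Phijt}, \eqref{phiref}, \eqref{phitr}) they are the cubic NLS evolutions of initial data $\tilde r\cdot(\text{soliton profile})$ or $\tilde t\cdot(\text{soliton profile})$, with $|\tilde r|,|\tilde t|<1$ generically. These waves are not travelling solitary waves; they spread, and there is no a priori reason their tails at the vertex should remain $O(\sech(v^{1-\delta}))$ over the long interval $[t_2,t_2+T\ln v]$. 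Consequently your estimates on both $R(t)$ (the propagator-mismatch term) and the nonlinear cross-terms $\overline{\Phi^{j_1}_s}\Phi^{j_2}_s\Phi^{j_3}_s$ with not all indices equal are unjustified as stated.

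The paper closes exactly this gap via Proposition \ref{appAHMZ}, borrowed from \cite{[HMZ07]}: using the infinite hierarchy of conserved quantities of the integrable cubic NLS on the line, one shows that $\phi^{ref}_t$ and $\phi^{tr}_t$ carry arbitrarily small mass on $\RE^-$ (i.e.\ near the vertex), with a bound of the form $c(k)(\ln v)^{\sigma(k)}v^{-k(1-\delta)}$, uniformly for $t\in[0,T\ln v]$. This polynomial-in-$v$ localization (not exponential) is what actually controls the source term $G=G_1+G_2$ in the paper (see \eqref{estG1}--\eqref{unten}). As the introduction explicitly flags, ``Strichartz estimates do not suffice'' here; the integrable structure is essential. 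Your proposal needs to invoke this localization result (or an equivalent substitute) to bound $R(t)$ and the cross-terms before the Gronwall step can be carried out.
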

To be precise, the Hamiltonians in \eqref{diffform} to which the
theorem refers have to be rescaled in order to give a nontrivial
scattering matrix in the regime of high velocity (see section 4, in
particular theorem \ref{mainth}).

To get the previous result, Strichartz estimates do not suffice, and
more direct properties coming from the integrable character of the
cubic NLS are needed. In particular, thanks to the existence of an
infinite number of integrals of motion, in \cite{[HMZ07]} a spatial
localization property of the solution of NLS with smooth data is
proven, with a polynomial (and not exponential) bound in time. This
gives a control on the tails of the difference between the solution
and the reference modified solitary dynamics. An analogous method
applies in our case. Let us note that as a consequence of the previous result, we can give the scattered amplitudes in
terms of the incoming amplitude and scattering coefficients of the
linear dynamics, in the time range of applicability of the main theorem (see remark 4.4).
\p We give a brief summary of the content of the
various sections of the papers. In section 2 we give some generalities
on linear dynamics on graphs, including Hamiltonians, their quadratic
forms, resolvents and propagators. Moreover the essential Strichartz
estimates are recalled.\p In section 3 local and global well posedness
of nonlinear Schr\"odinger equations on star graphs is proved.\p In
section 4 the main result (theorem \ref{mainth}) is introduced and
stated.\p Section 5 is devoted to the proof of the result.
In section 6 some final remarks are given and further possible
developments are discussed.

\subsection{Setting and notations}
We consider a graph $\GG$ given by three infinite half lines attached
to a common vertex.
In order to study a quantum mechanical problem on $\GG$,
the natural Hilbert space is then
$L^2(\GG)\equiv L^2(\RE^+)\oplus L^2(\RE^+)\oplus L^2(\RE^+)$.

\n
We denote the elements of $L^2(\GG)$ by capital greek letters,
while functions in $L^2(\RE^+)$ are
 denoted by lowercase greek letters.
It is convenient to represent functions in $L^2(\GG)$ as column vectors
of functions in $L^2(\RE^+)$, namely
\begin{equation*}
\Psi =
\lf(
\begin{array}{c}
\psi_1 \\ \psi_2 \\ \psi_3
\end{array}
\ri).
\end{equation*}
The norm of $L^2$-functions on $\GG$ is naturally defined by
$$
\| \Psi \|_{L^2 (\GG)} : = \left( \sum_{j=1}^3 \| \psi_j \|^2_{L^2
  (\erre^+)} \right)^{\f 1 2}.
$$
Analogously, given $1 \leqslant r \leqslant \infty$,
we define the space $L^r (\GG)$ as the set of functions
on the graph whose components are elements of the space $L^r (\erre^+)$,
and the norm is correspondingly defined by
\begin{equation*}
\big\|\Psi\big\|_{\LG{r}
  (\GG)}=\bigg(\sum_{j=1}^3\|\psi_j\|_{L^r(\RE^+)}^{r}\bigg)^{\f 1 r},
\ 1 \leqslant r < \infty, \qquad \big\|\Psi\big\|_{\LG{\infty}
  (\GG)}= \sup_{1 \leqslant j \leqslant 3}\|\psi_j\|_{L^\infty(\RE^+)} .
\end{equation*}
When a functional norm refers to a function defined on the graph,
we omit the symbol $\GG$. Furthermore, from now on, when such a norm is $L^2$, we
drop the subscript, and simply write $\| \cdot \|$. Accordingly, we denote by $(\cdot,\cdot)$ the scalar product in $L^2$.

As it is standard when dealing with Strichartz's estimates, we make
use of spaces of functions that are measurable as functions of both
time (on the interval $[T_1,T_2]$) and space (on the graph). We denote such spaces by $L^p_{[T_1,T_2]}{\LG{r} (\GG)}$,
with indices
$1\leqslant r\leqslant \infty$,
$1\leqslant p < \infty$; we
endow them with the norm
\begin{equation*}
\big\|\Psi\big\|_{\LT{p}{T_1,T_2}\LG{r}{(\GG)}}=
\bigg(\int_{T_1}^{T_2}\big\|\Psi_s\big\|_{\LG{r}}^p
ds\bigg)^{1/p}, \ 1 \leqslant p < \infty, \qquad
\big\|\Psi\big\|_{\LT{\infty}{T_1,T_2}\LG{r}{(\GG)}}=\sup_{t \in
  [T_1,T_2]} \|\Psi_s\big\|_{\LG{r}}.
\end{equation*}
\n
The extension of the definitions given above to the case $p=\infty$ 
$r=\infty$
is
straightforward.

\n
Besides, we need to introduce the spaces
$$H^1(\GG) \equiv H^1(\erre^+) \oplus H^1(\erre^+) \oplus
H^1(\erre^+), \qquad
H^2(\GG) \equiv H^2(\erre^+) \oplus H^2(\erre^+) \oplus
H^2(\erre^+), $$
equipped with the norms
\be \label{sobbo}
\| \Psi \|_{H^1(\GG)}^2 \ = \ \sum_{i=1}^3 \| \psi_i \|_{H^1(\erre^+)}^2,
\qquad
\| \Psi \|_{H^2(\GG)}^2 \ = \ \sum_{i=1}^3 \| \psi_i \|_{H^2(\erre^+)}^2.
\ee

\n
The product of functions is defined componentwise,
\begin{equation*}
\Psi\Phi\equiv
\lf(
\begin{array}{c}
\psi_1\phi_1 \\ \psi_2\phi_2 \\ \psi_3\phi_3
\end{array}
\ri),
\qquad \text{so that} \qquad
|\Psi|^2\Psi\equiv
\lf(
\begin{array}{c}
|\psi_1|^2\psi_1 \\ |\psi_2|^2\psi_2 \\ |\psi_3|^2\psi_3
\end{array}
\ri).
\end{equation*}

\n
We denote by $\mathbb I$ the $3\times 3$ identity matrix, while $\mathbb J$
is the $3\times 3$ matrix whose elements are all equal to one.

\n
When an element of $L^2 (\GG)$ evolves in time, we use in notation the
subscript $t$: for instance, $\Psi_t$. Sometimes we shall write $\Psi(t)$ in order
to highlight the dependence on time, or whenever such a notation is more
understandable.

\section{\label{sec:summary}Summary on linear dynamics on graphs}

\subsection{\label{subsec2.1}Hamiltonians and quadratic forms}
Standard references about the linear Schr\"odinger equation on graphs are \cite{[BCFK06],[BEH],[Kuc04],[Kuc05],[KS99]}, to which we refer for more extensive treatments.
Here we only give the definitions needed to have a self-contained exposition.
\par
We consider three Hamiltonian operators, denoted by $H_F$,
$H_\delta^\alpha$, $H_{\delta^\prime}^\beta$ (with $\alpha,\beta\in\RE$), and called, respectively, the
Kirchhoff, the Dirac's delta, and the delta-prime Hamiltonian.
These operators act as
\begin{equation}
\Psi \ \longmapsto \
\lf(
\begin{array}{c}
-\psi_1'' \\ -\psi_2'' \\ -\psi_3''
\end{array}
\ri)
\label{san}
\end{equation}
on some subspace of $H^2 (\mathcal G)$, to be defined by suitable
boundary conditions at the vertex.

Here and in the following subsection
 we collect some basic facts (see
\cite{[KS99]},
\cite{[Kuc04]},\cite{[Kuc05]} \cite{[BCFK06]})
on $H_F$, $H_{\delta}^\alpha$, and $H_{\delta^\prime}^\beta$.

The Kirchhoff Hamiltonian $H_F$ acts on the domain
\begin{equation}
{\mathcal D} (H_F): = \{ \Psi \in H^2(\GG) \text{ s.t. } \,
\psi_1 (0) = \psi_2 (0) = \psi_3 (0), \, \psi_1' (0) + \psi_2' (0)
+ \psi_3'(0)=0 \}.
\label{fuji}
\end{equation}
It is well-known, see \cite{[KS99]}, that \eqref{fuji}
and \eqref{san} define a self-adjoint Hamiltonian on $L^2(\GG)$.
Boundary conditions in \eqref{fuji} are usually called Kirchhoff
boundary conditions. We use the index $F$
to remind that $H_F$ reduces to the free Hamiltonian on the line for
a degenerate graph composed of two half lines.

\n
The quadratic form ${\mathcal E}_F$ associated to $H_F$ is defined on
the subspace
\begin{equation*}
{\mathcal D} ({\mathcal E}_F) = \{ \Psi \in H^1(\GG) \text{ s.t. } \,
\psi_1 (0) = \psi_2 (0) = \psi_3 (0) \}
\end{equation*}
and reads
\begin{equation*}
{\mathcal E}_F [\Psi ] = \sum_{i=1}^3 \int_0^{+\infty} |\psi_i ' (x) |^2 \,dx\,.
\end{equation*}

The Dirac's delta Hamiltonian is defined on
the domain
\begin{equation} \label{domdelta}
{\mathcal D} (H_\delta^\alpha): =
\{ \Psi \in H^2(\GG) \text{ s.t. } \,
\psi_1 (0) = \psi_2 (0) = \psi_3 (0), \, \psi_1' (0) + \psi_2' (0) +
\psi_3'(0)= \alpha \psi_1 (0) \}
\end{equation}
Again, $H_\delta^\alpha$ is a self-adjoint opeator on $L^2(\GG)$
(\cite{[KS99]}).
It appears that $H_\delta^\alpha$ generalizes the ordinary Dirac's delta interaction with strength parameter $\alpha$ on the line, see, e.g. \cite{aghh:05}.

\n
The quadratic form ${\mathcal E}_\delta^\alpha$
associated to $H_\delta^\alpha$ is defined on
\begin{equation*}
{\mathcal D} ({\mathcal E}_\delta^\alpha) = \{ \Psi \in H^1 (\GG)
\text{ s.t. } \,
\psi_1 (0) = \psi_2 (0) = \psi_3 (0) \}
\end{equation*}
and is given by
\begin{equation*}
{\mathcal E}_\delta^\alpha [\Psi ] = \sum_{i=1}^3 \int_0^{+\infty}
|\psi_i ' (x) |^2 \,dx\, + \alpha | \psi_1 (0) |^2.
\end{equation*}

The delta-prime Hamiltonian is defined on the domain
\begin{equation}
{\mathcal D} (H_{\delta^\prime}^\beta): =
\{ \Psi \in H^2(\GG) \text{ s.t. } \,
\psi_1^\prime (0) = \psi_2^\prime (0) = \psi_3^\prime (0),
\, \psi_1 (0) + \psi_2 (0) +
\psi_3(0)= \beta \psi_1^\prime (0) \}
\label{domprime}
\end{equation}
Again, $H_{\delta^\prime}^\beta$ is a self-adjoint opeator on $L^2(\GG)$
(\cite{[KS99]}).

\n
The quadratic form ${\mathcal E}_{\delta^\prime}^\beta$
associated to $H_{\delta^\prime}^\beta$ is defined on
$
{\mathcal D} ({\mathcal E}_{\delta^\prime}^\beta) \ = \ H^1 (\GG)
$
and is given by
\begin{equation*}
{\mathcal E}_{\delta^\prime}^\beta [\Psi ] = \sum_{i=1}^3 \int_0^{+\infty}
|\psi_i ' (x) |^2 \,dx\, + \frac{1}{\beta} \left| \sum_{i=1}^3 \psi_i (0) \right|^2.
\end{equation*}
\p
Notice that $H_{\delta^\prime}^\beta$ does not reduce to the standard
$\delta'$ interaction on the line, see, e.g., \cite{aghh:05}, when it
is restricted to a two-edge graph. Here we are following the notation
in \cite{[Kuc04]}, \cite{[Kuc05]}. The present $\delta'$ vertex is
called sometimes $\delta'_s$ graph, where $s$ is for symmetric. A
discussion of the correct extension of the usual $\delta'$
interaction is
given in \cite{[Ex]} and \cite{[BEH]}. For completeness, we give
the operator domain (the action is the same as in the other cases).
We use the denomination $\tilde\delta'$ to avoid confusion with the previously defined interaction.
\begin{equation*}
{\mathcal D} (H_{\tilde\delta^\prime}^{\beta}): =
\bigg\{ \Psi \in H^2(\GG) \text{ s.t. } \, \sum_{j=1}^n \psi_j^\prime(0)=0 \
,\quad
\psi_j(0) -\psi_k (0) =\frac{\beta}{n}(\psi_j^\prime(0) -\psi_k^\prime(0))\
,\quad j,k= 1,2,...,n\bigg\} \ .
\end{equation*}

Throughout the paper we restrict to the case of repulsive
delta and delta-prime ($\delta'$) interaction, i.e., $\alpha, \beta > 0$. It
is easily proved, for example by inspection of the operator resolvents in the following subsection, that such a condition prevents the corresponding
Hamiltonian operator from possessing bound states.

We also point out that the Hamiltonian $H_\delta^\alpha$ is well
defined for
$\alpha=0$ and that indeed $ H_\delta^\alpha\big|_{\alpha=0}\equiv H_F$.
 We finally point out that, fixed $\tilde \alpha, \tilde \beta > 0$,
  we shall consider the Hamiltonian operators $H_\delta^{\tilde \alpha
    v}$ and $H_{\delta^\prime}^{\tilde \beta /
    v}$, namely, in the following we rescale $\alpha = \tilde \alpha
v$ and $\beta = \tilde \beta / v$.

\subsection{Resolvents, propagators and scattering}
For any complex number $k$ with $\Im k>0$ we denote
by
$R_F (k), R_\delta^\alpha (k), R_{\delta^\prime}^\beta$ the resolvents
$(H_F - k^2)^{-1}, (H_\delta^\alpha - k^2)^{-1},
(H_{\delta^\prime}^\beta- k^2)^{-1}$, respectively.

\n
We define the function $U_t$ by
\begin{equation*}
U_t (x) \ : = \ \f {e^{i \f {x^2}{4t}}} {\sqrt{4 \pi i t}}\ , \qquad t\neq 0.
\end{equation*}
In the following we shall use the same symbol $U_t$
to denote the operator in $L^2(\RE)$ defined by
\begin{equation*}
\big[U_t\psi\big](x):=\int_{-\infty}^\infty U_t(x-y)\psi(y)dy\ ,\quad t\neq 0\,.
\end{equation*}
Moreover, we define two integral operators $U_t^\pm$ acting on $L^2 (\erre^+)$
\begin{equation*}
U_t^\pm \ : \ L^2 (\erre^+) \rightarrow L^2 (\erre^+), \quad
\big[U_t^\pm \psi\big](x) = \int_0^{+ \infty} U_t ( x \pm y ) \psi (y) \, dy\ , \qquad t\neq 0\ .
\end{equation*}
We stress that, according to our definitions, the operators
 $U^-_t$ and $U_t$ have the same integral kernel,
 but act on different
Hilbert spaces.

For the cases we consider, resolvents and propagators can be easily
computed (see, e.g. \cite[pp. 201--226]{[BCFK06]} for resolvent formulas
with generic boundary conditions in the vertex). The results are summarized in the following theorem.

\begin{theorem} \label{dignita}
For any complex number $k$ with $\Im k>0$, the integral kernel of the resolvent operators
$R_F (k), R_\delta^\alpha (k), R_{\delta^\prime}^\beta(k)$ are given by
\begin{eqnarray}
R_F (k; x,y) & = & \f{i}{2k } e^{i k |x-y| } \mathbb{I} \label{resofree}
 +\f{i}{2k } e^{i k (x+y)} \f13
\lf(
\begin{array}{ccc}
-1 & 2 & 2 \\
2 & -1 & 2 \\
2 & 2 & -1
\end{array} \ri)\, , \\
R_{\delta}^\alpha (k; x, y) & = & \label{resodelta}
\f{i}{2k } e^{i k |x-y| } {\mathbb I}
 - \f{i}{2k } \f {e^{i k(x+y) }} {\al-3ik}
\lf(
\begin{array}{ccc}
{\al-ik} & {2ik} & {2ik} \\
{2ik} & {\al-ik} & {2ik} \\
{2ik} & {2ik} & {\al-ik}
\end{array} \ri)\,, \\
R_{\de '}^\beta (k ; x, y) & = & \label{resodeltaprima}
\f{i}{2k } e^{i k |x-y| } {\mathbb I}
 - \f{i}{2k } \f{e^{i k(x+y) }} {3-i\beta k}
\lf(
\begin{array}{ccc}
-1+i\beta k & {2} & {2} \\
 {2} & -1+i\beta k & {2} \\
 {2} & {2} & -1+i\beta k
\end{array}
\ri).
\end{eqnarray}
Furthermore, the unitary group of the related time evolution
operators reads
\begin{eqnarray}
\label{freepro}
e^{- i H_F t} & = & U_t^- \mathbb{I} + U_t^+ \f 1 3
\left( \begin{array}{ccc} -1 & 2 & 2 \\ 2 & -1 & 2 \\ 2 & 2 & -1
\end{array} \right)\, = \,
( U_t^- - U_t^+ ) {\mathbb I} + \f 2 3 U_t^+ {\mathbb J} \\
 \label{propdelta}
U_{\delta,t}^\alpha (x, y) & = & [ U_t (x-y) - U_t (x + y)] {\mathbb
  I} + \f 2 3 \left[ U_t (x + y) - \f \alpha 3 \int_0^{+ \infty} du \,
  e^{- \f \alpha 3 u} U_t (x + y + u)
\right] {\mathbb J}, \\
\label{propdeltaprime}
U_{\delta^\prime,t}^\beta (x, y) & = & [ U_t (x-y) + U_t (x + y)] {\mathbb
  I} - \f 2 \beta \int_0^{+ \infty} du \,
  e^{- \f 3 \beta u} U_t (x + y + u)
 {\mathbb J},
\end{eqnarray}
where
$x,y\in\RE^+$, $\alpha, \beta >0$.
\end{theorem}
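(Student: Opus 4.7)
The plan is to prove the resolvent and propagator formulas directly from their defining properties, using a boundary-value Ansatz for the resolvent and Stone's formula to descend to the unitary group.

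For the resolvents, I would fix $\Psi=(\psi_1,\psi_2,\psi_3)\in L^2(\GG)$ and $k$ with $\Im k>0$, and seek $\Phi=R(k)\Psi$ componentwise in the form
\[
\phi_i(x)\;=\;\int_0^{+\infty}\!\frac{i}{2k}\,e^{ik|x-y|}\,\psi_i(y)\,dy\,+\,c_i\,e^{ikx},\qquad i=1,2,3.
\]
The first summand is just the free half-line retarded Green's function applied on each edge, and, being of the form of the free Green's function of $\RE$, it automatically solves $(-\partial_x^2-k^2)\phi_i=\psi_i$ on each $\RE^+$ and lies in $L^2$ since $\Im k>0$. The homogeneous term $c_i e^{ikx}$ (also $L^2$) is there to adjust the boundary data at the vertex. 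Introducing the abbreviation $A_i:=\frac{i}{2k}\int_0^{+\infty}e^{iky}\psi_i(y)\,dy$, one checks that $\phi_i(0)=A_i+c_i$ and $\phi_i'(0)=-ik A_i+ik c_i$; the key observation is that, thanks to $\partial_x e^{ik|x-y|}\big|_{x=0}=-ik\,e^{iky}$, the boundary derivative is expressible in terms of the same trace $A_i$.

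Plugging into the three sets of vertex conditions (\ref{fuji}), (\ref{domdelta}), (\ref{domprime}) then yields a $3\times 3$ linear system for $(c_1,c_2,c_3)$ as a function of $(A_1,A_2,A_3)$. In each case the system has the permutation-symmetric form $c_i=\sum_j M_{ij}A_j$ with $M$ of type $a\mathbb{I}+b\mathbb{J}$; solving (using that $\mathbb{J}^2=3\mathbb{J}$, so $(a\mathbb{I}+b\mathbb{J})^{-1}=\frac{1}{a}\mathbb{I}-\frac{b}{a(a+3b)}\mathbb{J}$) gives exactly the matrices appearing in (\ref{resofree})--(\ref{resodeltaprima}). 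Since the three operators are known to be self-adjoint with $\sigma\subset[0,\infty)$ in the repulsive regime, $H-k^2$ is invertible for $\Im k>0$ and the constructed $\Phi$ must coincide with $R(k)\Psi$, proving the resolvent formulas.

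For the propagators I would then apply Stone's formula,
\[
e^{-iHt}\;=\;\frac{1}{2\pi i}\int_0^{+\infty}\!\bigl[R(E+i0)-R(E-i0)\bigr]e^{-iEt}\,dE,
\]
and change variable to $k=\sqrt{E}>0$, using that $R(-k+i0)$ is obtained from $R(k+i0)$ by $k\mapsto -k$. The diagonal contribution $\frac{i}{2k}e^{ik|x-y|}\mathbb{I}$ produces, by the classical 1D Gaussian integral, the kernel $U_t(x-y)\mathbb{I}$; the pure off-diagonal piece $\frac{i}{2k}e^{ik(x+y)}$ gives $U_t(x+y)$, i.e.\ the operators $U_t^-$ and $U_t^+$ of the statement. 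For the Kirchhoff resolvent I would first rewrite the matrix as $\frac{1}{3}(-\mathbb{I}+2(\mathbb{J}-\mathbb{I}))=-\mathbb{I}+\frac{2}{3}\mathbb{J}$ to match the two equivalent forms of (\ref{freepro}). For the $\delta$ and $\delta'$ resolvents the only extra input needed is the elementary identity
\[
\frac{1}{\alpha-3ik}\;=\;\frac{1}{3}\int_0^{+\infty}e^{-\alpha u/3}\,e^{iku}\,du,\qquad \frac{1}{3-i\beta k}\;=\;\frac{1}{\beta}\int_0^{+\infty}e^{-3u/\beta}\,e^{iku}\,du,
\]
valid for $\Im k>0$ and positive $\alpha,\beta$. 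Inserting these Laplace representations turns $\frac{i}{2k}\cdot\frac{e^{ik(x+y)}}{\alpha-3ik}$ into $\frac{1}{3}\int_0^\infty e^{-\alpha u/3}\,\frac{i}{2k}e^{ik(x+y+u)}\,du$, whose spectral transform is $\frac{1}{3}\int_0^\infty e^{-\alpha u/3}U_t(x+y+u)\,du$; collecting the $\mathbb{I}$ and $\mathbb{J}$ components exactly reproduces (\ref{propdelta}), and the analogous computation with $\beta$ gives (\ref{propdeltaprime}).

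I expect the only real obstacle to be bookkeeping: the three boundary-condition calculations and their associated matrix decompositions $M=a\mathbb{I}+b\mathbb{J}$ are straightforward but must be carried out carefully so that the coefficients in (\ref{resodelta})--(\ref{resodeltaprima}) come out with the right signs and normalizations. A technical point worth handling explicitly is the passage $\Im k\downarrow 0$ in Stone's formula: this is justified because the extra terms $\frac{i}{2k}\,M(k)\,e^{ik(x+y)}$, with $M(k)$ rational and without poles in $\{\Im k\ge 0\}$ for repulsive $\alpha,\beta$, are controlled by the Laplace representations above and yield convergent Gaussian integrals, so no spurious boundary contributions appear.
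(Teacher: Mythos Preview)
Your proposal is correct and follows essentially the same route as the paper: the resolvent is obtained via the Ansatz $\phi_i(x)=\frac{i}{2k}\int_0^\infty e^{ik|x-y|}\psi_i(y)\,dy + c_i e^{ikx}$ with the constants $c_i$ fixed by the vertex conditions, and the propagator is then read off from a spectral representation of $e^{-iHt}$ in terms of $R(k)$ together with the Laplace identity for $1/(\alpha-3ik)$ (which the paper packages as the formula $\frac{1}{2\pi}\int_{-\infty}^{+\infty}\frac{e^{ik(x+y)}}{a-ik}e^{-ik^2 t}\,dk=\int_0^\infty e^{-au}U_t(x+y+u)\,du$). The only cosmetic differences are that the paper solves the $3\times3$ system for $c_i$ by brute force rather than via your $(a\mathbb{I}+b\mathbb{J})^{-1}$ trick, and writes the spectral integral directly in the $k$-variable as $\frac{1}{\pi i}\int_{-\infty}^{+\infty}R(k)\,k\,dk$ rather than starting from Stone's formula in $E$.
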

\begin{proof}
We start from the proof of
\eqref{resodelta}. Let $(\Psi)_i = \psi_i$ for $i=1,2,3$ and define $R\Psi$ by
\begin{equation}
\lf( R\Psi \ri)_i (x) = \f{i}{2k} \int_0^{+\infty} e^{ik|x-y| } \psi_i(y)\, dy + \f{i}{2k}e^{ikx} c_i
\qquad i=1,2,3
\end{equation}
where $c_i=c_i(\Psi)$ are constant to be specified. It is obvious that $\lf(
R\Psi \ri)_i$ satisfies
\begin{equation}
\lf( -\f{d^2}{dx^2} -k^2 \ri) \lf( R\Psi \ri)_i=0 \qquad i=1,2,3
\end{equation}
then it is sufficient to fix $c_i$ such that $R\Psi$ belongs to ${\mathcal D}(H_{\al}^{\de}) $ in order
to compute $R_{\al}^{\de} $. The boundary conditions in \eqref{domdelta}
translate to the following linear system
for $c_i$.
\begin{equation}
\lf\{
\begin{aligned}
&\int_0^{+\infty} e^{iky } \psi_1(y)\, dy + c_1 = \int_0^{+\infty} e^{iky } \psi_2(y)\, dy + c_2 \\
&\int_0^{+\infty} e^{iky } \psi_2(y)\, dy + c_2 = \int_0^{+\infty} e^{iky } \psi_3(y)\, dy + c_3 \\
&\int_0^{+\infty} e^{iky } (\psi_1(y)+\psi_2(y)+\psi_3(y)) \, dy -( c_1 +c_2 + c_3) =
\f{i\al}{k} \lf(\int_0^{+\infty} e^{iky } \psi_2(y)\, dy + c_2 \ri)
\end{aligned}
\ri.
\end{equation}
the solution is easily computed and it is given by
\begin{equation}
\lf\{
\begin{aligned}
&c_1 =\int_0^{+\infty} e^{iky } \lf( -\f{k+i\al}{3k+i\al}\psi_1(y)+\f{2k}{3k+i\al}\psi_2(y)
+\f{2k}{3k+i\al}\psi_3(y) \ri)\, dy \\
&c_2 =\int_0^{+\infty} e^{iky } \lf( \f{2k}{3k+i\al}\psi_1(y)-\f{k+i\al}{3k+i\al}\psi_2(y)
+\f{2k}{3k+i\al}\psi_3(y) \ri)\, dy\\
&c_3 =\int_0^{+\infty} e^{iky } \lf( \f{2k}{3k+i\al}\psi_1(y)+\f{2k}{3k+i\al}\psi_2(y)
-\f{k+i\al}{3k+i\al}\psi_3(y) \ri)\, dy
\end{aligned}
\ri.
\end{equation}
which gives \eqref{resodelta}. In order to obtain \eqref{resofree} it is sufficient to put $\al=0$ in \eqref{resodelta}.
Formula \eqref{resodeltaprima} can be proved by the same method.

Now we prove \eqref{propdelta}.
We start from the standard formula 
$$
U_{\delta,t}^\alpha (x, y) \ = \ \f 1 {\pi i} \int_{- \infty}^{+ \infty}
R_{\delta}^\alpha (k; x, y) \, k d k,
$$
use \eqref{resodelta} and recall the following identity
\be \label{formul}
\f 1 {2 \pi} \int_{- \infty}^{+ \infty} \f {e^{i k (x + y)}} {a
  - i k} \, e^{- i k^2 t} \, dk \ =
\ \int_0^{+ \infty} e^{- a
  u} \, \f{e^{i \f{(x+y+u)^2}{4t}}}{\sqrt{4 \pi i t}} \, du,
\ee
then we immediately arrive at \eqref{propdelta}. Formula \eqref{freepro} can be
obtained
by putting $\al=0$ into \eqref{propdelta}. Formula \eqref{propdeltaprime} can be proved in the same
way.
\end{proof}
\begin{cor}
From the expression of the resolvent one immediately has the
 reflection and transmission coefficients:
\begin{eqnarray}
r (k) & = & - \f 1 3, \qquad t (k) \ = \ \f 2 3, \\
\label{scoeffd}
r_{H_\delta^\alpha} (k) & = & - \f {k + i \alpha} {3k + i \alpha}, \qquad
t_{H_\delta^\alpha} (k) \ = \ \f {2k} {3k + i \alpha}, \\
\label{scoeffdp}
r_{H_{\delta^\prime}^\beta} (k) & = & \f {\beta k + i} {\beta k + 3 i }, \qquad
t_{H_{\delta^\prime}^\beta} (k) \ = \ - \f {2i} {\beta k + 3 i }.
\end{eqnarray}
\end{cor}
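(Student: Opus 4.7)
The plan is to exploit the structure of the resolvent kernels just derived in Theorem \ref{dignita}, each of which splits as
\[
R(k;x,y)\;=\;\frac{i}{2k}\,e^{ik|x-y|}\,\mathbb{I}\;+\;\frac{i}{2k}\,e^{ik(x+y)}\,S(k),
\]
for some $3\times 3$ matrix $S(k)$. For real $k>0$ the entries of $S(k)$ are exactly the on-shell scattering amplitudes: writing a generalized eigenfunction of $H$ at energy $k^2$ as an incoming plane wave $e^{-ikx}$ on one edge plus outgoing radiation $e^{ikx}$ on every edge, the coefficient in front of $e^{ik(x+y)}$ produced by the resolvent at spatial infinity coincides with that appearing in the scattering ansatz. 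Because each of $H_F$, $H_\delta^\alpha$, $H_{\delta^\prime}^\beta$ is fully symmetric under permutations of the three edges, $S(k)$ has constant diagonal entries $r(k)$ and constant off-diagonal entries $t(k)$, which are the desired reflection and transmission coefficients.

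The proof then reduces to reading off $r$ and $t$ from \eqref{resofree}--\eqref{resodeltaprima} and rewriting them in the form stated. For $H_\delta^\alpha$ the diagonal entry is $-(\alpha-ik)/(\alpha-3ik)$; multiplying numerator and denominator by $i$ yields $-(k+i\alpha)/(3k+i\alpha)$, which is \eqref{scoeffd}. The off-diagonal entry $-2ik/(\alpha-3ik)$ becomes $2k/(3k+i\alpha)$ under the same manipulation. Setting $\alpha=0$ recovers the Kirchhoff values $r=-\f13$, $t=\f23$. For $H_{\delta^\prime}^\beta$, an analogous multiplication by $-i$ turns $(1-i\beta k)/(3-i\beta k)$ into $(\beta k+i)/(\beta k+3i)$ and $-2/(3-i\beta k)$ into $-2i/(\beta k+3i)$, giving \eqref{scoeffdp}.

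As a sanity check, the same formulas arise from the elementary scattering ansatz $\psi_1(x)=e^{-ikx}+r\,e^{ikx}$, $\psi_2(x)=\psi_3(x)=t\,e^{ikx}$, which trivially solves $-\psi_j''=k^2\psi_j$ on each half-line. Imposing the vertex conditions of Section \ref{subsec2.1} produces $2\times 2$ linear systems: for $H_F$, $1+r=t$ together with $-1+r+2t=0$; for $H_\delta^\alpha$, $1+r=t$ together with $ik(-1+r+2t)=\alpha(1+r)$; for $H_{\delta^\prime}^\beta$, $-1+r=t$ together with $(1+r)+2t=ik\beta(r-1)$. Each system solves in two lines and reproduces the corollary. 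The computation is purely algebraic and offers no serious obstacle; the only point requiring care is the sign convention distinguishing incoming from outgoing waves in the ansatz.
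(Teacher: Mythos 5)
Your proposal is correct and follows essentially the same route as the paper, which likewise derives the coefficients by reading them off the resolvent kernels of Theorem \ref{dignita} and notes (citing Kostrykin--Schrader) that they can equivalently be obtained directly from the vertex boundary conditions, exactly as in your sanity check. All of your algebraic conversions (e.g. $-(\alpha-ik)/(\alpha-3ik)=-(k+i\alpha)/(3k+i\alpha)$) and the $2\times 2$ linear systems check out.
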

We refer to \cite{[KS99]} for a comprehensive  analysis of the scattering
 on star graphs. Indeed by using
the results in \cite{[KS99]} the reflection and transmission coefficients can be
obtained directly by the boundary conditions in the vertex.

Throughout the paper we shall need some auxiliary dynamics to be compared
with the dynamics described by \eqref{diffform}, so,
for later convenience, we introduce the two-edge Hamiltonians $H_j $
and the corresponding two-edge propagators $e^{- i H_j t} $, $j=1,2,3$.

\n
Let $H_j $ be defined by:
\begin{equation*}
\begin{aligned}
{\mathcal D} (H_j) := \{ \Psi \in H^2(\GG) \text{ s.t. } \,
&
\psi_j(0)=\psi_{j+1}(0)\,,\,\psi_j'(0)+\psi_{j+1}'(0)=0\, ,
\psi_k(0)=0\, ,\,k\neq j,j+1 \}
\end{aligned}
\label{honshu}
\end{equation*}
\begin{equation*}
H_j \Psi :=
\lf(
\begin{array}{c}
-\psi_1'' \\ -\psi_2'' \\ -\psi_3''
\end{array}
\ri)\,,
\end{equation*}
where in equation \eqref{honshu} it is understood that $j=\{1,2,3\}$ modulo 3.

\n
The Hamiltonian $H_j$ couples the edges $j$ and $j+1$ with a Kirchhoff boundary condition and sets a Dirichlet
boundary condition for the remaining edge, so that there is free propagation between the edges $j$ and $j+1$
and no propagation between them and the edge $j+2$.
\begin{figure}[h!]
\begin{center}
\includegraphics[width=0.9\textwidth]{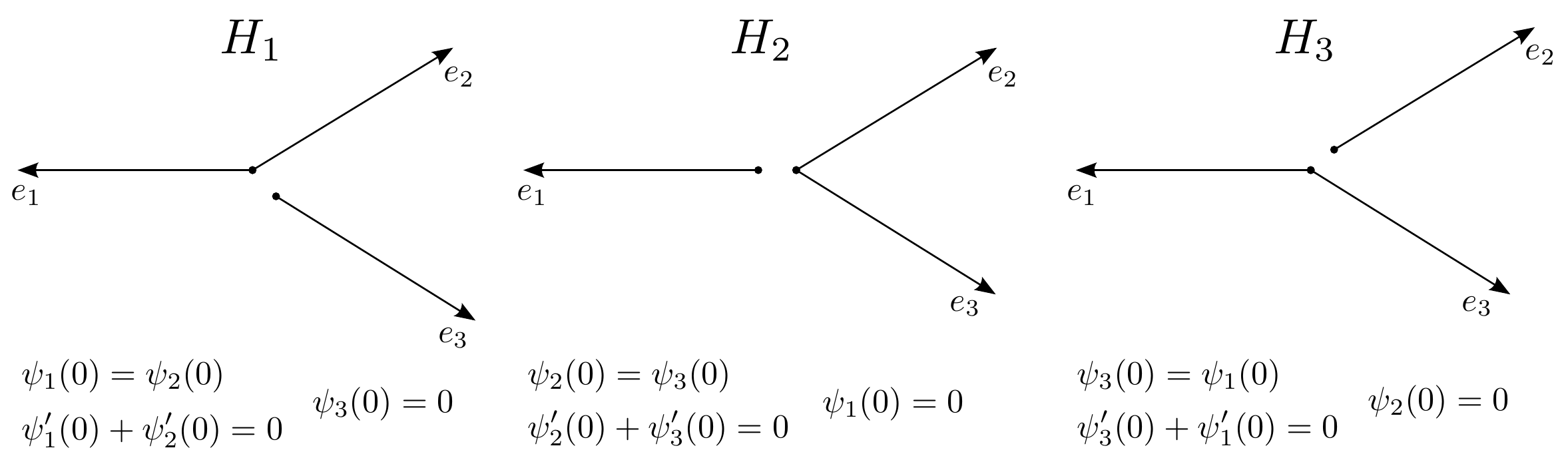}
\caption{\label{fig1}The open graphs depict the Hamiltonians $H_1$, $H_2$ and
$H_3$. Under each graph we report the boundary conditions on vectors in the
domain of the corresponding Hamiltonian.}
\end{center}
\end{figure}

\n
With a straightforward computation we have
\begin{equation}
\label{twoedgepro-j}
e^{- i H_j t} \ = \ U_t^- \mathbb{I} + U_t^+ {\mathbb T}_j\,,\qquad t\in\RE\,,
\end{equation}
where ${\mathbb T}_j$ are the matrices
\begin{equation*}
{\mathbb T}_1=\begin{pmatrix}
0&1&0\\
1&0&0\\
0&0&-1
\end{pmatrix}\;;\quad
{\mathbb T}_2=\begin{pmatrix}
-1&0&0\\
0&0&1\\
0&1&0
\end{pmatrix}\;;\quad
{\mathbb T}_3=\begin{pmatrix}
0&0&1\\
0&-1&0\\
1&0&0
\end{pmatrix}\,.
\end{equation*}

\subsection{Strichartz's estimates}
A key tool in our method is the extension of the
standard Strichartz's estimates (see e.g. \cite{[Caz]}) to
the dynamics on $\GG$ described by the propagators $e^{-i H_F t}$,
$e^{-i H_\delta^\alpha t}$, and $e^{-i H_{\delta^\prime}^\beta t}$.

\n
In this subsection we use the symbol $H$ to denote any of the three
Hamiltonians of interest described in section \ref{subsec2.1}.

As a preliminary step, we remark that from equations
\eqref{freepro}, \eqref{propdelta}, \eqref{propdeltaprime},
the standard
dispersive estimate immediately follows:
\begin{equation}
\big\|e^{-iH t}\Psi\big\|_{\LG{\infty}}\leqslant
\frac{c}{t^{1/2}}\big\|\Psi\big\|_{\LG{1}},\quad t\neq0\,.
\label{osaka}
\end{equation}

\begin{prop}[Strichartz Estimates for $e^{-iH t}$] \mbox{} \\
\label{prop:stric}
\n
Let $\Psi_0 \in L^2 (\GG)$, $\Gamma \in L^q_\erre \LG{k}$, with
$1\leqslant q,k \leqslant 2,
\frac{2}{q}+\frac{1}{k}=\frac{5}{2}$,
 and define
\begin{equation*}
\Psi(t) = e^{-i H t} \Psi_0 \qquad \Phi(t) = \int_0^t ds\, e^{-i H (t-s)} \Gamma(s)\,.
\end{equation*}
The following estimates hold true:
\begin{equation}
\label{stric1}
\lf\| \Psi \ri\|_{L^{p}_\erre \LG{r} } \leqslant c \| \Psi_0 \|
\end{equation}
\begin{equation}
\label{stric2}
\lf\| \Phi \ri\|_{L^p_\erre \LG{r} } \leqslant c \lf\| \Gamma
\ri\|_{L^q_\erre \LG{k} }
\end{equation}

\n
for any pair of indices $(r,p)$ satisfying
\begin{equation}
\label{admissible}
2\leqslant p,r\leqslant \infty \;,\; \frac{2}{p}+\frac{1}{r}=\frac{1}{2}.
\end{equation}
The constants
$c$ in \eqref{stric1} and \eqref{stric2} are independent of $T$.
\end{prop}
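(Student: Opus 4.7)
The plan is to derive both estimates by the standard $TT^*$ argument of Ginibre--Velo (cf.\ \cite{[Caz]}), which reduces everything to the dispersive bound \eqref{osaka} and to the $L^2(\GG)$-unitarity of $e^{-iHt}$. The graph structure plays essentially no role beyond the componentwise definition of the norms: the propagators \eqref{freepro}--\eqref{propdeltaprime} are finite linear combinations of the scalar operators $U_t^\pm$ acting on each edge, composed with bounded $3\times 3$ matrices, so every scalar-line estimate transfers to $\GG$ up to a matrix-norm constant.

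First, I would interpolate \eqref{osaka} with the trivial $L^2\to L^2$ unitary bound by Riesz--Thorin to obtain
\begin{equation*}
\| e^{-iHt} \Psi \|_{\LG{r}}\;\leqslant\; c\, |t|^{-(\frac{1}{2}-\frac{1}{r})}\, \| \Psi \|_{\LG{r'}}, \qquad 2\leqslant r\leqslant \infty.
\end{equation*}
Then, for \eqref{stric1}, I would apply $TT^*$ and duality: the estimate is equivalent to the boundedness on $L^{p'}_\erre \LG{r'}\times L^{p'}_\erre \LG{r'}$ of the bilinear form $\iint \bigl(e^{-iH(t-s)}G(s),F(t)\bigr)\,ds\,dt$. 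Inserting the decay bound above reduces matters to showing that convolution with $|t|^{-(1/2-1/r)}$ maps $L^{p'}(\erre)$ into $L^p(\erre)$; the Hardy--Littlewood--Sobolev inequality gives exactly this under the scaling relation $1/p = 1/p' - 1 + (1/2 - 1/r)$, i.e.\ $2/p+1/r=1/2$, which is the admissibility condition \eqref{admissible}. The endpoint $(p,r)=(\infty,2)$ is a direct consequence of unitarity.

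For the inhomogeneous estimate \eqref{stric2}, the key observation is that the hypothesis $2/q+1/k=5/2$ is precisely the statement that the conjugate pair $(q',k')$ is admissible, since $2/q'+1/k' = 3 - 5/2 = 1/2$. Composing \eqref{stric1} at the pair $(p,r)$ with its dual version at $(q',k')$ yields the full, non-retarded bilinear bound, and the retarded truncation to $s<t$ is then recovered via the Christ--Kiselev lemma; its hypothesis $p>q$ is automatic here, since admissibility forces $p\geqslant 4$ and $q\leqslant 4/3$, with the single exceptional case at the common endpoint handled by the Keel--Tao $TT^*$ endpoint argument.

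There is no deep obstacle once the dispersive bound \eqref{osaka} is in hand; the only points requiring some care are the uniform control of the matrix factors $\mathbb{I}$, $\mathbb{J}$, $\mathbb{T}_j$ when transferring scalar $L^p(\erre^+)$-estimates to the vector-valued setting on $\GG$, and the verification of the Christ--Kiselev hypothesis at the endpoints of admissibility. The time-translation invariance of every step then ensures that the resulting constants $c$ in \eqref{stric1}--\eqref{stric2} do not depend on the length of any time interval, in accordance with the statement.
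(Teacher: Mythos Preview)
Your proposal is correct and follows exactly the approach the paper indicates: the paper's own proof simply states that the result is standard given the dispersive estimate \eqref{osaka} and cites \cite{[Caz]} and \cite{[KT]}, which is precisely the $TT^*$/HLS/Christ--Kiselev machinery you have sketched. Your remark about a ``common endpoint'' is slightly off, since $p\geqslant 4$ and $q\leqslant 4/3$ never meet, so Christ--Kiselev applies without exception; otherwise the argument is sound.
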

\begin{proof}
The proof is standard due to the dispersive estimate \eqref{osaka},
see for instance \cite{[Caz]} and \cite{[KT]}.
\end{proof}

\begin{rem} {\em If $H = H^\alpha_\delta$ ($H^{\beta}_{\delta^\prime}$),
the constants appearing in \eqref{osaka},
\eqref{stric1} and \eqref{stric2} are independent of $\alpha$ ($\beta$).
Indeed, by the change of variable $u \to u \alpha$ ($u \to u / \beta$)
 the integral term in \eqref{propdelta} (\eqref{propdeltaprime}) can be easily
estimated independently of $ \alpha$ ($\beta$), obtaining a dispersive estimate
\eqref{osaka} independent of the parameters and therefore, by the standard
Strichartz machinery, uniform inequalities \eqref{stric1} and \eqref{stric2}.
}
\end{rem}

\section{\label{sec:wp}Well-posedness and conservation laws}

Here we treat
the problem of the well-posedness (in the sense of, e.g., \cite{[Caz]}), i.e., the existence and uniqueness of
the solution to equation \eqref{intform1}
in the energy domain of the system. Such a domain turns out
to coincide with the form domain of the linear part of equation
\eqref{diffform}. Throughout this section, such a linear part is
denoted by $H$, and, according to the particular case under consideration,
it can be understood as the
Hamiltonian operator $H_F$, $H_\delta^\alpha$, or $H_{\delta^\prime}^\beta$.
Correspondingly, we denote the associated energy domain simply by
${\mathcal D} ({\mathcal E})$. All of the following formulas can be
specialized to the particular cases ${\mathcal D} ({\mathcal E}_F)$,
${\mathcal D} ({\mathcal E}_\delta^\alpha)$, or
${\mathcal D} ({\mathcal E}_{\delta^\prime}^\beta)$.

Let us stress that throughout the paper we do not approximate the dynamics
in $H^1$, but rather in $L^2$. Furthermore, local well-posedness in
$L^2$ is ensured by Strichartz estimates (proposition
\ref{prop:stric}), as is easily seen following the line
exposed in \cite{[Caz]}, chapters 2 and
3. Nonetheless, we prefer to deal with functions in the energy domain,
 since they are physically more
meaningful.

We follow the traditional line of proving, first of all, local
well-posedness,
and then extending it to all times by means of
a priori estimates provided by the
conservation
laws. 

\n
For a more extended treatment of the analogous problem for a
two-edge vertex (namely, the real line with a point interaction at the
origin), see \cite{[AN09]}.

First, we endow the energy domain ${\mathcal D}
({\mathcal E})$ with the $H^1$-norm defined in \eqref{sobbo}.
Second, we denote by ${\mathcal D}
({\mathcal E})^\star$ the dual of ${\mathcal D}
({\mathcal E})$, i.e., the
set of the continuous linear functionals on ${\mathcal D} ({\mathcal
  E})$. We denote the dual product of $\Gamma \in {\mathcal D}
({\mathcal E})^\star$ and $\Psi \in {\mathcal D} ({\mathcal E})$
by
$
\langle \Gamma, \Psi \rangle$. In such a bracket we sometimes exchange the
place of the factor in ${\mathcal D}
({\mathcal E})^\star$ with the place of the factor in ${\mathcal D}
({\mathcal E})$: indeed, the duality product follows the same
algebraic rules of the standard scalar product.

\n
As usual, one can extend the action of $H$ to the space
${\mathcal D} ({\mathcal E})$, with values in ${\mathcal D}
({\mathcal E})^\star$, by
$$
\langle H \Psi_1 , \Psi_2 \rangle \ : = \ ( H^{\f 1 2} \Psi_1,
 H^{\f 1 2} \Psi_2 ),
$$
where $(\cdot, \cdot)$ denotes the standard scalar product in
  $L^2 ({\mc G})$.

Furthermore, for any $\Psi \in {\mathcal D} ({\mathcal E})$
the identity
\begin{equation}
\label{extder}
\f d {dt} e^{-i H t} \Psi \ = \ - i H
e^{-i H t} \Psi
\end{equation}
 holds in ${\mathcal D} ({\mathcal E})^\star$
too. To prove it, one can first test the functional $\f d {dt} e^{-i
  H t} \Psi$ on an element $\Xi$ in the operator domain ${\mathcal
  D} (H)$, obtaining
\begin{equation*} \nonumber
\left\langle \f d {dt} e^{-i
  H t} \Psi, \Xi \right\rangle \ = \ \lim_{h \to 0} \left( \Psi, \f {e^{i
  H (t+h)} \Xi - e^{i H t} \Xi} h \right)\ = \ ( \Psi, i H e^{i H
t} \Xi) \ = \
\langle -i H e^{-i H t} \Psi, \Xi \rangle.
\end{equation*}
Then, the result can be extended to $\Xi \in {\mathcal D} ({\mathcal
  E})$ by a density argument.

\n
Besides, by \eqref{extder}, the differential version \eqref{diffform} of the
Schr\"odinger equation holds in ${\mathcal D} ({\mathcal
  E})^\star$.

In order to prove a well-posedness result we
need to generalize standard one-dimensional Gagliardo-Nirenberg
estimates to graphs, i.e.
\begin{equation}
\label{gajardo}
\| \Psi \|_{L^p} \ \leqslant \ C \| \Psi^\prime \|^{\f 1 2 - \f 1 p}_{L^2}
\| \Psi \|^{\f 1 2 + \f 1 p}_{L^2}\ \qquad +\infty \geqslant p \geqslant 1\ ,
\end{equation}
where the $C > 0$ is a positive constant which depends on the index $p$ only.
The proof of \eqref{gajardo}
follows immediately from the analogous estimates for functions of the
real line,
considering that any function in $H^1 (\erre^+)$ can be extended to an even function
in $H^1 (\erre)$, and applying this reasoning to each component of $\Psi$.

\begin{prop}[Local well-posedness in ${\mathcal D} ({\mathcal E})$]
\label{loch2}
For any $\Psi_0 \in {\mathcal D} ({\mathcal E})$, there exists $T > 0$ such that the
equation \eqref{intform1} has a unique solution $\Psi \in C^0 ([0,T),
{\mathcal D} ({\mathcal E}) )
\cap C^1 ([0,T), {\mathcal D} ({\mathcal E})^\star)$.

\n
Moreover, eq. \eqref{intform1} has a maximal solution $\Psi^{\rm{max}}$
defined on an interval of the form $[0, T^\star)$, and the following ``blow-up
alternative''
holds: either $T^\star = \infty$ or
$$
\lim_{t \to T^\star} \| \Psi_t^{\rm{max}} \|_{{\mathcal D} ({\mathcal
    E})}
\ = \ + \infty,
$$
where we denoted by $\Psi_t^{\rm{max}}$ the function $\Psi^{\rm{max}}$ evaluated at time $t$.
\end{prop}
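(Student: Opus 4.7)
The plan is to establish local existence and uniqueness via a standard Banach fixed-point argument applied to the Duhamel map
\[
\mathcal{T}[\Psi](t) := e^{-iHt}\Psi_0 + i \int_0^t e^{-iH(t-s)} |\Psi_s|^2 \Psi_s \, ds,
\]
acting on a complete metric space of $\mathcal{D}(\mathcal{E})$-valued continuous trajectories. Concretely, for $M > 2\|\Psi_0\|_{\mathcal{D}(\mathcal{E})}$ and a small $T>0$ to be chosen, I would take
\[
X_{T,M} = \bigl\{ \Psi \in C^0([0,T], \mathcal{D}(\mathcal{E})) : \|\Psi\|_{L^\infty_{[0,T]} H^1(\mathcal{G})} \leqslant M \bigr\},
\]
endowed with the weaker distance $d(\Psi, \Phi) = \|\Psi - \Phi\|_{L^\infty_{[0,T]} L^2(\mathcal{G})}$, which makes the set complete since the uniform $H^1$-bound is preserved by weak-$\ast$ limits.

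First I would verify that $\mathcal{T}$ maps $X_{T,M}$ into itself. Applying Proposition \ref{prop:stric} with the admissible pair $(\infty,2)$ and the dual pair $(1,2)$ gives
\[
\bigl\| \mathcal{T}[\Psi] \bigr\|_{L^\infty_{[0,T]} L^2} \leqslant \|\Psi_0\| + C \bigl\| |\Psi|^2 \Psi \bigr\|_{L^1_{[0,T]} L^2} \leqslant \|\Psi_0\| + C T \, \|\Psi\|_{L^\infty_{[0,T]} L^6}^3.
\]
For the $H^1$ estimate one exploits that, in the repulsive regime considered, the quadratic form of $H$ yields a norm equivalent to $\|\cdot\|_{H^1(\mathcal{G})}$ on $\mathcal{D}(\mathcal{E})$ (this is where the absence of bound states, noted in Section \ref{subsec2.1}, is used). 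Applying Strichartz to $(I+H)^{1/2}\mathcal{T}[\Psi]$, combined with the pointwise Leibniz bound $\|\nabla(|\Psi|^2\Psi)\|_{L^2}\leqslant 3\|\Psi\|_{L^\infty}^2\|\Psi'\|_{L^2}$ and the Gagliardo--Nirenberg estimate \eqref{gajardo}, yields
\[
\bigl\| \mathcal{T}[\Psi] \bigr\|_{L^\infty_{[0,T]} H^1} \leqslant C_1 \|\Psi_0\|_{H^1} + C_2 T M^3,
\]
which is bounded by $M$ once $T$ is small enough.

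The contraction property follows analogously in the weaker $L^2$ metric: since $|\Psi|^2\Psi$ is Lipschitz in $L^2$ on $L^\infty$-balls, and $L^\infty$ is controlled by $H^1$ via \eqref{gajardo},
\[
d(\mathcal{T}[\Psi], \mathcal{T}[\Phi]) \leqslant C T \bigl( \|\Psi\|_{L^\infty_{[0,T]} L^\infty}^2 + \|\Phi\|_{L^\infty_{[0,T]} L^\infty}^2\bigr) d(\Psi, \Phi) \leqslant C T M^2 d(\Psi,\Phi),
\]
so shrinking $T$ further produces a strict contraction whose unique fixed point is the sought solution. Continuity with values in $\mathcal{D}(\mathcal{E})$ is recovered a posteriori by improving the Strichartz bound from $L^\infty_t H^1$ to $C^0_t H^1$, as in \cite{[Caz]}.

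The $C^1([0,T),\mathcal{D}(\mathcal{E})^\star)$ regularity comes from differentiating \eqref{intform1}: the linear piece is handled by \eqref{extder}, while the Duhamel term produces $i|\Psi_t|^2\Psi_t\in L^2\hookrightarrow\mathcal{D}(\mathcal{E})^\star$ with continuous dependence on $t$ inherited from $\Psi\in C^0([0,T),H^1)$ via \eqref{gajardo}. Finally, the blow-up alternative is standard: the local existence time constructed above depends monotonically only on $\|\Psi(t_0)\|_{H^1}$, so iterating the argument from any $t_0$ yields a maximal interval $[0,T^\star)$, and if $T^\star<\infty$ with $\limsup_{t\to T^\star}\|\Psi_t\|_{H^1}<\infty$ then a further iteration would extend the solution past $T^\star$, contradicting maximality. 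The main technical obstacle I anticipate is the $H^1$-stability of $\mathcal{T}$ in the presence of the vertex conditions: one must check that $(I+H)^{1/2}$ is comparable to $(I-\partial_x^2)^{1/2}$ on $\mathcal{D}(\mathcal{E})$ uniformly, and that Strichartz estimates, together with the Leibniz rule for the cubic nonlinearity, survive the coupling at the vertex without producing boundary contributions — both points being delicate only because of the $\delta$ and $\delta'$ conditions, but both manageable using the resolvent formulas \eqref{resodelta}--\eqref{resodeltaprima}.
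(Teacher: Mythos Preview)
Your approach is correct in spirit but takes a more elaborate route than the paper. The paper runs the contraction directly in $\mathcal{X}=L^\infty([0,T),\mathcal{D}(\mathcal{E}))$ with the $H^1$-norm itself, using only that in one dimension $H^1$ is an algebra: Gagliardo--Nirenberg \eqref{gajardo} gives $\||\Phi_s|^2\Phi_s\|_{\mathcal{D}(\mathcal{E})}\leqslant C\|\Phi_s\|_{\mathcal{D}(\mathcal{E})}^3$, so both the self-mapping and the contraction estimates are obtained by a plain time integration, with no Strichartz machinery and no auxiliary weaker metric. Your Kato-type two-norm scheme (contract in $L^2$ on an $H^1$-ball, recover continuity a posteriori) certainly works and is the right tool when the nonlinearity is $H^1$-critical or when one only has $L^2$ data; here it is more than what is needed.

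Two remarks. First, the paper explicitly checks that the cubic nonlinearity preserves $\mathcal{D}(\mathcal{E})$, i.e.\ that $|\psi_j|^2\psi_j$ inherits the continuity condition $\psi_1(0)=\psi_2(0)=\psi_3(0)$ at the vertex in the Kirchhoff and $\delta$ cases; you should include this, as it is what guarantees the Duhamel integrand stays in the energy space. Second, the ``technical obstacle'' you flag --- the comparability of $(I+H)^{1/2}$ with $(I-\partial_x^2)^{1/2}$ on $\mathcal{D}(\mathcal{E})$ --- is genuine, but note that a version of it is implicitly used in the paper's argument as well (to get $\|e^{-iHt}\Psi\|_{H^1}\leqslant C\|\Psi\|_{H^1}$). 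In the direct $H^1$-contraction this equivalence is all one needs; in your scheme you additionally have to thread it through the Strichartz and Leibniz estimates, which is why the paper's path is shorter.
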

\begin{proof}
We define the space
$
{\mathcal X} : = L^\infty ([0,T), {\mathcal D} ({\mathcal E})),$
endowed with the norm
$
\| \Psi \|_{\mathcal X} \ : = \ \sup_{t \in [0,T)} \| \Psi_t
\|_{{\mathcal D} ({\mathcal E})}.
$
Given $\Psi_0 \in {\mathcal D} ({\mathcal E})$, we define the map $G : {\mathcal X}
\longrightarrow {\mathcal X}$ as
$$
G \Phi : = e^{- i H \cdot} \Psi_0 + i \int_0^\cdot
e^{- i H (\cdot - s)} | \Phi_s |^2 \Phi_s \, ds.
$$
Notice that the nonlinearity preserves the space ${\mathcal D} ({\mathcal E})$. Indeed,
since any component $\psi_i$ of $\Psi$
belongs to $H^1 (\erre^+)$, then $
|\psi_i|^2 \psi_i$ belongs to $L^2 (\erre^+)$ too, and so the energy
space
for the delta-prime case is preserved. Furthermore, the
product preserves the continuity at zero required by
the
Kirchhoff and the delta case.

By estimates \eqref{gajardo} one
obtains
$$
\| | \Phi_s |^2 \Phi_s \|_{{\mathcal D} ({\mathcal E})} \ \leqslant \ C \| \Phi_s \|_{{\mathcal D} ({\mathcal E})}^3,
$$
so
\begin{equation}
\label{contraz1}
\begin{split}
\| G \Phi \|_{\mathcal X} \ \leqslant \ & \| \Psi_0 \|_{{\mathcal D}
  ({\mathcal E})} + C \int_0^T
\| \Phi_s \|_{{\mathcal D} ({\mathcal E})}^3 \, ds \
\leqslant \ \| \Psi_0 \|_{{\mathcal D} ({\mathcal E})} + C T \| \Phi \|_{\mathcal X}^3\,.
\end{split}
\end{equation}
Analogously, given $\Phi, \Xi \in {\mathcal D} ({\mathcal E})$,
\begin{equation}
\label{contraz2}
\begin{split}
\| G \Phi - G \Xi \|_{\mathcal X} \ \leqslant \ & C T
\left( \| \Phi \|_{\mathcal X}^2 + \| \Xi \|_{\mathcal X}^2 \right)
\| \Phi - \Xi \|_{\mathcal X}\,.
\end{split}
\end{equation}
We point out that the constant $C$ appearing in \eqref{contraz1} and
\eqref{contraz2} is independent of $\Psi_0$, $\Phi$, and $\Xi$.
Now let us restrict the map $G$ to elements $\Phi$ such that $\| \Phi
\|_{\mathcal X} \leqslant 2 \| \Psi_0 \|_{{\mathcal D} ({\mathcal
    E})}$. From \eqref{contraz1} and \eqref{contraz2}, if
$T$ is chosen to be strictly less than $(8C \| \Psi_0 \|_{{\mathcal D}
({\mathcal E})}^2)^{-1}$, then
 $G$ is a contraction of the ball in ${\mathcal X}$ of radius
$ 2 \| \Psi_0 \|_{{\mathcal D} ({\mathcal E})}$, and so,
by the contraction lemma,
there exists a unique solution to \eqref{intform1} in the
time
interval $[0, T)$. By a standard one-step boostrap argument one
immediately has that the solution actually belongs to $C^0 ([0,T),
{\mathcal D} ({\mathcal E}))$, and
due to the validity of \eqref{diffform} in the space
${\mathcal D} ({\mathcal E})^\star$ we immediately have that the solution
$\Psi$ actually belongs to $C^0 ([0,T), {\mathcal D} ({\mathcal E}))) \cap
C^1 ([0,T),{\mathcal D} ({\mathcal E})^\star)$.

The proof of the
existence of a maximal solution is standard, while
the blow-up alternative is a consequence of the fact that,
whenever the ${\mathcal D} ({\mathcal E})$-norm of the solution is
finite, it is possible to extend it for a further time by the same
contraction
argument.
\end{proof}

The next step consists in the proof of the conservation laws.
\begin{prop}
For any solution $\Psi \in C^0 ([0,T), {\mathcal D} ({\mathcal E}))
\cap C^1 ([0,T), {\mathcal D} ({\mathcal E})^\star)$ to
the problem \eqref{intform1}, the following conservation laws hold at
any time $t$:
\begin{equation*}
\| \Psi_t \| \ = \ \| \Psi_0 \|, \qquad
{\mathcal E} ( \Psi_t ) \ = \ {\mathcal E} ( \Psi_0 ),
\end{equation*}
where the symbol $ {\mathcal E}$ denotes the {\em energy functional}
\begin{equation*}
{\mathcal E} ( \Psi_t ) \ : = \ \f 1 2 {\mathcal E}_{lin} ( \Psi_t )
-
\f 1 4 \| \Psi_t \|_{L^4}^4.
\end{equation*}
Here the functional ${\mathcal E}_{lin}$ coincides with ${\mathcal
  E}_{F}$, ${\mathcal E}_{\delta}^\alpha$ or ${\mathcal
  E}_{\delta^\prime}^\beta$, according to the case one considers.
\end{prop}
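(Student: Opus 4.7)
My plan is to derive both conservation laws from the differential form
\begin{equation*}
i\partial_t \Psi_t \ = \ H\Psi_t - |\Psi_t|^2\Psi_t,
\end{equation*}
which by \eqref{extder} and the preceding discussion holds in $\mathcal{D}(\mathcal{E})^\star$ for any solution in the regularity class of the statement. Mass conservation is almost immediate: since $\Psi_t\in\mathcal{D}(\mathcal{E})$ and $\partial_t\Psi_t\in\mathcal{D}(\mathcal{E})^\star$, the real-valued function $t\mapsto\|\Psi_t\|^2$ is $C^1$ with derivative $2\Re\langle\partial_t\Psi_t,\Psi_t\rangle$, where $\langle\cdot,\cdot\rangle$ denotes the dual pairing extending the $L^2$ scalar product. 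Substituting the equation and using that $\langle H\Psi_t,\Psi_t\rangle = \mathcal{E}_{lin}(\Psi_t)\in\RE$ together with $(|\Psi_t|^2\Psi_t,\Psi_t) = \|\Psi_t\|_{L^4}^4 \in \RE$, both contributions turn out to be purely imaginary multiples of real numbers, whose real parts vanish.

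For the energy the naive identity $\frac{d}{dt}\mathcal{E}(\Psi_t) = \Re\langle H\Psi_t - |\Psi_t|^2\Psi_t,\partial_t\Psi_t\rangle = \Re\langle A_t,-iA_t\rangle = 0$ with $A_t:=H\Psi_t-|\Psi_t|^2\Psi_t$ is formally the right cancellation, but it is not legal in the weak setting: $H\Psi_t$ lives only in $\mathcal{D}(\mathcal{E})^\star$ and cannot be paired with $\partial_t\Psi_t\in\mathcal{D}(\mathcal{E})^\star$, and a priori $\mathcal{E}_{lin}(\Psi_t)$ is not differentiable in $t$. My strategy is to justify it by density. First I would approximate $\Psi_0$ by $\Psi_0^{(n)}:=(I+n^{-1}H)^{-1}\Psi_0\in\mathcal{D}(H)$, which converges to $\Psi_0$ in the $\mathcal{D}(\mathcal{E})$ norm. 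For such smoother initial data I would upgrade Proposition \ref{loch2} to local well-posedness in $\mathcal{D}(H)$ by a second contraction argument, now producing a strong solution $\Psi^{(n)}\in C([0,T_n),\mathcal{D}(H))\cap C^1([0,T_n),L^2)$; the equation then holds in the strong $L^2$ sense, all brackets above become honest $L^2$ inner products, and the formal cancellation becomes rigorous, yielding $\mathcal{E}(\Psi_t^{(n)}) = \mathcal{E}(\Psi_0^{(n)})$.

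To close the argument I would pass to the limit $n\to\infty$. Continuous dependence in the $\mathcal{D}(\mathcal{E})$ norm, a direct consequence of the contraction estimates \eqref{contraz1}--\eqref{contraz2} applied uniformly in $n$, yields $\Psi^{(n)}_t\to\Psi_t$ in $\mathcal{D}(\mathcal{E})$ uniformly on compact subintervals of existence. The linear form $\mathcal{E}_{lin}$ is continuous on $\mathcal{D}(\mathcal{E})$ by construction, and the $L^4$ term is continuous there thanks to the Gagliardo--Nirenberg inequality \eqref{gajardo}. Passing to the limit on both sides of the identity $\mathcal{E}(\Psi_t^{(n)}) = \mathcal{E}(\Psi_0^{(n)})$ then delivers the desired conservation law for $\Psi$.

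The step I expect to be hardest is the existence of the strong solutions $\Psi^{(n)}$: for $\delta$ and $\delta'$ vertices the map $\Psi\mapsto|\Psi|^2\Psi$ sends $H^2(\GG)$ into itself but does \emph{not} preserve $\mathcal{D}(H)$, since it destroys the linear relation between $\Psi(0)$ and $\Psi'(0)$ built into the boundary condition. A cleaner alternative is to bypass $\mathcal{D}(H)$ entirely and instead differentiate $\|H^{1/2}J_\eta\Psi_t\|^2$ with the spectral regularization $J_\eta:=(I+\eta H)^{-1}$, which is bounded, self-adjoint, commutes with $H$, and maps $\mathcal{D}(\mathcal{E})^\star$ into $\mathcal{D}(\mathcal{E})$; the same cancellation $\Re\langle iA,A\rangle = 0$ then takes place at the regularized level and one lets $\eta\to 0$ in the end, using the same Gagliardo--Nirenberg control on the nonlinear piece. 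Either route ultimately reduces the proof to this elementary identity applied to a family of genuinely paired vectors.
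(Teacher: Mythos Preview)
Your argument is correct, and the two routes you sketch (approximation by $\mathcal D(H)$-data plus continuous dependence, or regularization by $J_\eta=(I+\eta H)^{-1}$) are both standard ways to justify energy conservation; your caution about the nonlinearity failing to preserve $\mathcal D(H)$ for the $\delta$ and $\delta'$ vertices is well placed and makes the $J_\eta$-route the cleaner of the two.

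The paper, however, takes the direct path you discarded as ``not legal''. It writes the difference quotient of $\langle\Psi_t,H\Psi_t\rangle$ and passes to the limit, arriving at $2\,\mathrm{Im}\,\langle|\Psi_t|^2\Psi_t,H\Psi_t\rangle$. You are right that the intermediate expression $\langle\partial_t\Psi_t,H\Psi_t\rangle$ pairs two elements of $\mathcal D(\mathcal E)^\star$ and is formally ill-defined; what makes the argument work (and what the paper is implicitly using) is that, via the Duhamel formula, $\Psi_{t+h}=e^{-iHh}\Psi_t+R_h$ with $R_h/h\to i|\Psi_t|^2\Psi_t$ in $\mathcal D(\mathcal E)$, so that $\|H^{1/2}\Psi_{t+h}\|^2-\|H^{1/2}\Psi_t\|^2 = 2\,\mathrm{Re}\,(e^{-iHh}H^{1/2}\Psi_t,H^{1/2}R_h)+O(h^2)$ by unitarity of the linear group, and the limit is the well-defined quantity above. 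In other words, the dangerous contribution $\langle H\Psi_t,H\Psi_t\rangle$ never actually appears: unitarity kills it before one needs to make sense of it. This is shorter than your regularization argument, at the price of looking somewhat formal on first reading; your approach is more systematic and transfers more readily to settings where the linear flow is not unitary.
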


\begin{proof}
The conservation of the $L^2$-norm can be immediately obtained by the
validity of equation \eqref{diffform} in the space ${\mc D}({\mc E})$:
$$
\f d {dt} \| \Psi_t \|^2 \ = \ 2 \, {\rm{Re}} \, \left\langle \Psi_t ,
\f d {dt} \Psi_t \right\rangle \ = \ 2 \, {\rm{Im}} \, \langle \Psi_t ,
H \Psi_t \rangle \ = \ 0
$$
by the self-adjointness of $H$. In order to prove the conservation of the
energy, first we notice that
$\langle \Psi_t, H \Psi_t \rangle$ is differentiable as a function of time.
Indeed,
\begin{equation*}
\begin{split} &
\f 1 h \left[ \langle \Psi_{t+h}, H \Psi_{t+h} \rangle -
\langle \Psi_{t}, H \Psi_{t} \rangle \right] 
\ = \ \left\langle \f{\Psi_{t+h} - \Psi_t} h, H \Psi_{t+h}
\right\rangle + \left\langle H \Psi_{t} ,\f{\Psi_{t+h} - \Psi_t} h
\right\rangle
\end{split}
\end{equation*}
and then, passing to the limit $h\to0$,
\begin{equation}
\label{previous}
\f d {dt} (\Psi_t ,
H \Psi_t)\ = \ 2 \, {\rm{Re}} \, \left\langle \f d {dt} \Psi_t ,
H \Psi_t \right\rangle \ = \ 2 \, {\rm{Im}} \, \langle | \Psi_t |^2
\Psi_t , H \Psi_t \rangle,
\end{equation}
where we used the self-adjointness
of $H$ and \eqref{diffform}.
Furthermore,
\begin{equation}
\label{naechst}
\f d {dt} (\Psi_t ,
| \Psi_t |^2 \Psi_t) \ = \ \f d {dt} (\Psi_t^2 , \Psi_t^2)
 \ = \ 4 \, {\rm{Im}} \,
\langle | \Psi_t |^2
\Psi_t , H \Psi_t \rangle.
\end{equation}
From \eqref{previous} and \eqref{naechst} one then obtains
$$
\f d {dt} {\mathcal E} (\Psi_t) \ = \ \f 1 2 \f d {dt} \langle \Psi_t ,
H \Psi_t \rangle - \f 1 4 \f d {dt} (\Psi_t ,
| \Psi_t |^2 \Psi_t)_{L^2} \ = \ 0
$$
and the proposition is proved.
\end{proof}
\begin{cor}
The solutions are globally defined in time.
\end{cor}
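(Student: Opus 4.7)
The plan is to invoke the blow-up alternative established in the local well-posedness proposition: it suffices to show that $\|\Psi_t^{\rm max}\|_{{\mathcal D}({\mathcal E})}$ cannot become infinite at any finite time $T^\star$. I would derive the needed a priori bound by combining the two conservation laws with the Gagliardo--Nirenberg inequality \eqref{gajardo} at exponent $p=4$.

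First, under the standing assumption $\alpha,\beta>0$, the additional boundary contributions $\alpha|\psi_1(0)|^2$ and $\beta^{-1}|\sum_i\psi_i(0)|^2$ appearing in ${\mathcal E}_\delta^\alpha$ and ${\mathcal E}_{\delta'}^\beta$ are non-negative, so in all three cases one has
$$
{\mathcal E}_{lin}(\Psi) \ \geqslant \ \sum_{i=1}^3 \|\psi_i'\|_{L^2(\erre^+)}^2 .
$$
Combined with conservation of the $L^2$-norm, this means that any bound on ${\mathcal E}_{lin}(\Psi_t)$ automatically yields a bound on $\|\Psi_t\|_{H^1(\GG)}$, hence on $\|\Psi_t\|_{{\mathcal D}({\mathcal E})}$.

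Energy conservation then gives
$$
\tfrac{1}{2}{\mathcal E}_{lin}(\Psi_t) \ = \ {\mathcal E}(\Psi_0) + \tfrac{1}{4}\|\Psi_t\|_{L^4}^4.
$$
Estimating the quartic term componentwise through \eqref{gajardo} with $p=4$ and using $\|\Psi_t\|=\|\Psi_0\|$ yields
$$
\|\Psi_t\|_{L^4}^4 \ \leqslant \ C\,\|\Psi_t'\|_{L^2}\,\|\Psi_0\|^3 ,
$$
and a Young inequality absorbs $\|\Psi_t'\|_{L^2}$ into the left-hand side, producing
$$
{\mathcal E}_{lin}(\Psi_t) \ \leqslant \ 4\,{\mathcal E}(\Psi_0) + C'\|\Psi_0\|^6,
$$
a quantity that depends only on the initial datum. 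Together with the previous step, this bounds $\|\Psi_t^{\rm max}\|_{{\mathcal D}({\mathcal E})}$ uniformly on $[0,T^\star)$, contradicting the blow-up alternative unless $T^\star=+\infty$.

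I do not foresee any substantive obstacle: the argument is the standard one for $L^2$-subcritical focusing NLS in one space dimension. The only point specific to the graph setting is checking that the extra boundary terms in ${\mathcal E}_{lin}$ have the right sign, which is precisely ensured by the repulsivity hypothesis $\alpha,\beta>0$ made earlier in the paper; in the attractive regime one would instead need a sharp Sobolev trace inequality at the vertex to close the estimate, but that case is explicitly excluded here.
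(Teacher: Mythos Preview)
Your proposal is correct and follows essentially the same approach as the paper: conservation laws plus Gagliardo--Nirenberg plus the blow-up alternative. The only cosmetic difference is that you invoke \eqref{gajardo} at $p=4$ and close with Young's inequality, whereas the paper uses $p=\infty$ (together with $L^2$ interpolation) to reach the same bound $\|\Psi_t\|_{L^4}^4 \leqslant C\|\Psi_t'\|\,\|\Psi_0\|^3$ and then reads off the a priori control from the resulting quadratic inequality in $\|\Psi_t'\|$; you are also more explicit than the paper in noting that the vertex contributions to ${\mathcal E}_{lin}$ are non-negative under the standing hypothesis $\alpha,\beta>0$.
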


\begin{proof}
By estimate \eqref{gajardo} with $p = \infty$ and conservation of
the $L^2$-norm, there exists
a constant $M$, that depends on $\Psi_0$ only, such that
$$
{\mathcal E} (\Psi_0) \ = \ {\mathcal E} (\Psi_t) \ \geq \
\f 1 2 \| \Psi_t^\prime \|^2 - M \| \Psi_t^\prime \|
$$
Therefore a uniform (in $t$) bound on
$ \| \Psi_t^\prime \|^2$ is obtained. As a consequence,
one has that no blow-up in finite
time can occur, and therefore, by the blow-up alternative, the
solution
is global in time.
\end{proof}

\section{Main Result}
In this section we describe the asymptotic dynamics of a particular
initial state, which resembles
a soliton for the standard cubic NLS on the line.

According to section \ref{sec:wp}, we use the symbol $H$ to
generically
denote the linear part of the evolution, regardless of the fact that
we are considering the Kirchhoff, delta, or delta-prime boundary
conditions. When necessary, we will distinguish between the three of them.

We use the notation
\begin{equation*}
\phi (x) \ = \ \sqrt 2 \cosh^{-1} x\,,\quad x\in\RE\,,
\end{equation*}
and for any $x_0\in\RE$ and $v\in\RE$ we define
\begin{equation}
\label{fix0t}
\phisol{x_0,v}(x,t) = e^{i \f v 2 x} {e^{- i t \f {v^2} 4}}e^{it}
\phi ( x - x_0 - vt)\,,\quad x\in\RE,\ t\in\RE\,.
\end{equation}
The function $\phisol{x_0,v}$ represents a soliton for the cubic NLS on the line
which at time $t=0$ is centered in $x=x_0$ and has velocity $v$.
Therefore,
$\phisol{x_0,v}$ is the solution of the integral equation
\begin{equation}
\label{eq:sol}
\phisol{x_0,v}(x,t)
=
 \big[U_t e^{i \f v 2 \cdot} \phi ( \cdot - x_0)\big](x)
+i \int_0^t \big[U_{t-s}|\phisol{x_0,v}(\cdot,s)|^2\phisol{x_0,v}(\cdot,s)\big](x)ds\,,
\quad t\in\RE
\,.
\end{equation}

Let $\chi$ be the cut off function $\chi \in C^\infty (\erre^+ )$,
$\chi = 1$ in $(2, + \infty)$ and $\chi = 0$ in $(0,1)$. For later use
we define also $\chi_+ = \chi_{[0, +\infty)}$ and $\chi_- = \chi_{(
    -\infty,0]}$, where $\chi_{[a,b]}$ denotes the characteristic
function of the interval $[a,b]$.

\n
Moreover let $x_0$ and $v$ be two positive constants and $0<\de<1$.

\n
We take as initial condition the following function
\begin{equation}
\label{init}
\Psi_0 (x) \ = \ \left( \begin{array}{c} \chi (x) e^{-i
      \f v 2 x} \phi (x - x_0) \\ 0 \\ 0 \end{array} \right)\,,
\qquad
x_0 \ \geqslant \ v^{1 - \delta}\,,
\end{equation}
and we denote by $\Psi_{H,t}$ the solution of the equation
\begin{equation}
\label{intform}
\Psi_{H,t} \ = \ e^{-iH t} \Psi_0 + i \int_0^t ds \, e^{-i
  H (t-s)} | \Psi_{H,s} |^2 \Psi_{H,s}\,,\quad t\geqslant 0\,.
\end{equation}
The choice of the vector $\Psi_0$ is used to render the idea that the
initial condition is a soliton centered away from the vertex
and moving towards the vertex with velocity $v$. The cut off function
$\chi$ in formula \eqref{init} is aimed at setting $\Psi_0$ in the domain
of the Hamiltonian $H$, see section \ref{subsec2.1}.

Let us set $t_2:=x_0/v+v^{-\delta}$ and define the following functions:
\begin{equation}
\label{Phi1t2}
\Phi^{1}_{H,t_2}(x_1 ,x_2, x_3)\equiv
\begin{pmatrix}
\tilde r_H e^{-i \f {v^2} 4 t_2}e^{i\f v 2 x_1} e^{it_2} \phi (x_1 + x_0 - vt_2)\\
\tilde r_H e^{-i \f {v^2} 4 t_2}e^{-i\f v 2 x_2} e^{it_2} \phi (x_2 - x_0 + vt_2)\\
0
\end{pmatrix}
\end{equation}
\begin{equation}
\label{Phi2t2}
\Phi^{2}_{H,t_2}(x_1 ,x_2, x_3)\equiv
\begin{pmatrix}
0\\
\tilde t_H
e^{-i \f {v^2} 4 t_2}e^{i\f v 2 x_2} e^{it_2} \phi (x_2 + x_0 - vt_2)\\
\tilde t_H e^{-i \f {v^2} 4 t_2}e^{-i\f v 2 x_3} e^{it_2} \phi (x_3 - x_0 + vt_2)
\end{pmatrix}
\end{equation}
\begin{equation}
\label{Phi3t2}
\Phi^{3}_{H,t_2}(x_1 ,x_2, x_3)\equiv
\begin{pmatrix}
\tilde t_H e^{-i \f {v^2} 4 t_2}e^{-i\f v 2 x_1} e^{it_2} \phi (x_1 - x_0 + vt_2)\\
0\\
\tilde t_H e^{-i \f {v^2} 4 t_2}e^{i\f v 2 x_3} e^{it_2} \phi (x_3 + x_0 - vt_2)
\end{pmatrix}
\end{equation}
They represent solitons on the line multiplied
by the scattering coefficients of the linear dynamics $\tilde r_H$ and
$\tilde t_H$, that, in the particular regime we consider, are defined as
follows:

\begin{equation}
\label{tildscoeff}
\begin{split}
& \tilde r_{H_F} \ = \ - 1/3, \qquad \tilde t_{H_F} \ = \ 2/3 \\
& \tilde r_{H_\delta^{\tilde \alpha v}} \ = \ - \f {1 + 2 i \tilde \alpha}
{3 + 2 i \tilde \alpha}, \qquad
\tilde t_{H_\delta^{\tilde \alpha v}} \ = \ \f 2
{3 + 2 i \tilde \alpha} \\ &
\tilde r_{H_{\delta^\prime}^{\tilde \beta / v}} \ = \ \f{\tilde {\beta} + 2 i}
{\tilde \beta + 6 i},
 \qquad
\tilde t_{H_{\delta^\prime}^{\tilde \beta/v}} \ = \ - \f {4 i}
{\tilde \beta + 6 i } \, .
\end{split}
\end{equation}

\begin{rem}{\em 
Notice that the coefficients $\tilde r_H$ and $\tilde t_H$ can be obtained by the
scattering coefficients \eqref{scoeffd},
\eqref{scoeffdp}, identifying $k$ with $v/2$ and
replacing $\alpha$ by $ \tilde \alpha v$
and $\beta$ by
$\tilde \beta / v$. This is due to the fact that we implicitly considered a particle
with mass equal to $1/2$, therefore the momentum $k$ is linked to the
speed $v$ by $k = v/2$. 
}
\end{rem}
For any $t>t_2$ we define the vectors $\Phi^j_{H,t}$ as the evolution of $\Phi^j_{H,t_2}$
with the nonlinear flow generated by $H_j$, i.e., they are solutions of the
equation
\begin{equation}
\label{Phijt}
\Phi^j_{H,t}=e^{-i H_j (t-t_2)} \Phi^j_{H,t_2} + i \int_{t_2}^t ds \, e^{-i
  H_j (t-s)} | \Phi^j_{H,s} |^2 \Phi^j_{H,s}\qquad j=1,2,3\,.
\end{equation}
\begin{rem} {\em
The vectors $\Phi^j_{H,t}$ can be represented by
\begin{equation}
\label{Phi123t}
\Phi^1_{H,t}(x_1 ,x_2, x_3)=
\begin{pmatrix}
e^{-i \f {v^2} 4 t_2}e^{it_2} \phi^{ref}_{t-t_2}(x_1 )\\
e^{-i \f {v^2} 4 t_2} e^{it_2} \phi^{ref}_{t-t_2}(-x_2)\\
0
\end{pmatrix}
\;;\quad
\Phi^2_{H,t}(x_1 ,x_2, x_3)=
\begin{pmatrix}
0\\
e^{-i \f {v^2} 4 t_2} e^{it_2} \phi^{tr}_{t-t_2} (x_2)\\
e^{-i \f {v^2} 4 t_2} e^{it_2} \phi^{tr}_{t-t_2} (-x_3)
\end{pmatrix}
\end{equation}
\begin{equation}
\label{Phi1234t}
\Phi^3_{H,t}(x_1 ,x_2, x_3)=
\begin{pmatrix}
e^{-i \f {v^2} 4 t_2} e^{it_2} \phi^{tr}_{t-t_2} (-x_1)\\
0\\
e^{-i \f {v^2} 4 t_2} e^{it_2} \phi^{tr}_{t-t_2} (x_3)
\end{pmatrix}\,
\end{equation}
where, for any $t\geqslant0$, the functions $\phi^{ref}_t$ and $\phi^{tr}_t$ are the
solutions to the following NLS on the line
\begin{equation}
\label{phiref}
\phi^{ref}_t(x)= \tilde r_H \int_{-\infty}^\infty U_{t}(x-y) e^{i\f v 2 y} \phi (y - v^{1-\de})dy
+i\int_{0}^tds \int_{-\infty}^\infty U_{t-s}(x-y)|\phi^{ref}_s(y)|^2 \phi^{ref}_s(y)dy
\end{equation}
\begin{equation}
\label{phitr}
\phi^{tr}_t(x)= \tilde t_H\int_{-\infty}^\infty U_{t}(x-y) e^{i\f v 2 y} \phi (y - v^{1-\de})dy
+i\int_{0}^tds \int_{-\infty}^\infty U_{t-s}(x-y)|\phi^{tr}_s(y)|^2 \phi^{tr}_s(y)dy\,.
\end{equation}}
\end{rem}

Our main result is summarized in the following theorem:
\begin{theorem}
\label{mainth}

Fixed $\tilde \alpha, \tilde \beta > 0$,
let $H$ be any of the self-adjoint operators $H_F$, $H_\delta^{\tilde \alpha
  v}$, $H_{\delta^\prime}^{\tilde \beta/v}$ acting on $L^2(\mc G)$, where ${\mc G}$ is the three-edge star graph,
and defined by
\eqref{san}, \eqref{fuji}, \eqref{domdelta},
\eqref{domprime}.
Call $\Psi_t$ the unique, global solution to the Cauchy problem
\eqref{intform} with initial
data \eqref{init}.

Then,
there exist $\tau_* >0$ and $T_{\ast}>0 $ such that for $0< T <T_{\ast}$ one has
\begin{equation} \label{ergebnis}
\| \Psi_t - \Phi^1_t- \Phi^2_t- \Phi^3_t \|_{L_x^2(\GG)} \ \leqslant \
C v^{-\f{T_{\ast}-T}{\tau_*}}
\end{equation}
for any time $t$ in the interval $x_0 / v + v^{1 - \delta} < t <
x_0 / v + v^{1 - \delta}
+ T\ln v$.

In \eqref{ergebnis}, $C$ is a positive constant independent of $t$ and $v$,
the
functions
$\Phi_t^j$ are defined by formulas \eqref{fittilde},
$\tilde t$ and $\tilde r$, given in \eqref{tildscoeff}, are the scattering coefficients associated to $H$.
\end{theorem}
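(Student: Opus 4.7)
The strategy is the three-phase decomposition already sketched in the introduction. I introduce an auxiliary time $t_1 := x_0/v - v^{-\delta}$ at which the incoming soliton is still well separated from the vertex, keep $t_2 := x_0/v + v^{-\delta}$ as the end of the interaction phase, and compare $\Psi_{H,t}$ with a suitable reference dynamics on the three intervals $[0,t_1]$ (pre-interaction), $[t_1,t_2]$ (collision) and $[t_2, t_2 + T\ln v]$ (post-interaction). On each interval the error is controlled by a Duhamel-plus-Strichartz argument based on Proposition \ref{prop:stric} and the Gagliardo-Nirenberg inequality \eqref{gajardo}; the total error accumulates through the three phases.

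On $[0,t_1]$ I compare $\Psi_{H,t}$ with the vector whose first component is $\chi_+(x)\,\phisol{x_0,-v}(x,t)$ and whose remaining components vanish. Writing the difference via \eqref{intform} and the integral equation \eqref{eq:sol} for the free NLS soliton, unitarity of $e^{-iHt}$ together with the Strichartz estimate \eqref{stric2} confines the error to a contribution localized near the vertex, of size $\|\phi(\cdot - x_0 + v\cdot)\,\chi_{[0,1]}\|$; since $\phi$ decays as a $\sech$ and $x_0 - vt \geqslant v^{-\delta}$ throughout this interval, the boundary contribution is smaller than any inverse power of $v$. On the short interval $[t_1,t_2]$ of length $2 v^{-\delta}$ the nonlinear Duhamel correction is at most $O(v^{-\delta})$, because $\|\Psi_s\|_{L^\infty}$ is $O(1)$ uniformly in $v$ by the conservation laws of Section \ref{sec:wp} and \eqref{gajardo}; hence one may replace the nonlinear graph flow by the linear flow $e^{-iH(t-t_1)}$ and, using the explicit propagator formulas of Theorem \ref{dignita}, a dispersive/stationary-phase computation shows that $e^{-iH(t_2-t_1)}$ acting on the incoming soliton produces, up to an $O(v^{-\delta/2})$ error, three outgoing wave packets with amplitudes $\tilde r_H,\tilde t_H,\tilde t_H$ as in \eqref{tildscoeff}, i.e.\ precisely $\sum_{j=1}^3 \Phi^j_{H,t_2}$.

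The heart of the argument is Phase 3. Set $R_t := \Psi_{H,t} - \sum_{j=1}^3 \Phi^j_{H,t}$; subtracting the Duhamel equations \eqref{intform} and \eqref{Phijt}, and exploiting that $\sum_j \Phi^j_{H,t}$ satisfies the boundary conditions of $H$ up to the exponentially small ``fictitious companion'' tails appearing in \eqref{Phi123t}--\eqref{Phi1234t}, one arrives at the schematic identity
\[
R_t = e^{-iH(t-t_2)} R_{t_2} + i\int_{t_2}^t e^{-iH(t-s)}\Bigl[|\Psi_s|^2\Psi_s - \sum_j |\Phi^j_s|^2 \Phi^j_s\Bigr]\,ds + \mathcal{R}_t^{\mathrm{bdy}},
\]
where $\mathcal{R}_t^{\mathrm{bdy}}$ collects the (exponentially small) boundary-mismatch contributions. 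Estimating $|\Psi_s|^2\Psi_s - \sum_j |\Phi^j_s|^2\Phi^j_s$ via algebraic factorization, bounding the resulting cubic cross terms with the Strichartz estimate \eqref{stric2} on $[t_2, t_2+T\ln v]$, and finally invoking a Gronwall argument gives a bound
\[
\|R_t\| \leqslant C\bigl(\|R_{t_2}\| + \|\mathcal{R}^{\mathrm{bdy}}\|\bigr)\,\exp\Bigl(C\int_{t_2}^t \bigl(\|\Psi_s\|_{L^\infty}^2 + \sum_j \|\Phi^j_s\|_{L^\infty}^2\bigr)ds\Bigr).
\]

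The main obstacle is to show that the exponent above grows only \emph{polynomially} in $t-t_2$, so that on the logarithmic window $[t_2, t_2+T\ln v]$ the amplification factor is only a power of $v$. This is exactly the point where the integrable structure of the cubic NLS on the line enters: following \cite{[HMZ07]}, one uses the infinite sequence of integrals of motion to prove an $L^\infty$ spatial-localization estimate for NLS solutions with smooth soliton-like data that grows only polynomially in time. On each branch the reference dynamics $\Phi^j_{H,t}$ reduces to a genuine NLS on $\RE$, because $H_j$ couples two edges via Kirchhoff and decouples the third, so this polynomial bound transfers directly to $\sum_j\Phi^j_{H,t}$, and a bootstrap extends it to $\Psi_{H,t}$ itself inside the window. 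Balancing the resulting factor $v^{T/\tau_*}$ against the Phase-2 initial error $v^{-T_*/\tau_*}$ yields the final estimate $v^{-(T_*-T)/\tau_*}$ for suitable $\tau_*,T_*>0$.
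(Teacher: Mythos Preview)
Your three-phase decomposition, the choice of $t_1,t_2$, and the use of Strichartz estimates are exactly right, and the overall architecture matches the paper's. However, the Phase~3 argument has a genuine gap: you have misidentified where the integrable structure from \cite{[HMZ07]} enters.

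The Gronwall exponent $\int_{t_2}^{t}\bigl(\|\Psi_s\|_{L^\infty}^2+\sum_j\|\Phi^j_s\|_{L^\infty}^2\bigr)\,ds$ is \emph{trivially} linear in $t-t_2$: each $\Phi^j_s$ solves a cubic NLS on the line with bounded $H^1$ data, so $\|\Phi^j_s\|_{L^\infty}=O(1)$ by energy conservation and \eqref{gajardo}, and likewise for $\Psi_s$ by Section~\ref{sec:wp}. Hence the amplification factor on $[t_2,t_2+T\ln v]$ is automatically $v^{CT}$; no integrability is needed, and no bootstrap from $\Phi^j$ to $\Psi$ is required for this. What you call a ``polynomial-in-time $L^\infty$ bound'' is not the relevant statement from \cite{[HMZ07]}.

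The real difficulty, and the place where Proposition~\ref{appAHMZ} is indispensable, is the size of the \emph{source terms} that you treat as forcing. Two contributions appear: (i) your $\mathcal{R}^{\mathrm{bdy}}$, arising from $(e^{-iH(t-s)}-e^{-iH_j(t-s)})$ acting on $\Phi^j$ and on $|\Phi^j|^2\Phi^j$; and (ii) the cubic cross terms $\overline{\Phi^{j_1}_s}\Phi^{j_2}_s\Phi^{j_3}_s$ with not all indices equal, which arise when you expand $|\sum_j\Phi^j_s|^2\sum_j\Phi^j_s-\sum_j|\Phi^j_s|^2\Phi^j_s$. Neither of these is exponentially small. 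Using the representation \eqref{Phi123t}--\eqref{Phi1234t}, both types reduce to products involving $\phi^{\gamma}_{s-t_2}(-x)$ for $x>0$, i.e.\ the mass of the NLS solution $\phi^{\gamma}$ that has leaked onto the ``wrong'' half-line. Since $\tilde r_H\phi$ and $\tilde t_H\phi$ are \emph{not} exact solitons, this tail is not exponentially small for $s>t_2$; the content of Proposition~\ref{appAHMZ} (the integrable-structure estimate of \cite{[HMZ07]}) is precisely that $\|\phi^\gamma_t\|_{L^2(\RE^-)}+\|\phi^\gamma_t\|_{L^\infty(\RE^-)}\leqslant c(k)(\ln v)^{\sigma(k)}v^{-k(1-\delta)}$ uniformly on $[0,T\ln v]$. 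Only with this input do the source terms become $O(v^{-k})$ for $k$ as large as you like, so that after multiplication by the amplification $v^{CT}$ they remain small. Your claim that $\mathcal{R}^{\mathrm{bdy}}$ is ``exponentially small'' and your omission of the cross terms (ii) are the gaps; once you insert Proposition~\ref{appAHMZ} at this point and run the Strichartz bootstrap (as in the paper, via an iteration over subintervals of fixed length~$\tau$ using Lemma~\ref{blocco}, which is essentially your Gronwall in the $L^\infty_tL^2_x\cap L^6_tL^6_x$ norm), the argument closes.
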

The proof of the theorem will be broken into three steps or equivalently
we break the time evolution of $\Psi_{H,t}$ into three phases.

\begin{rem}{\em
A further consequence of Theorem \ref{mainth}, as in the case of \cite{[HMZ07]},
is the fact that fast solitons have reflection and transmission
coefficients which, up to negligible corrections, coincide with the
corresponding coefficients of the linear graph. For example, a definition of
the transmission coefficients along the branches $j=2,3$ could be
given considering the ratio between the amount of mass on the $j-$edge
and the total mass, in the limit $t\to \infty$:
$$
\lim_{t\to \infty}\frac{\|\Psi^j_t\|}{\|\Psi_t\|}\equiv |t^j(v)|\ ,
$$
where $\Psi^j$ denotes the restriction of the solution to the $j-$edge.
In our case we do not have at our disposal the rigorous asymptotics for $t\to \infty$; nevertheless, we can obtain a weaker result. We have the results of Theorem 1.1, which give, in the time interval $t_2 < t < t_2 + T\ln v\ ,$ the estimate
$$
\frac{\|\Psi^j_t\|}{\|\Psi_t\|}=|\tilde t^j| + {\mathcal{O}}(v^{-\sigma})
$$
for a certain $\sigma>0$ and where $\tilde t^j$ is the scattering coefficient of the {\it linear} Hamiltonian which describes the vertex. So, in the limit of fast solitons, i.e. $v\to \infty$, one can assert that the ratio which defines the nonlinear scattering coefficient converges to the corresponding linear scattering coefficient. And analogously
for the case of reflection coefficient $r(v)$, i.e. $j=1$, one has (for $t>t_2$, as before)
$$
\lim_{v\to \infty}\frac{\|\Psi^1_t\|}{\|\Psi_t\|}=|\tilde r|\ .
$$
\p
This is true for every coupling between the ones considered, i.e. Kirchhoff, $\delta$ or $\delta'$.}\par
\end{rem}

\section{Proof of theorem \ref{mainth}}
In the proof we drop the subscript $H$. When convenient, we
specify the particular Hamiltonian operator we refer to.

For any $t\in\RE$ we introduce the soliton
\begin{equation}
\label{truesol}
\Phi_t (x_1 ,x_2, x_3) \ = \ \left( \begin{array}{c} e^{-i \f {v^2} 4 t}
e^{-i
      \f v 2 x_1} e^{it} \phi (x_1 - x_0 + vt) \\
       e^{-i \f {v^2} 4 t}
e^{i
      \f v 2 x_2} e^{it} \phi (x_2 + x_0 - vt)
 \\ 0 \end{array} \right)\,.
\end{equation}
Then, the function $\Phi_t$ satisfies the equation
\begin{equation}
\label{duhamsol}
\Phi_t \ = \ e^{-i H_1 t} \Phi_0 + i \int_0^t ds \, e^{-i
  H_1 (t-s)} | \Phi_s|^2 \Phi_s\,,\quad t\geqslant 0\,.
\end{equation}

\subsection{Phase 1}
We call ``phase 1'' the dynamics in the
time interval $\left(0, t_1 \right)$ with $t_1= \f {x_0} v - v^{-\delta}$.
In this interval we approximate the solution by
the soliton \eqref{truesol}.
The content of this subsection is the estimate of the error due to
such an approximation, that is contained in proposition \ref{fase1}.
Before proving it, we need two
lemmas.

\begin{lemma}
\label{caudale}
Given $0 \leqslant t_a \leqslant t_b \leqslant t_1$,
for the functions
\begin{equation}
\begin{split}
\label{cappa}
K_1 (t,x)& \ : = \ \int_0^\infty U_{t-t_a} (x + y) e^{-i \f v 2 y} \phi ( y - x_0)
\, dy + i \int_{t_a}^t ds
\int_0^\infty U_{t-s} (x + y) e^{-i \f v 2 y}
{e^{- i s \f {v^2} 4}} e^{is} \phi^3 ( y - x_0 + vs)
\, dy \\
K_2 (t,x) & \ : = \ \int_0^\infty U_{t-t_a} (x + y) e^{i \f v 2 y} \phi ( y + x_0)
\, dy + i \int_{t_a}^t ds
\int_0^\infty U_{t-s} (x + y) e^{i \f v 2 y}
{e^{- i s \f {v^2} 4}} e^{is} \phi^3 ( y + x_0 - vs)
\, dy
\end{split}
\end{equation}
the following estimate holds:
\begin{equation*}
\| K_i \|_{X_{t_a, t_b} (\erre^+)}
\leqslant \ C e^{-x_0 + v t_b}, \qquad i = 1,2,
\end{equation*}
where $X_{t_a, t_b}(\erre^+) : = L^\infty_{[t_a, t_b]} L^2 (\erre^+)
\cap L^6_{[t_a, t_b]} L^6 (\erre^+)$.
\end{lemma}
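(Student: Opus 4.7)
The plan rests on two structural observations. First, $U_t$ is even in its spatial variable, so $U_{t-s}(x+y)=U_{t-s}(-(x+y))$ lets us read each $y$-integral in \eqref{cappa} as the free line propagator applied to a $y$-function supported on $\RE^+$ and evaluated at the reflected point $-x$. Second, Galilean invariance of $U_t$ converts the oscillating factor $e^{\mp i v y/2}$ into a translation of the evaluation point by $\pm v(t-t_a)$. Combining these, the linear summand defining $K_1$ becomes
\begin{equation*}
\int_0^\infty U_{t-t_a}(x+y)\,e^{-i\frac{v}{2}y}\phi(y-x_0)\,dy = e^{i\frac{v}{2}x}e^{-i\frac{v^2}{4}(t-t_a)}\bigl[U_{t-t_a}(\chi_+\phi(\cdot-x_0))\bigr]\!\bigl(-x + v(t-t_a)\bigr),
\end{equation*}
with the analogous formula for $K_2$ (involving $\phi(\cdot+x_0)$ and momentum $+v/2$). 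The $K_2$ case is immediate: $\chi_+\phi(\cdot+x_0)$ has $L^2$-norm $\OO(e^{-x_0})$ since its bump lies at $-x_0<0$, outside $\supp\chi_+$, and $U_{t-t_a}$ is an $L^2$-isometry.

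For $K_1$ I would exploit the fact that $e^{-iv\cdot/2}\phi(\cdot-x_0)$ is the initial datum of the exact leftward-moving NLS soliton
\begin{equation*}
\widetilde\Psi_s(y):=e^{-i\frac{v}{2}y}e^{-i\frac{v^2}{4}s}e^{is}\phi(y-x_0+vs),
\end{equation*}
so that by the Duhamel principle on $\RE$,
\begin{equation*}
\bigl[U_\tau(e^{-i\frac{v}{2}\cdot}\phi(\cdot-x_0))\bigr](z) = \widetilde\Psi_\tau(z) - i\int_0^\tau \bigl[U_{\tau-s}(|\widetilde\Psi_s|^2\widetilde\Psi_s)\bigr](z)\,ds,
\end{equation*}
and the truncation by $\chi_+$ introduces an $L^2$-error of order $\|\chi_-\phi(\cdot-x_0)\|_{L^2}\leqslant Ce^{-x_0}$, already within the target. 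At the evaluation point $z=-x+v(t-t_a)$, with $x\geqslant 0$ and $t\leqslant t_b\leqslant t_1=x_0/v-v^{-\delta}$, the soliton piece satisfies $|\widetilde\Psi_\tau(z)|=|\phi(x+x_0-v(t-t_a))|\leqslant 2\sqrt{2}\,e^{-x}e^{-(x_0-vt_b)}$ by the elementary bound $\sech(u)\leqslant 2e^{-|u|}$ together with $x_0-vt_b\geqslant v^{1-\delta}>0$. Taking $L^2_x(\RE^+)$ then $L^\infty_{[t_a,t_b]}$ yields the claimed bound on the soliton summand.

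The Duhamel summand in the identity above, and the Duhamel summand already present in \eqref{cappa}, are handled by the same tail analysis with $\phi$ replaced by $\phi^3$. After the same symmetry/boost one must bound $|[U_\sigma \phi^3](-(x+x_0-vt))|$ uniformly for $\sigma\in[0,t_b]$; since $\phi^3=2\sqrt{2}\sech^3$ is analytic in a horizontal strip, a Paley--Wiener contour-deformation inside the strip of analyticity of $\widehat{\phi^3}$, feasible because in the fast-soliton regime $t_b\leqslant x_0/v\leqslant v^{-\delta}$ is small, gives a pointwise estimate $|[U_\sigma \phi^3](w)|\leqslant Ce^{-a|w|}$ for a fixed $a>0$. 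At $w=-(x+x_0-vt)$ this produces $Ce^{-ax}e^{-a(x_0-vt_b)}$, and integration over $s\in[t_a,t]$ costs only a polynomial factor $t_b$, absorbed in the exponential by slightly reducing $a$. The $L^6_{[t_a,t_b]}L^6(\RE^+)$ component of the $X_{t_a,t_b}$-norm follows from the same pointwise inequality, since $\|K_i(t,\cdot)\|_{L^6(\RE^+)}\leqslant Ce^{-a(x_0-vt_b)}$ and the further $L^6_t$ integration contributes at most $t_b^{1/6}$. The main obstacle is precisely the uniform-in-$\sigma$ Paley--Wiener decay for $U_\sigma$ applied to sech-type data in the presence of the quadratic phase $e^{-i\sigma k^2}$; once that is in place the remainder is bookkeeping around the cutoff at $y=0$ and the triangle inequality.
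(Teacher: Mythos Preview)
Your argument has a genuine gap at the Paley--Wiener step. You assert that ``in the fast-soliton regime $t_b\leqslant x_0/v\leqslant v^{-\delta}$ is small'', but the hypothesis is $x_0\geqslant v^{1-\delta}$, which gives $x_0/v\geqslant v^{-\delta}$, the opposite inequality. Moreover $x_0$ carries no upper bound, so $t_b\leqslant t_1=x_0/v-v^{-\delta}$ can be arbitrarily large. For large $\sigma$ the free propagator does \emph{not} preserve exponential spatial decay: in the Fourier representation the contour shift $k\mapsto k+ia$ introduces a factor $e^{2a\sigma k}$ that eventually overwhelms the exponential decay of $\widehat{\phi^3}$. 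Hence the uniform pointwise bound $|[U_\sigma\phi^3](w)|\leqslant Ce^{-a|w|}$ fails exactly in the regime you need it, and with it the control of both Duhamel summands and of the $L^6_tL^6_x$ piece.

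The paper avoids this difficulty by a cleaner decomposition: instead of treating the linear and the Duhamel parts of $K_1$ separately, it completes \emph{both} $y$-integrals to all of $\RE$ simultaneously. The resulting full-line object is, by the integral equation for the NLS soliton, exactly $e^{i\frac{v}{2}x}e^{-i\frac{v^2}{4}t}e^{it}\phi(x+x_0-vt)$, whose $X_{t_a,t_b}(\RE^+)$-norm is bounded by $Ce^{-(x_0-vt_b)}$ by a direct computation. The correction terms are integrals over $y\in(-\infty,0)$; after the reflection $y\mapsto -y$ they become $U^-$-type expressions with data supported where $\phi(\cdot+x_0)$ or $\phi^3(\cdot+x_0-vs)$ is exponentially small, and are then controlled by the \emph{Strichartz} estimates (homogeneous for the linear correction, inhomogeneous with $L^1_tL^2_x$ right-hand side for the Duhamel correction). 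No pointwise decay of $U_\sigma$ on Schwartz data is ever invoked, and no smallness of $t_b$ is needed. Your Galilean-boost/Duhamel identity for the linear piece alone is correct but leaves two Duhamel integrals to estimate; the efficient move is to let the Duhamel piece already present in $K_1$ recombine with the linear piece into the exact soliton before estimating anything.
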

\begin{proof}
Let us start with $K_1$. Adding and subtracting a
contribution to negative values of $y$ one can write
\begin{equation}
 \begin{split}
K_1 (t,x) & \
= \ \int_{-\infty}^\infty U_{t-t_a} (x + y) e^{-i \f v 2 y} \phi ( y - x_0)
\, dy + i \int_{t_a}^t ds
\int_{-\infty}^\infty U_{t-s} (x + y) e^{-i \f v 2 y}
{e^{- i s \f {v^2} 4}} e^{is} \phi^3 ( y - x_0 + vs)
\, dy \\
& \
- \int_{-\infty}^0 U_{t-t_a} (x + y) e^{-i \f v 2 y} \phi ( y - x_0)
\, dy - i \int_{t_a}^t ds
\int_{-\infty}^0 U_{t-s} (x + y) e^{-i \f v 2 y} {e^{- i s \f {v^2} 4}}
e^{is} \phi^3 ( y - x_0 + vs)
\, dy \\
= & \ e^{i \f v 2 x} {e^{- i t \f {v^2} 4}}e^{it}
\phi ( x + x_0 - vt) \\
& \
- \int_{-\infty}^0 U_{t-t_a} (x + y) e^{-i \f v 2 y} \phi ( y - x_0)
\, dy - i \int_{t_a}^t ds
\int_{-\infty}^0 U_{t-s} (x + y) e^{-i \f v 2 y} {e^{- i s \f {v^2} 4}}
e^{is} \phi^3 ( y - x_0 + vs)
\, dy
\end{split}
\label{terms}
\end{equation}
where we used the integral equation \eqref{eq:sol}.
By a straightforward computation,
the $X_{t_a, t_b}(\erre^+)$-norm of the first term can be bounded by
$
C e^{ -(x_0 - vt_b) }.
$
To evaluate the size of the second term, let us write it as follows:
\begin{equation}
\label{kaka}
 \int_0^\infty U_{t-t_a} (x-y) e^{i \f v 2 y} \phi (y + x_0)dy
\ =[ U^-_{t-t_a} e^{i \f v 2 \cdot} \phi (\cdot + x_0)] (x)=
\ [U_{t-t_a} \chi_+ e^{i \f v 2 \cdot} \phi (\cdot + x_0)] (x), \quad x > 0.
\end{equation}
Using the one-dimensional homogeneous Strichartz's estimates for $U_{t-t_a}$,
namely, the analogous of \eqref{stric1} for functions of the half
line, we can estimate the $X_{t_a, t_b}(\RE^+)$-norm of this term as
$$
\|U_{t-t_a} \chi_+ e^{i \f v 2 \cdot} \phi (\cdot + x_0) \|_{X_{t_a, t_b}(
  \RE )} \leqslant
C \| \chi_+ e^{i \f v 2 \cdot} \phi (\cdot + x_0) \| \ \leqslant C
e^{-x_0},
$$
where we used the notation
$X_{t_a, t_b} (\erre) : = L^\infty_{[t_a, t_b]} L^2 (\erre) \cap
L^6_{[t_a, t_b]} L^6 (\erre)$.

\n
The norm of the last term on the r.h.s. of equation \eqref{terms} can be estimated in a similar way by
\begin{equation}
\label{akak}
\left\| \int_{t_a}^{\cdot} \big[ U_{\cdot-s} |\phi_{-x_0 , v} (s)|^2
\phi_{-x_0 , v} (s) \big]ds \right\|_{X_{t_a, t_b}(\RE^+)}
 \ \leqslant C \| \phi_{-x_0 , v}^3\|_{L^1_{[t_a,t_b]} L^2(\RE^+)} \leqslant C\f {1} {v} e^{-x_0 + vt_b}.
\end{equation}
Therefore, from \eqref{terms}, \eqref{kaka}, and \eqref{akak} we get
\begin{equation*}
\| K_1 \|_{X_{t_a, t_b}(\erre^+)} \
\leqslant \ C e^{-x_0 + vt_b}.
\end{equation*}

To estimate $K_2$, the first term
in its definition \eqref{cappa}
can be treated as in \eqref{kaka}, while the second is
estimated following the line of \eqref{akak}.
\end{proof}

\begin{lemma}
\label{blocco}
Given $0 \ls t_a \ls t_b \ls t_1$, let $a$ and $b$ two strictly positive numbers, with
\begin{equation*}
b \leqslant \f 1 {8 a^2 + 4 a}.
\end{equation*}
Moreover, let $y$ be a real, continuous function such that
$0 \leqslant y (t_a) \leqslant a$, and
\begin{equation}
\label{constr}
0 \ls
y (t) \ls a + b y^2 (t) + b y^3 (t), \qquad {\mbox{for any }} \ t \in [t_a, t_b].
\end{equation}
Then,
\begin{equation*}
\max_{t \in [t_a, t_b]} y (t) \, \ls \, 2 a.
\end{equation*}
\end{lemma}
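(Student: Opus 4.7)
The constraint \eqref{constr} is naturally rewritten in terms of the cubic polynomial $P(z) := bz^3 + bz^2 - z + a$ as
\[
P(y(t)) \geqslant 0 \qquad \mbox{for every } t \in [t_a, t_b].
\]
The plan is a standard continuity (bootstrap) argument: I show that $P$ has a smallest positive root $z_* \in (a, 2a]$ which is a genuine sign-change, and then use continuity of $y$ together with $y(t_a) \leqslant a < z_*$ to conclude that $y$ cannot escape the initial island $\{P \geqslant 0\} \cap [0, z_*]$.

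The first task is the elementary sign analysis of $P$. One has $P(0) = a > 0$ and $P(a) = a^2 b(1+a) > 0$, while the hypothesis $b \leqslant (8a^2 + 4a)^{-1}$, rewritten as $4a^2 b(1+2a) \leqslant a$, yields
\[
P(2a) \ = \ -a + 4a^2 b + 8a^3 b \ = \ -a + 4a^2 b(1+2a) \ \leqslant \ 0.
\]
The intermediate value theorem then produces a root $z_* \in (a,2a]$, the smallest positive root of $P$. In the generic case $P(2a) < 0$ the polynomial is strictly negative on a right neighbourhood of $z_*$; in the boundary case $b = (8a^2 + 4a)^{-1}$ (when $P(2a) = 0$) a short computation gives $P'(2a) = a/(1 + 2a) > 0$, so $P$ is strictly negative on a left neighbourhood of $2a$, and in particular $z_* < 2a$ and $P < 0$ on $(z_*, z_* + \eta)$ for some $\eta > 0$.

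With this in hand, I run the bootstrap. Define
\[
T := \sup\bigl\{t \in [t_a, t_b] \ :\ y(s) \leqslant z_* \mbox{ for every } s \in [t_a, t]\bigr\}.
\]
Since $y(t_a) \leqslant a < z_*$ and $y$ is continuous, this set is non-empty and $T \geqslant t_a$. Assume for contradiction $T < t_b$. By continuity one has $y(T) = z_*$, and by definition of $T$ there is a sequence $t_n \searrow T$ with $y(t_n) > z_*$. Continuity of $y$ forces $y(t_n) \to z_*$, whence $y(t_n) \in (z_*, z_* + \eta)$ for $n$ large, and therefore $P(y(t_n)) < 0$, contradicting \eqref{constr}. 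Hence $T = t_b$ and $y(t) \leqslant z_* \leqslant 2a$ on the whole interval.

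The only subtle point in this scheme is verifying that $z_*$ is a true sign-change of $P$ rather than a tangency point, since a tangency would allow $y$ to graze $z_*$ without violating $P(y) \geqslant 0$. The shape of the cubic $P$ (strictly positive leading coefficient, $P(0) > 0$, $P(2a) \leqslant 0$) together with the explicit evaluation $P'(2a) > 0$ in the critical case rules this out. Everything else is just continuity and elementary algebra.
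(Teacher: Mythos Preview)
Your proposal is correct and follows essentially the same approach as the paper: both introduce the cubic $P(z)=bz^3+bz^2-z+a$ (the paper calls it $f_b$), observe $P(0)>0$ and $P(2a)\leqslant 0$, locate a root in $(0,2a]$ by the intermediate value theorem, and conclude by continuity that $y$ cannot cross it. Your version is more careful---you check that the smallest positive root is a genuine sign-change (including the boundary case $P(2a)=0$ via $P'(2a)=a/(1+2a)>0$) and spell out the bootstrap explicitly, whereas the paper compresses the continuity step into a single sentence.
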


\begin{proof}
Consider the function $f_b(x) : = b x^3 + bx^2 - x + a$.
 Denoted $\bar b := \f 1 {8
  a^2 + 4 a}$, one has 
$
f_{\bar b} (2a) \ = 0
$.

\n
If $b \ls \bar b$, then $f_b (x) \ls f_{\bar b} (x)$ for any $x >
0$. Besides, notice that $f_b (0) = a > 0$, then there must be a point $\tilde x
\in (0, 2a]$ s.t. $f_b (\tilde x) = 0$. Finally, since the
function $y$ is continuous, in order to satisfy the constraint
\eqref{constr} one must have
$$
y (t) \ls \tilde x \ls 2a, \qquad {\mbox{for any }} \ t \in [t_a, t_b].
$$
\end{proof}

\begin{prop}
Let $\Psi_t$ be the solution of the equation \eqref{intform}, and
$\Phi_t$ be the solution of equation \eqref{duhamsol}.
There exists $C > 0$, independent of $t$ and $v$, such that
\label{fase1}
\begin{equation}
 \label{error1}
\| \Psi_t - \Phi_t \| \ \leqslant \ C e^{-v^{1 - \delta}}
\end{equation}
for any $t \in [0, t_1]$.
\end{prop}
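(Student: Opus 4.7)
The plan is to subtract the Duhamel identities \eqref{intform} and \eqref{duhamsol} and to control each of the four resulting pieces. Setting $W_t := \Psi_t - \Phi_t$, I write
\begin{equation*}
\begin{split}
W_t \ = \ & e^{-iHt}(\Psi_0 - \Phi_0) \ + \ (e^{-iHt} - e^{-iH_1 t})\Phi_0 \\
& \ + \ i \int_0^t e^{-iH(t-s)} \big(|\Psi_s|^2 \Psi_s - |\Phi_s|^2 \Phi_s\big)\, ds \\
& \ + \ i \int_0^t \big(e^{-iH(t-s)} - e^{-iH_1(t-s)}\big) |\Phi_s|^2 \Phi_s \, ds.
\end{split}
\end{equation*}
The first summand is controlled by unitarity of $e^{-iHt}$ together with the elementary estimate $\|\Psi_0 - \Phi_0\|_{L^2} \leqslant C e^{-x_0}$: the difference is supported on the cutoff region $\{x_1 \leqslant 2\}$ of the first edge plus the second-edge tail $e^{iv x_2/2}\phi(x_2 + x_0)$, each with $L^2$-norm of that order by the exponential decay of $\phi$.

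The second and fourth summands carry the entire effect of the vertex interaction. Comparing \eqref{freepro}, \eqref{propdelta} and \eqref{propdeltaprime} with the two-edge propagator \eqref{twoedgepro-j} one sees that $e^{-iHt} - e^{-iH_1 t}$ acts solely through the kernel $U_t^+$ multiplied by a bounded $3 \times 3$ matrix, together (in the $\delta$ and $\delta'$ cases) with an extra convolution in an auxiliary variable $u$ that is controlled uniformly in $\alpha,\beta$ thanks to the Remark following Proposition \ref{prop:stric}. Plugging in $\Phi_0$ or $|\Phi_s|^2 \Phi_s$, whose components reproduce the soliton ansatz $e^{\mp ivy/2}\phi(y \mp x_0 \pm vs)$ or its cube, both summands reduce after a straightforward matrix computation to linear combinations of the quantities $K_1, K_2$ of Lemma \ref{caudale} applied with $t_a = 0$ and $t_b \leqslant t_1$. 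Lemma \ref{caudale} then delivers an $X_{0, t_1}$-bound of order $C e^{-x_0 + v t_1} = C e^{-v^{1-\delta}}$.

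For the third, purely nonlinear summand I invoke the inhomogeneous Strichartz estimate \eqref{stric2} in an admissible pair such as $(p, r) = (6, 6)$, use the pointwise bound $\big||\Psi|^2\Psi - |\Phi|^2\Phi\big| \leqslant C (|\Psi|^2 + |\Phi|^2)|W|$, and distribute the factors by H\"older in space ($1/6 + 1/6 + 1/2 = 5/6$) and in time, producing
\begin{equation*}
\bigg\| \int_0^\cdot e^{-iH(\cdot - s)} \big(|\Psi|^2 \Psi - |\Phi|^2 \Phi\big)\, ds \bigg\|_{X_{0, t}} \ \leqslant \ C\, t^{1/2} \big( \|\Psi\|_{X_{0, t}}^2 + \|\Phi\|_{X_{0, t}}^2 \big) \|W\|_{X_{0, t}}.
\end{equation*}
Collecting the four estimates produces a functional inequality $y(t) \leqslant a + b\, y^2(t) + b\, y^3(t)$ with $y(t) = \|W\|_{X_{0, t}}$ and $a \leqslant C e^{-v^{1-\delta}}$, after estimating $\|\Psi\|_X \leqslant \|\Phi\|_X + \|W\|_X$. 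The main obstacle I anticipate is closing this bootstrap uniformly on the possibly $v$-growing interval $[0, t_1]$; I would resolve it by iterating on subintervals of fixed length $\tau$, on which $\|\Phi\|_X$ is controlled uniformly in $v$ (the soliton profile is $v$-independent) so the smallness hypothesis of Lemma \ref{blocco} is easily met. The cumulative tail error is a geometric series with ratio $e^{-v\tau}$ dominated by the last term $e^{-v^{1-\delta}}$, yielding \eqref{error1} after extracting the $L^\infty_t L^2$ component of the $X$-norm.
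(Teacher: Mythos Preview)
Your decomposition, use of Lemma~\ref{caudale} for the vertex terms, Strichartz-plus-H\"older treatment of the nonlinear difference, and subinterval iteration via Lemma~\ref{blocco} are exactly the paper's approach. One point deserves more care than your last two sentences give it: iterating Lemma~\ref{blocco} over $N\sim (x_0-v^{1-\delta})/(v\tau)$ subintervals yields
\[
\|W\|_{X_{j\tau,(j+1)\tau}} \ \leqslant \ (2\widetilde C)^{j+1}\|W_0\| \ + \ \sum_{k=0}^{j}(2\widetilde C)^{j+1-k}\,\|F(k\tau,\cdot)\|_{X_{k\tau,(k+1)\tau}},
\]
so both the initial error $\|W_0\|\leqslant Ce^{-x_0}$ and the source terms pick up an amplification $(2\widetilde C)^{N}$ or $(2\widetilde C)^{N-k}$ from the iteration itself, not just the ``tail'' growth $e^{vk\tau}$ from Lemma~\ref{caudale}. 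Your phrase ``geometric series with ratio $e^{-v\tau}$'' accounts only for the latter. The paper checks separately (see \eqref{zuerst}) that $(2\widetilde C)^{N}e^{-x_0}\leqslant Ce^{-v^{1-\delta}}$, which holds because $N\ln(2\widetilde C)\leqslant (x_0-v^{1-\delta})\ln(2\widetilde C)/(v\tau) < x_0-v^{1-\delta}$ for $v$ large; the same inequality then also shows that the combined ratio $e^{v\tau}/(2\widetilde C)$ in the source-term sum is large, so that sum is indeed dominated by its last term. With that addition your sketch is complete and coincides with the paper's proof.
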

\begin{proof}

\n
Let us define $\Xi_t : = \Psi_t - \Phi_t$, and fix $t_a \in [0,
t_1]$.
Then, from equations \eqref{intform}
and \eqref{duhamsol}, we have
\begin{equation*}
\begin{split}
\Xi_t \ = & \ e^{-iH (t-t_a)} \Xi_{t_a} + (e^{-iH (t-t_a)} - e^{-i H_1 (t-t_a)})
\Phi_{t_a} + i \int_{t_a}^t (e^{-iH (t-s)} - e^{-i H_1 (t-s)}) |
\Phi_s |^2 \Phi_s \, ds\\
& \ + i \int_{t_a}^t e^{-iH (t-s)} \left[ | \Xi_s |^2 \Xi_s + | \Xi_s
  |^2 \Phi_s + | \Phi_s |^2 \Xi_s + 2 {\mbox{Re}} ( \overline{\Xi_s} \Phi_s) \Xi_s
+ 2 {\mbox{Re}} ( \overline{\Xi_s} \Phi_s) \Phi_s \right]\\
= & \ e^{-iH (t-t_a)} \Xi_{t_a} + F(t_a,t)\\
& + i \int_{t_a}^t e^{-iH (t-s)} \left[ | \Xi_s |^2 \Xi_s + | \Xi_s
  |^2 \Phi_s + | \Phi_s |^2 \Xi_s + 2 {\mbox{Re}} ( \overline{\Xi_s} \Phi_s) \Xi_s
+ 2 {\mbox{Re}} ( \overline{\Xi_s} \Phi_s) \Phi_s \right],
\end{split}
\end{equation*}
where we defined
\begin{equation*}
F (t_a, t) \ : = \ (e^{-iH (t - t_a)} - e^{-i H_1 (t - t_a)})
\Phi_{t_a} + i \int_{t_a}^{t} (e^{-iH (t-s)} - e^{-i H_1
  (t -s)}) |
\Phi_s |^2 \Phi_s \, ds\,.
\end{equation*}
Let us fix $t_b \in [t_a,t_1]$, and denote
$X_{t_a, t_b} = L^{\infty}_{[t_a, t_b] } L^2
\cap L^{6}_{[t_a, t_b] } L^6$.
Then
\begin{equation}
\begin{split}
\label{split}
\| \Xi \|_{X_{t_a, t_b}} \ \ls &
 \ \| e^{-iH (\cdot -t_a)} \Xi_{t_a} \|_{X_{t_a, t_b}} + \| F(t_a, \cdot) \|_{X_{t_a, t_b}} \\
 &+ \left\| \int_{t_a}^\cdot e^{-iH (\cdot-s)} \left[ | \Xi_s |^2 \Xi_s + | \Xi_s
  |^2 \Phi_s + | \Phi_s |^2 \Xi_s + 2 {\mbox{Re}} ( \overline{\Xi_s} \Phi_s) \Xi_s
+ 2 {\mbox{Re}} ( \overline{\Xi_s} \Phi_s) \Phi_s \right] \right\|_{X_{t_a, t_b}} .
\end{split}
\end{equation}
Using \eqref{stric1} the first term on the r.h.s can be estimated as
$$
 \| e^{-iH (\cdot -t_a)} \Xi_{t_a} \|_{X_{t_a, t_b}} \ \ls \ C \|
 \Xi_{t_a} \|.
$$

\n
We estimate the integral term on the r.h.s. of \eqref{split} also
using Strichartz's estimates. Let us analyse in detail the cubic term.
Since both pairs of indices $(\infty ,2 )$ and $(6,6)$ fulfil \eqref{admissible},
in \eqref{stric2} we can choose $q=k= 6/5$ and obtain
\begin{equation}
\lf\|
\int_{t_a}^{\cdot} e^{-iH (\cdot-s)} | \Xi_s |^2 \Xi_s \,ds
\ri\|_{X_{t_a, t_b}}
\leqslant C \| | \Xi_{\cdot} |^2 \Xi_{\cdot} \|_{ \LT{6/5}{t_a,t_b} \LG{6/5} }.
\label{yume}
\end{equation}
Moreover, by standard H\"older estimates,
\begin{equation}
\label{hoel1}
\| | \Xi_{\cdot} |^2 \Xi_{\cdot} \|_{ \LT{6/5}{t_a,t_b} \LG{6/5} }\ \leqslant
\ \left\| \| \Xi_\cdot \|^2_{L^6} \| \Xi_\cdot \|_{L^2} \right\|_{\LT{6/5}{t_a,t_b}}
\ \ls \ \| \Xi \|^2_{ \LT{6}{t_a,t_b} \LG{6} } \| \Xi \|_{ \LT{2}{t_a,t_b} \LG{2} }
\ \ls \ ( t_b - t_a)^{\f 1 2} \| \Xi \|^2_{ \LT{6}{t_a,t_b} \LG{6} }
\| \Xi \|_{ \LT{\infty}{t_a,t_b} \LG{2} }.
\end{equation}
Then,
by \eqref{yume} and \eqref{hoel1},
\begin{equation}
\label{eq-dopo-hoel1}
\lf\|
\int_{t_a}^{\cdot} e^{-iH (\cdot-s)} | \Xi_s |^2 \Xi_s \,ds
\ri\|_{X_{t_a, t_b}}
\leqslant C
(t_b-t_a )^{1/2} \|\Xi\|^3_{X_{t_a, t_b}}.
\end{equation}
Notice that the constant $C$ can be chosen independently of $t_a,
t_b$, and of the boundary condition at the vertex.

\n
The other terms in the integral on the r.h.s. of \eqref{split} can be
estimated analogously. One ends up with
\begin{equation*}
\begin{split}
\| \Xi \|_{X_{t_a, t_b}} \ls & \ C
 \ \| \Xi_{t_a} \| + \| F(t_a, \cdot) \|_{X_{t_a, t_b}} + C (t_b -
 t_a)^{\f 1 2} \| \Xi \|^3_{X_{t_a, t_b}} + C (t_b -
 t_a)^{\f 2 3} \| \Xi \|^2_{X_{t_a, t_b}} + C (t_b -
 t_a)^{\f 5 6} \| \Xi \|_{X_{t_a, t_b}},
\end{split}
\end{equation*}
where the arising norms of $\Phi$ were absorbed in the constant $C$.

\n
If $t_b$ and $t_a$ are sufficiently close, then $C (t_b -
 t_a)^{\f 5 6} < 1/2$. Furthermore, since the quantity $t_b - t_a$ is
 upper bounded, one can estimate $(t_b -
 t_a)^{\f 2 3}$ by $C (t_b -
 t_a)^{\f 1 2}$. So, for some $ \widetilde C>0$
\begin{equation}
\label{split3}
\begin{split}
\| \Xi \|_{X_{t_a, t_b}} \ls & \ \widetilde C
 \ \| \Xi_{t_a} \| + \widetilde C \ \| F(t_a, \cdot) \|_{X_{t_a, t_b}} + \widetilde C (t_b -
 t_a)^{\f 1 2} \left[ \| \Xi \|^3_{X_{t_a, t_b}} + \| \Xi \|^2_{X_{t_a, t_b}} \right].
\end{split}
\end{equation}
Applying lemma \ref{blocco} to the function $y(t) = \| \Xi \|_{X_{t_a, t}}$, which is
continuous and monotone, one has that, if
$$
t_b - t_a \ \ls \ \left( 8 \widetilde C^3 \ (
 \ \| \Xi_{t_a} \| + \| F(t_a, \cdot) \|_{X_{t_a, t_b}})^2 + 4
  \widetilde C^2
 \ ( \| \Xi_{t_a} \| + \| F(t_a, \cdot) \|_{X_{t_a, t_b}} ) \right)^{-2},
$$
then $\| \Xi \|_{X_{t_a, t_b}} \ls \ 2 \widetilde C
 \ \| \Xi_{t_a} \| + 2 \wt C \ \| F(t_a, \cdot) \|_{X_{t_a,
     t_b}}$. From the immediate estimates
\begin{equation*}
\| \Xi_{t_a} \| \ \ls \ 4, \qquad
\| F(t_a, \cdot) \|_{X_{t_a, t_b}} \ \ls \ \| F(0, \cdot) \|_{X_{0, t_1}},
\end{equation*}
if one denotes
\begin{equation*}
\tau \ : = \ \left( 8 \widetilde C^3 \ (
 4 + \| F(t_a, \cdot) \|_{X_{0, t_1}})^2 + 4
  \widetilde C^2
 \ ( 4 + \| F(t_a, \cdot) \|_{X_{0, t_1}} ) \right)^{-2},
\end{equation*}
then for any $t \in [0, t_1)$ 
$\| \Xi \|_{X_{t, t + \tau}} \ls \ 2 \widetilde C
 \ \| \Xi_{t} \| + 2 \wt C \ \| F(t, \cdot) \|_{X_{t, t+ \tau}}$.

We divide the interval $[0, t_1]$ in $N + 1$ subintervals as follows
$$
[0, t_1] \ = \ \left( \cup_{j=0}^{N-1} [j \tau,
(j+1) \tau] \right) \cup [N \tau, t_1],
$$
where
$$
N : = \left[ \f {x_0 - v^{1 - \delta}}{v \tau} \right], \qquad
[ \cdot ] = {\mbox{ integer part.}}
$$
Making use of lemma \ref{blocco}, and noting that $\| \Xi_{(j+1) \tau}
\| \ls \| \Xi \|_{X_{j\tau, (j+1) \tau}}$,
one proves by induction
that
\begin{equation}
\label{indu}
\begin{split}
\| \Xi \|_{X_{j\tau, (j+1) \tau}} \ & \ls \ (2 \widetilde C)^{j+1} \| \Xi_0
\| + \sum_{k = 0}^{j} (2 \widetilde C)^{j+1-k} \| F (k\tau, \cdot)
\|_{X_{k\tau, (k+1) \tau}},
\qquad j = 0, \dots , N - 1,
\\
\| \Xi \|_{X_{N\tau, t_1}} \ & \ls \ (2 \widetilde C)^{N+1} \| \Xi_0
\| + \sum_{k = 0}^{N-1} (2 \widetilde C)^{N+1-k} \| F (k\tau, \cdot) \|_{X_{k\tau, (k+1) \tau}}
+ 2 \wt C \ \| F (N\tau, \cdot) \|_{X_{N\tau, t_1 }}\,,
\end{split}
\end{equation}
where the last inequality comes from the fact that $t_1 - N \tau \ls
\tau$, so lemma \ref{blocco} applies to this last step too.

\n
The norm of $\Xi$ as a function of the whole time interval $[0,
t_1]$ can be estimated by
\begin{equation}
\label{almost}
\begin{split}
\| \Xi \|_{X_{0,t_1}} \ \ls & \ \sum_{j=0}^{N-1}
\| \Xi \|_{X_{j\tau, (j+1) \tau}} + \| \Xi \|_{X_{N\tau, t_1}} \\ \ls & \
\sum_{j=0}^{N} (2 \widetilde C)^{j+1} \| \Xi_0
\| + \sum_{j=0}^{N} \sum_{k = 0}^{j} (2 \widetilde C)^{j+1-k} \|
F (k\tau, \cdot) \|_{X_{k\tau, \min\{t_1, (k+1) \tau\}}} \,.
\end{split}
\end{equation}

In order to prove the theorem using \eqref{almost}, we need more precise
estimates for $\| \Xi_0 \|$ and $\| F (j\tau, \cdot) \|_{X_{j\tau, (j+1) \tau}}$.

\n
First,
\begin{equation}
 \| \Xi_0 \|^2 \leqslant \ \int_0^{2} \phi^2
(x-x_0) \, dx + \int_0^\infty \phi^2 (x + x_0)
= \ 2 (1 - {\rm{tanh}} (x_0 - 2)) \ \leqslant \ C e^{-2 x_0}.
 \label{tamago}
\end{equation}
To estimate $\| F (t_a, \cdot) \|_{X_{t_a, t_b}}$ we
specialize $F$ to the three cases under analysis.
From the explicit propagators \eqref{freepro}, \eqref{propdelta},
\eqref{propdeltaprime}, we get
\begin{equation*}
\begin{split}
F_F(t_a, t)\ = & \ \f 1 3 \,
U_{t-t_a}^+ \left( \begin{array} {ccc}
-1 & -1 & 2 \\ -1 & -1 & 2 \\
2 & 2 & 2 \end{array} \right) \Phi_{t_a} + \f i 3 \int_{t_a}^t
U_{t-s}^+ \left( \begin{array} {ccc}
-1 & -1 & 2 \\ -1 & -1 & 2 \\
2 & 2 & 2 \end{array} \right) | \Phi_s |^2 \Phi_s \, ds\, \\
F_\delta^{\tilde \alpha v} (t_a,t)\ = & \ F_F(t_a,t) - \f 2 9 \tilde
\alpha v
\int_0^{+\infty}du 
\, e^{-\f {\tilde \alpha } 3 u v} \left(U_{t-t_a}^+ \, {\mathbb J} \,
\Phi_{t_a}\right) (\cdot + u) \\ &
- i \f 2 9 \tilde \alpha v \int_{t_a}^t ds \int_0^{+\infty}du
\, e^{-\f {\tilde \alpha} 3 u v} \left( U_{t-s}^+ \, {\mathbb J} \, | \Phi_s |^2
  \Phi_s \right) (\cdot + u)
\\
F_{\delta^\prime}^{\tilde \beta/v} (t_a, t)\ = &
U_{t-t_a}^+ \left( \begin{array} {ccc}
1 & -1 & 0 \\ -1 & 1 & 0 \\
0 & 0 & 2 \end{array} \right) \Phi_{t_a} + i \int_{t_a}^t
U_{t-s}^+ \left( \begin{array} {ccc}
1 & -1 & 0 \\ -1 & 1 & 0 \\
0 & 0 & 2 \end{array} \right) | \Phi_s |^2 \Phi_s \, ds\, \\
& - \f {2v} {\tilde \beta} \int_0^{+\infty}du 
\, e^{-\f 3 {\tilde \beta} v u} \left(U_{t-t_a}^+ \, {\mathbb J} \, \Phi_{t_a}
\right) (\cdot + u)
- i \f {2 v} {\tilde \beta} \int_{t_a}^t ds \int_0^{+\infty}du
\, e^{-\f {3} {\tilde \beta} v u} \left(U_{t-s}^+ \, {\mathbb J} \, | \Phi_s |^2
  \Phi_s \right) (\cdot + u)
\end{split}
\end{equation*}
It is immediately seen that
$$
F_F (t_a, t,x ) \ = \ \f 1 3 \left( \begin{array}{c}
- K_1 (x,t) - K_2 (x,t) \\ - K_1 (x,t) - K_2 (x,t) \\
2 K_1 (x,t) + 2 K_2 (x,t),
\end{array} \right) ,
$$
where $K_1$ and $K_2$ were defined in \eqref{cappa}.
Lemma \ref{caudale} yields
\begin{equation}
\| F_F (t_a, \cdot) \|_{X_{t_a, t_b}} \
\leqslant \ C e^{-x_0 + vt_b}.
\label{aka}
\end{equation}

\n
Furthermore, since
\begin{equation*} \begin{split}
F_\delta^{\tilde \alpha v} (t_a,t,x) \ = \ F_F (t_a,t,x ) - \f 2 9
{\tilde \alpha v} \int_0^{+\infty} du \,
 e^{-\f {\tilde \alpha} 3 u v} \left( \begin{array}{c} K_1 (x+u,t) + K_2 (x+u,t) \\
  K_1 (x+u,t) + K_2 (x+u,t) \\ K_1 (x+u,t) + K_2 (x+u,t)\end{array} \right),
\end{split} \end{equation*}
after the change of variable $u \to u v$ we conclude
\begin{equation}
\| F_\delta^{\tilde \alpha v} (t_a, \cdot) \|_{X_{t_a, t_b}} \
\leqslant \ C e^{-x_0 + vt_b}
\label{aka2} .
\end{equation}

\n
Finally,
\begin{equation*} \begin{split}
F_{\delta^\prime}^{\tilde \beta / v} (t_a,t,x) \ = \ \left( \begin{array}{c} K_1 (x,t) - K_2 (x,t)
\\ - K_1 (x,t) + K_2 (x,t) \\ 0
\end{array} \right) - \f {2v} {\tilde \beta} \int_0^{+\infty} du \,
e^{-\f 3 {\tilde \beta} uv}
\left( \begin{array}{c} K_1 (x+u,t) + K_2 (x+u,t) \\
  K_1 (x+u,t) + K_2 (x+u,t) \\ K_1 (x+u,t) + K_2 (x+u,t)\end{array} \right)
\end{split} \end{equation*}
yields, after the change of variable $u \to uv$,
\begin{equation}
\| F_{\delta^\prime}^{\tilde \beta / v} (t_a, \cdot) \|_{X_{t_a, t_b}} \
\leqslant \ C e^{-x_0 + vt_b}.
\label{aka3}
\end{equation}

Now we go back to estimate \eqref{almost}. Due to \eqref{tamago}, \eqref{aka},
\eqref{aka2}, and \eqref{aka3}, and estimating any geometric sum by
the double of its largest term, which is justified if the rate of the
sum is not less than two, we get
\begin{equation}
\label{ziemlich}
\begin{split}
\| \Xi \|_{X_{0,t_1}} \ \ls & \ 2 C \wt C e^{-x_0}
\sum_{j = 0}^N ( 2 \wt C )^j
+ 2 C \wt C e^{-x_0 + v \tau}
  \sum_{j=0}^{N-1} (2 \widetilde C)^{j} \sum_{k = 0}^{j} \left( \f{e^{v\tau}}
    {2 \widetilde C}\right)^{k} \\ & \
+ C e^{-x_0 + v \tau}
  (2 \widetilde C)^{N+1} \sum_{k = 0}^{N-1} \left( \f{e^{v\tau}}
    {2 \widetilde C}\right)^{k} + 2 C \wt C e^{-x_0 + v t_1} \\
 \ \ls & \ 2 C (2 \wt C)^{N+1} e^{-x_0}
  + 4 C \wt C e^{-x_0 + v \tau}
  \sum_{j=0}^{N-1} e^{j v\tau}
+ 8 C \wt C^2 e^{-x_0 + N v \tau}
    + 2 C \wt C e^{-x_0 + v t_1}
\\
 \ \ls & \ 2 C (2 \wt C)^{N+1} e^{-x_0}
  + 8 C \wt C e^{-x_0 + N v \tau}
  + 8 C \wt C^2 e^{-x_0 + N v \tau}
    + 2 C \wt C e^{-x_0 + v t_1}.
\end{split} \end{equation}
Concerning the first term on the r.h.s. of \eqref{ziemlich}, we have
\begin{equation}
\label{zuerst}
 (2 \wt C)^{N} e^{-x_0} \ \ls \ \left( \f{{2 \wt C}^{\f 1 {v
         \tau}}} e \right)^{x_0 - v^{1 - \delta}} e^{ - v^{1 -
     \delta}} \ \ls \ C e^{ - v^{1 -
     \delta}}.
\end{equation}
From \eqref{ziemlich} and \eqref{zuerst} we get $\| \Xi \|_{X_{0,
    t_1}} \ls C e^{ - v^{1 -
     \delta}}$, so \eqref{error1} follows and the proof is concluded.
\end{proof}

\subsection{Phase 2}

We call ``phase 2'' the evolution of the system in the time interval $( t_1 , t_2 )$ with
$t_2= \f {x_0} v + v^{- \delta}$. Let us define the vector
\begin{equation*}
\Phi^S_t:=\Phi^{S,in}_t+\Phi^{S,out}_t
\end{equation*}
with
\begin{equation*}
\Phi^{S,in}_t:=
\begin{pmatrix}
\phisol{x_0,-v}(t)\\ \\
0\\ \\
0
\end{pmatrix}
\;,\qquad
\Phi^{S,out}_t:=
\begin{pmatrix}
\tilde r\,\phisol{-x_0,v}(t)\\ \\
\tilde t\,\phisol{-x_0,v}(t)\\ \\
\tilde t\, \phisol{-x_0,v}(t)
\end{pmatrix}
\,,
\end{equation*}
where the function $\phi_{x_0,v}$ was defined in equation
\eqref{fix0t} and the reflection and transmission coefficients,
$\tilde r$ and $\tilde t$, must be chosen accordingly to the
Hamiltonian $H$ taken in the equation \eqref{intform}. The explicit
expressions of $\tilde r$ and $\tilde t$ in all the cases
$H=H_F,\,H^{\tilde \al v}_\de,\, H^{\tilde \beta / v}_{\de'}$ can be read in
formula \eqref{tildscoeff}.
\p
\begin{figure}[t!]
\begin{center}
\includegraphics[width=0.9\textwidth]{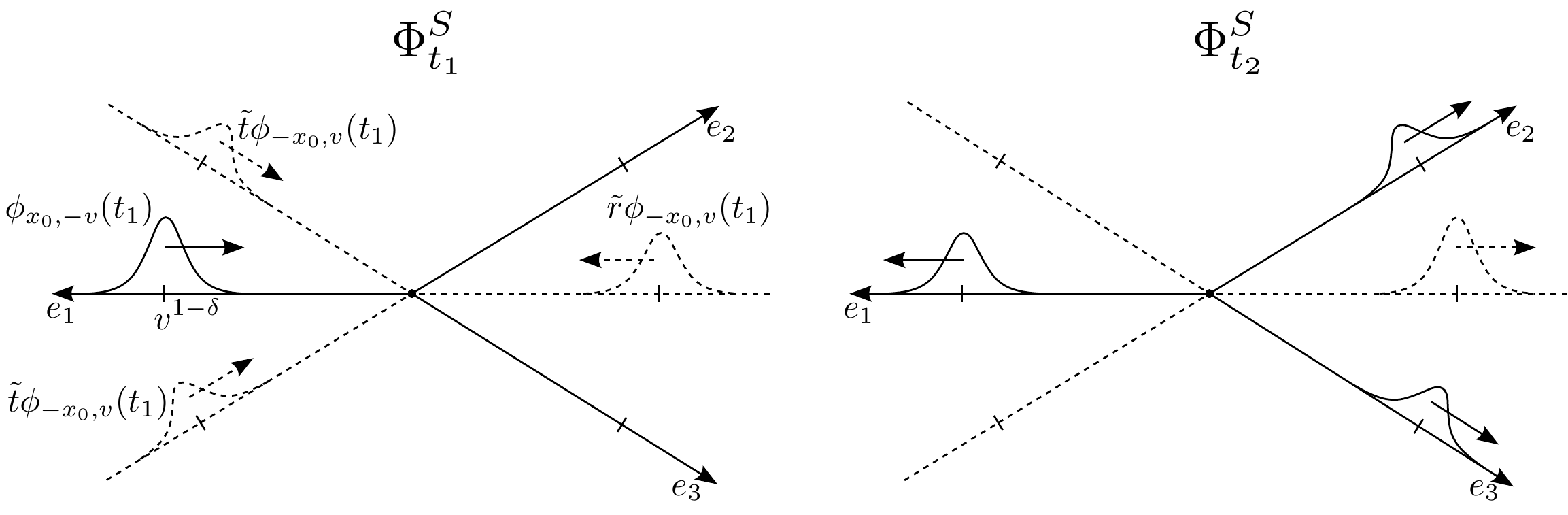}
\caption{\label{fig2}We represent the state $\Phi_t^S$ for $t=t_1$ and
$t=t_2$. The continuous lines stand for the real graph, the dashed lines
represent the extension of each edge to negative values of $x$. At time $t=t_1$
only the vector $\Phi^{S,in}$ is significantly supported on the real graph. At
time $t=t_2$ only the state $\Phi^{S,out}$ has relevant support on the real
graph, while the body of the soliton in $\Phi^{S,in}$ has moved to the negative
axis associated with the first edge. }
\end{center}
\end{figure}

\begin{prop}
\label{lemma2}
Let $t\in \left( t_1,t_2 \right)$ then there exists $v_0>0$ such that for all $v>v_0$
\begin{equation}
\label{prop2.1}
\|\Psi_{t} - \Phi^S_t\|\leqslant C_1 v^{-\f\de2}\,,
\end{equation}
moreover
\begin{equation}
\label{prop2.2}
\|\Psi_{t_2}-\Phi^{S,out}_{t_2}\|\leqslant C_2v^{-\f\de 2}\,,
\end{equation}
where $C_1$ and $C_2$ are positive constants which do not depend on $t$ and $v$.
\end{prop}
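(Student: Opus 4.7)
The strategy is to introduce the linear graph evolution $e^{-iH(t-t_1)}\Psi_{t_1}$ as an intermediate reference and to split the error as
\begin{equation*}
\Psi_t - \Phi^S_t \;=\; \bigl(\Psi_t - e^{-iH(t-t_1)}\Psi_{t_1}\bigr) \,+\, e^{-iH(t-t_1)}\bigl(\Psi_{t_1}-\Phi_{t_1}\bigr) \,+\, \bigl(e^{-iH(t-t_1)}\Phi_{t_1} - \Phi^S_t\bigr).
\end{equation*}
The middle summand is $O(e^{-v^{1-\delta}})$ by Proposition \ref{fase1} and the unitarity of $e^{-iH(t-t_1)}$, so it is negligible. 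The first and third summands carry the two genuine sources of error, nonlinear and scattering respectively.

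For the first summand, Duhamel gives $\Psi_t - e^{-iH(t-t_1)}\Psi_{t_1} = i\int_{t_1}^t e^{-iH(t-s)}|\Psi_s|^2\Psi_s\,ds$, and I estimate its $L^2(\GG)$-norm by the Strichartz machinery of Section \ref{sec:summary}, in exact parallel with the argument of Proposition \ref{fase1}. Since the interval $[t_1,t_2]$ has length $2v^{-\delta}$, a short-time bootstrap in the Strichartz space $X_{t_1,t_2}=L^\infty_{[t_1,t_2]}L^2\cap L^6_{[t_1,t_2]}L^6$ bounds $\|\Psi\|_{X_{t_1,t_2}}$ by $2C\|\Psi_{t_1}\|$, which is $O(1)$ by conservation of the $L^2$-norm. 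The cubic inhomogeneous Strichartz estimate \eqref{stric2} then produces the factor $(t_2-t_1)^{1/2}=\sqrt{2}\,v^{-\delta/2}$, so that $\|\Psi_t - e^{-iH(t-t_1)}\Psi_{t_1}\|\leq C v^{-\delta/2}$. This is the rate-determining step.

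For the third summand, I use that $\Phi_{t_1}$ is, up to an exponentially small tail on edge $2$, a single soliton $\psi^{sol}_{t_1}(x)=e^{-iv^2t_1/4}e^{-ivx/2}e^{it_1}\phi(x-v^{1-\delta})$ placed on edge $1$. Applying the explicit propagator formulas \eqref{freepro}, \eqref{propdelta}, \eqref{propdeltaprime} to $\Phi_{t_1}$ reduces the computation to the evaluation of $U^\pm_{t-t_1}\psi^{sol}_{t_1}$ on $\RE^+$. The key identity is that $U_{t-t_1}^-$ acts on $\psi^{sol}_{t_1}$ as the restriction to $\RE^+$ of the free line evolution of $\chi_+\psi^{sol}_{t_1}$, so the NLS soliton identity on the line
\begin{equation*}
U_{t-t_1}\phisol{x_0,-v}(\cdot,t_1)=\phisol{x_0,-v}(\cdot,t)-i\int_{t_1}^t U_{t-s}|\phisol{x_0,-v}(\cdot,s)|^2\phisol{x_0,-v}(\cdot,s)\,ds
\end{equation*}
combined with the exponentially small left tail of $\phisol{x_0,-v}(t_1)$ yields $\|U_{t-t_1}^-\psi^{sol}_{t_1}-\phisol{x_0,-v}(t)|_{\RE^+}\|=O(v^{-\delta})$. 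After the reflection change of variable $y\mapsto -y$, which maps $\psi^{sol}_{t_1}$ to $\phisol{-x_0,v}(t_1)$ modulo a further exponential tail, the same computation gives $\|U_{t-t_1}^+\psi^{sol}_{t_1}-\phisol{-x_0,v}(t)|_{\RE^+}\|=O(v^{-\delta})$. Substituted into the Kirchhoff part of the propagator this reproduces, on each edge, the combination of ingoing and outgoing solitons with the linear coefficients $\tilde r_F=-1/3$ and $\tilde t_F=2/3$.

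In the $\delta$ and $\delta'$ cases the propagators carry the additional integrals in \eqref{propdelta}--\eqref{propdeltaprime}, with prefactor proportional to $v$ and exponential weight $e^{-\tilde\alpha v u/3}$ (respectively $e^{-3vu/\tilde\beta}$). The substitution $u\mapsto u/v$ concentrates these Laplace-type integrals at $u=0$; combined with the oscillatory phase $e^{ivx/2}$ of $U^+_{t-t_1}\psi^{sol}_{t_1}$, a direct Laplace asymptotic (Taylor expansion of the slowly varying envelope of the outgoing soliton) gives a leading term of the form $U^+_{t-t_1}\psi^{sol}_{t_1}/[v(\tilde\alpha/3-i/2)]$ (resp.\ analogous expression), with $L^2$ remainder $O(v^{-2})$. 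Multiplied by the $v$-prefactor this reproduces exactly the corrections $\tilde r_H-\tilde r_F$ and $\tilde t_H-\tilde t_F$ recorded in \eqref{tildscoeff}, with total remainder $O(v^{-1})$. Collecting, $\|e^{-iH(t-t_1)}\Phi_{t_1}-\Phi^S_t\|\leq C(v^{-\delta}+v^{-1}+e^{-v^{1-\delta}})$, and combined with the first-summand bound the splitting proves \eqref{prop2.1}. Assertion \eqref{prop2.2} then follows from \eqref{prop2.1} at $t=t_2$ together with $\|\Phi^{S,in}_{t_2}\|_{L^2(\GG)}=O(e^{-v^{1-\delta}})$, which holds because the center $x_0-vt_2=-v^{1-\delta}$ of $\phisol{x_0,-v}(\cdot,t_2)$ lies to the left of the vertex. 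The main technical obstacle is the $L^2$-justification of the Laplace asymptotics in the $\delta,\delta'$ cases with a remainder uniform in $x\in\RE^+$, since the outgoing soliton carries a fast phase of order $v$ and naive termwise bounds on the envelope derivatives could blow up.
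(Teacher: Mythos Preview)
Your three-term decomposition and the treatment of the first two summands coincide with the paper's proof (the paper inserts $\Phi^{S,in}_{t_1}$ rather than $\Phi_{t_1}$, but as you note the difference is an exponentially small tail on edge $2$). The Strichartz bootstrap over $[t_1,t_2]$ producing $v^{-\delta/2}$ is exactly the paper's argument.

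The genuine divergence is in the third summand, specifically in how the Laplace-type integral
\[
\frac{\tilde\alpha v}{3}\int_0^\infty e^{-\tilde\alpha v u/3}\,[U^+_{t-t_1}\psi^{sol}_{t_1}](x+u)\,du
\]
is evaluated. You propose a Laplace/Taylor expansion of the integrand in $u$, after extracting the fast phase $e^{ivu/2}$. The paper instead exploits the exact factorisation $\psi^{sol}_{t_1}(y)=e^{-ivy/2}g(y)$ and integrates by parts in $y$ inside the $U^+$ kernel; since $\partial_y U_{t-t_1}(x+y+u)=\partial_u U_{t-t_1}(x+y+u)$, a second integration by parts in $u$ produces a linear equation for the Laplace integral whose solution is the closed formula
\[
va\int_0^\infty e^{-uva}[U^+_{t-t_1}\psi^{sol}_{t_1}](x+u)\,du=\frac{2ia}{1+2ia}\,\phi^+_t(x)+\frac{A_1(x,t)+A_3(x,t)}{1+2ia},
\]
with $A_1,A_3$ explicitly bounded in $L^2$ by $O(v^{-1/2}e^{-v^{1-\delta}})$ and $O(v^{-1})$. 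This yields the reflection/transmission coefficients \eqref{tildscoeff} with no asymptotic expansion at all.

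Your worry about the fast phase is in fact overstated: after $u\mapsto u/v$ the phase $e^{ivu/2}$ becomes $e^{iu/2}$ and merges with $e^{-\tilde\alpha u/3}$ into a fixed decaying oscillatory weight, while the Taylor remainder involves only $\|\phi'\|_{L^2}/v$. The real gap in your plan is earlier: $\phi^+_t=U^+_{t-t_1}\psi^{sol}_{t_1}$ is \emph{not} literally of the form $e^{ivx/2}g(x)$, so you must first replace it by the exact outgoing soliton $\phi_{-x_0,v}(\cdot,t)$ and use that the Laplace averaging operator has $L^2\to L^2$ norm $\leqslant 1$ to propagate the $O(v^{-\delta})$ error from that replacement. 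With that extra step your route closes, but the paper's integration-by-parts device is cleaner, delivering an exact identity and sidestepping both issues at once.
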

\begin{proof}
From the definition of $\Psi_t$, see equation \eqref{intform}, we have
\begin{equation*}
\Psi_{t}=
\ e^{-iH(t-t_1)} \Psi_{t_1} + i \int_{t_1}^{t} ds \, e^{-i
  H(t-s)} | \Psi_s |^2 \Psi_s\,.
\end{equation*}
We start with the trivial estimate
\begin{equation}
\label{ineq1}
\begin{aligned}
&\big\|\Psi_{t}-\Phi^S_t\big\|\\
&\leqslant
\big\|\Psi_{t} - e^{-iH (t-t_1)} \Psi_{t_1}\big\|+\big\|e^{-iH(t-t_1)}\Psi_{t_1}-e^{-iH(t-t_1)}\Phi_{t_1}^{S,in}\big\|+\big\|e^{-iH(t-t_1)}\Phi_{t_1}^{S,in}-\Phi^S_t\big\|
\end{aligned}
\end{equation}
and estimate the r.h.s. term by term. The estimates involved in the analysis of the first term are similar to the ones used in the previous proposition, thus we omit the details. Similarly to what was done above we set $X_{t_1,t_2}= L^{\infty}_{[t_1, t_2] } L^2 \cap L^{6}_{[t_1, t_2] } L^6 $, then by Strichartz estimates (see equation \eqref{eq-dopo-hoel1} and proposition \ref{prop:stric})
\begin{equation}
\label{john}
\big\|\Psi - e^{-iH (\cdot-t_1)} \Psi_{t_1}\big\|_{X_{t_1,t_2}}\leqslant
\bigg\| \int_{t_1}^{\cdot} ds \, e^{-i H (\cdot-s)} | \Psi_s |^2 \Psi_s\bigg\|_{X_{t_1,t_2}}
\leqslant
C(t_2-t_1)^{1/2}\big\|\Psi\big\|_{X_{t_1,t_2}}^3\,,
\end{equation}
\begin{equation*}
\big\|e^{-iH(\cdot-t_1)}\Psi_{t_1}\big\|_{X_{t_1,t_2}}\leqslant C
\big\|\Psi_{t_1}\big\|\,,
\end{equation*}
which imply
\begin{equation*}
\big\|\Psi\big\|_{X_{t_1,t_2}}\leqslant
C\big\|\Psi_{t_1}\big\|+
C(t_2-t_1)^{1/2}\big\|\Psi\big\|_{X_{t_1,t_2}}^3\,.
\end{equation*}
 By lemma \ref{blocco} one has that if $(t_2-t_1)\leqslant\big[8 C^3 \|\Psi_{t_1}\|^2+4C^2 \|\Psi_{t_1}\|\big]^{-2}$, then $\|\Psi\|_{X_{t_1,t_2}}\leqslant 2C\|\Psi_{t_1}\|$; using this estimate in the inequality \eqref{john} we get
\begin{equation}
\label{L2S1}
\big\|\Psi_{t} - e^{-iH (t-t_1)} \Psi_{t_1}\big\|\leqslant
\big\|\Psi - e^{-iH(\cdot-t_1)} \Psi_{t_1}\big\|_{X_{t_1,t_2}}\leqslant
C(t_2-t_1)^{1/2} \big\|\Psi_{t_1}\big\|^3\leqslant Cv^{-\de/2}
\end{equation}
where we used $t_2-t_1=2v^{-\de}$.

We proceed now with the estimate of the second term on the r.h.s. of inequality \eqref{ineq1}. Let us set
\begin{equation*}
\Phi_{t_1}^{tail} :=
\begin{pmatrix}
0\\ \\
\phisol{-x_0,v}(t_1)\\ \\
0
\end{pmatrix}\;.
\end{equation*}
We notice that $\Phi_{t_1}^{S,in}+\Phi_{t_1}^{tail}=\Phi_{t_1}$ where the vector $\Phi_{t}$ was defined in equation \eqref{truesol} and rewrite $\Psi_{t_1}$ by adding and subtracting $\Phi_{t_1}$
\begin{equation*}
\Psi_{t_1}=\Psi_{t_1}-\Phi_{t_1}+\Phi_{t_1}^{S,in}+\Phi_{t_1}^{tail}\,.
\end{equation*}
The following trivial inequality holds true
\begin{equation}
\label{L2S2}
\big\|e^{-iH(t-t_1)}\Psi_{t_1}-e^{-iH(t-t_1)}\Phi_{t_1}^{S,in}\big\|\leqslant
\big\|\Psi_{t_1}-\Phi_{t_1}\big\|+
\big\|\Phi_{t_1}^{tail}\big\|\leqslant Ce^{-v^{1-\de}}
\end{equation}
where in the latter estimate we used proposition \ref{fase1} and the fact that $\|\Phi_{t_1}^{tail}\|\leqslant 2e^{-v^{1-\de}}$.

Let us consider now the last term on the r.h.s. of inequality \eqref{ineq1}. We are going to prove that for all $t\in(t_1,t_2)$ and for $v$ big enough
\begin{equation}
\label{seesee}
\big\|e^{-iH(t-t_1)}\Phi_{t_1}^{S,in}-\Phi^{S}_t\big\|\leqslant Cv^{-\de}\,.
\end{equation}
Let us introduce the functions
\begin{equation}
\label{phi-}
\phi^-_{t}(x):=\int_0^\infty U_{t-t_1}(x-y)\phisol{x_0,-v}(y,t_1) dy
\end{equation}
and
\begin{equation}
\label{phi+}
\phi^+_{t}(x):=\int_0^\infty U_{t-t_1}(x+y)\phisol{x_0,-v}(y,t_1) dy\,.
\end{equation}
First we prove a preliminary formula for the vector $e^{-iH(t-t_1)}\Phi^{S,in}_{t_1}$ (see equations \eqref{carnival} and \eqref{carnival2} below). For any constant $a>0$, not dependent on $v$ let us consider the term
\begin{equation*}
\begin{aligned}
&v a \int_0^{\infty}e^{-uva}\big[U^+_{t-t_1}\phi_{x_0,-v}(t_1)\big](u+x) du\\
=&
v a \int_0^{\infty}du \int_0^\infty dy\, e^{-uv a} U_{t-t_1}(u+x+y)e^{i\varphi(t_1)}e^{-i\frac{v}{2}y}\phi(y-x_0+vt_1)
\end{aligned}
\end{equation*}
where we set $\varphi(t):=- t \f {v^2} 4+t$. By integrating by parts we obtain the equality
\begin{equation}
\begin{aligned}
\label{misunderstood}
&v a\int_0^{\infty}e^{-uv a}\big[U^+_{t-t_1}\phi_{x_0,-v}(t_1)\big](u+x) du\\
=&
2i a\int_0^{\infty}du \int_0^\infty dy\, e^{-uv a} U_{t-t_1}(u+x+y)e^{i\varphi(t_1)}\bigg[\frac{d}{dy}e^{-i\frac{v}{2}y}\bigg]\phi(y-x_0+vt_1)\\
=&A_1(x,t)+A_2(x,t)+A_3(x,t)
\end{aligned}
\end{equation}
with
\begin{equation*}
A_1(x,t) := -
2ia \int_0^{\infty}du\, e^{-uv a} U_{t-t_1}(u+x)e^{i\varphi(t_1)}\phi(-x_0+vt_1)
\end{equation*}
\begin{equation*}
A_2(x,t) := -
2ia\int_0^{\infty}du \int_0^\infty dy\, e^{-uva} \bigg[\frac{d}{dy}U_{t-t_1}(u+x+y)\bigg]e^{i\varphi(t_1)}e^{-i\frac{v}{2}y}\phi(y-x_0+vt_1)\\
\end{equation*}
\begin{equation*}
A_3(x,t) := -
2ia\int_0^{\infty}du \int_0^\infty dy\, e^{-uva} U_{t-t_1}(u+x+y)e^{i\varphi(t_1)}e^{-i\frac{v}{2}y}\bigg[\frac{d}{dy}\phi(y-x_0+vt_1)\bigg]\,.
\end{equation*}
We notice that
\begin{equation*}
\bigg\|
\int_0^{\infty}du\, e^{-uv a} U_{t-t_1}(u+\cdot)\bigg\|_{L^2(\RE^+)}
\leqslant
\bigg\|
\int_0^{\infty}du\, e^{-uva} U_{t-t_1}(u+\cdot)\bigg\|_{L^2(\RE)}
=
 \| \chi_+ e^{-va \cdot} \|_{L^2(\RE)}
  =
 \sqrt{ \frac{1}{2 v a}}.
\end{equation*}
Then the following estimate for the term $A_1$ holds true
\begin{equation}
\label{estA1}
\|A_1(t)\|_{L^2(\RE^+)}\leqslant 2a
\bigg\|
\int_0^{\infty}du\, e^{-uva} U_{t-t_1}(u+\cdot)\bigg\|_{L^2(\RE^+)} |\phi(-x_0+vt_1)|\leqslant C\,\frac{e^{-v^{1-\de}}}{v^{1/2}}\,.
\end{equation}
The term $A_3$ is estimated by
\begin{equation}
\label{estA3}
\begin{aligned}
\|A_3(t)\|_{L^2(\RE^+)}\leqslant &
2 a\int_0^{\infty}du \,e^{-uva}
\big\| \big[U_{t-t_1}\chi_+e^{-i\frac{v}{2}\cdot}\phi'(\cdot-x_0+vt_1)\big]\big(-(u+\cdot)\big)\big\|_{L^2(\RE^+)}\\
\leqslant &
2\big\|\phi'\big\|_{L^2(\RE)}
a\int_0^{\infty}du \,e^{-uva }
\leqslant \frac{C}{v}\,,
\end{aligned}
\end{equation}
where we used the equality $[U_t^+f](x)=[U_t \chi_+ f](-x)$.

\noindent
We compute finally the term $A_2$. By integration by parts
\begin{equation*}
\begin{aligned}
A_2(x,t)=&-
2ia\int_0^{\infty}du \int_0^\infty dy\, e^{-uva}
\bigg[\frac{d}{du}U_{t-t_1}(u+x+y)\bigg]e^{i\varphi(t_1)}e^{-i\frac{v}{2}y}\phi(y-x_0+vt_1)
\\
=&
2ia \phi^+_{t}(x)
-2iv a^2
\int_0^{\infty}e^{-uva}\big[U^+_{t-t_1}\phi_{x_0,-v}(t_1)\big](u+x) du\,,
\end{aligned}
\end{equation*}
where the function $ \phi^+_{t}$ was defined in equation
\eqref{phi+}. Using the last equality in the equation \eqref{misunderstood} we get
\begin{equation}
\label{night}
va\int_0^{\infty}e^{-uva}\big[U^+_{t-t_1}\phi_{x_0,-v}(t_1)\big](u+x) du=
\frac{2ia}{1+2ia} \phi^+_{t}(x)+
\frac{A_1(x,t)+A_3(x,t)}{1+2ia}\,.
\end{equation}
From the definition of $\Phi^{S,in}_t$ and using the last equality
with $a=\tilde\alpha/3$ in the formula for the integral kernel of
$e^{- i H_\de^{\tilde \al v} t}$, see equation \eqref{propdelta}, it follows that
\begin{equation}
\label{carnival}
e^{-iH_\de^{\tilde \alpha v}(t-t_1)}\Phi_{t_1}^{S,in}=
e^{-iH_\de^{\tilde \alpha v} (t-t_1)}
\begin{pmatrix}
\phisol{x_0,-v}(t_1)\\ \\
0\\ \\
0
\end{pmatrix}
=
\begin{pmatrix}
 \phi^-_{t}-\frac{1+2i\tilde\al}{3+2i\tilde\al}\phi^+_t\\ \\
\frac{2}{3+2i\tilde\al}\phi^+_t\\ \\
\frac{2}{3+2i\tilde\al}\phi^+_t
\end{pmatrix}
-\frac{2}{3}
\begin{pmatrix}
A_{\tilde\alpha}(t)\\ \\
A_{\tilde\alpha}(t)\\ \\
A_{\tilde\alpha}(t)
\end{pmatrix}\,,
\end{equation}
where the function $\phi_t^-$ was defined in equation \eqref{phi-} and
we set $A_{\tilde\alpha}(x,
t):=\big[(A_1(x,t)+A_3(x,t))/(1+2ia)\big]\big|_{a=\tilde\alpha/3}$. Similarly
using equality \eqref{night} with $a=3/\tilde \beta$ in the formula
for the integral kernel of the propagator $e^{- i H_{\de'}^{\tilde
    \beta / v} t}$, see equation \eqref{propdeltaprime}, we get
\begin{equation}
\label{carnival2}
e^{-iH_{\de'}^{\tilde \beta / v}(t-t_1)}\Phi_{t_1}^{S,in}=
\begin{pmatrix}
 \phi^-_{t}+\frac{\tilde \beta+2i}{\tilde\beta+6i}\phi^+_t\\ \\
- \f {4i} {\tilde \beta + 6 i}\phi^+_t\\ \\
- \f {4i} {\tilde \beta + 6 i}\phi^+_t
\end{pmatrix}
-\frac{2}{3}
\begin{pmatrix}
A_{\tilde\beta}(t)\\ \\
A_{\tilde\beta}(t)\\ \\
A_{\tilde\beta}(t)
\end{pmatrix}
\end{equation}
where we introduced the notation $A_{\tilde\beta}(x, t):=\big[(A_1(x,t)+A_3(x,t))/(1+2ia)\big]\big|_{a=3/\tilde\beta}$. We notice that the analogous formula for $e^{-iH_F(t-t_1)}\Phi_{t_1}^{S,in}$ can be obtained from equation \eqref{carnival} by setting $\tilde\alpha=0$ and $A_{\tilde\alpha}=0$.

To get the estimate \eqref{seesee} we show that, at the cost of an error of the order $(t_2-t_1)$, for $t\in(t_1,t_2)$, the functions $\phi^-_{t}(x)$ and $\phi^+_{t}(x)$ can be approximated by the solitons $\phisol{x_0,-v}(x,t)$ and $\phisol{-x_0,v}(x,t)$ respectively. We consider first the function $\phi^+_{t}$, by adding and subtracting a suitable term to the r.h.s. of equation \eqref{phi+} we get
\begin{equation}
\label{pippo1}
\begin{aligned}
\phi^+_{t}(x)=&
\int_{-\infty}^\infty U_{t-t_1}(x+y)\phisol{x_0,-v}(y,t_1)
+ i \int_{t_1}^{t} ds
\int_{-\infty}^\infty U_{t-s} (x + y)| \phisol{x_0,-v}(y,s)|^2 \phisol{x_0,-v}(y,s)
 dy
\\
&-
\int_{-\infty}^0 U_{t-t_1}(x+y)\phisol{x_0,-v}(y,t_1) dy - i \int_{t_1}^{t} ds
\int_{-\infty}^\infty U_{t-s} (x + y)|\phisol{x_0,-v}(y,s) |^2\phisol{x_0,-v}(y,s) \, dy\\
=&\phisol{-x_0,v}(x,t) +I(x,t)+II(x, t)\,,
\end{aligned}
\end{equation}
where we used the fact that $\phisol{x_0,-v}(-x,t)=\phisol{-x_0,v}(x,t)$ and we set
\begin{equation*}
I(x,t):=-
\int_{-\infty}^0 U_{t-t_1}(x+y)\phisol{x_0,-v}(y,t_1) dy
\end{equation*}
and
\begin{equation*}
II(x,t):=- i \int_{t_1}^{t} ds
\int_{-\infty}^\infty U_{t-s} (x + y)|\phisol{x_0,-v}(y,s) |^2\phisol{x_0,-v}(y,s) \, dy\,.
\end{equation*}
For the term $I$ we use the estimate
\begin{equation*}
\|I\|_{L^2(\RE^+)}\leqslant \|\phi (\cdot - x_0 + vt_1)\|_{L^2(\RE^-)}
\leqslant 2e^{-v^{1-\de}}\,.
\end{equation*}
The term $II$ is estimated by
\begin{equation*}
\|II\|_{L^2(\RE^+)}\leqslant (t-t_1) \|\phi^3\|_{L^2(\RE)}\leqslant C v^{-\de}\,.
\end{equation*}

\noindent
Similarly, for the function $\phi^-_{t_2}$, we get
\begin{equation}
\label{pippo2}
\phi^-_{t}(x)=\phisol{x_0,-v}(x,t)+III(x,t)+IV(x,t)
\end{equation}
where we set
\begin{equation*}
III(x,t):=
-
\int_{-\infty}^0 U_{t-t_1}(x-y)\phisol{x_0,-v}(t_1,y) dy
\end{equation*}
and
\begin{equation*}
IV(x,t):= - i \int_{t_1}^{t} ds
\int_{-\infty}^\infty U_{t-s} (x - y)|\phisol{x_0,-v}(s,y)|^2\phisol{x_0,-v}(s,y)
\, dy\,.
\end{equation*}
For $t\in(t_1,t_2)$ the estimates
\begin{equation*}
\|III\|_{L^2(\RE^+)}\leqslant 2e^{-v^{1-\de}}\;;\quad \|IV\|_{L^2(\RE^+)}\leqslant C v^{-\de}
\end{equation*}
are similar to the ones given above for the terms $I$ and $II$. Then from equations \eqref{pippo1} and \eqref{pippo2} we have
\begin{equation*}
\|\phi^+_{t}-\phisol{-x_0,v}(t)\|_{L^2(\RE^+)}\leqslant C[e^{-v^{1-\de}}+v^{-\de}]\,,\quad
\|\phi^-_{t}-\phisol{x_0,-v}(t)\|_{L^2(\RE^+)}\leqslant C[e^{-v^{1-\de}}+v^{-\de}]
\end{equation*}
for all $t\in(t_1,t_2)$. By using the last estimates and the estimates \eqref{estA1} and \eqref{estA3} in equations \eqref{carnival} and \eqref{carnival2} we get
\begin{equation*}
\big\|e^{-iH(t-t_1)}\Phi_{t_1}^{S,in}-\Phi^S_{t} \big\|\leqslant
 C\bigg[e^{-v^{1-\de}}+v^{-\de}+\frac{e^{-v^{1-\de}}}{v^{1/2}}+\frac{1}{v}\bigg]
\end{equation*}
which in turn implies that for $v$ big enough the estimate \eqref{seesee} holds true.
 
Using estimates \eqref{L2S1}, \eqref{L2S2} and \eqref{seesee} in the inequality \eqref{ineq1} we get the estimate \eqref{prop2.1}.

The estimate \eqref{prop2.2} is a consequence of estimate \eqref{prop2.1} and of the fact that $\|\Phi^S_{t_2}-\Phi^{S,out}_{t_2}\|=\|\Phi^{S,in}_{t_2}\|\leqslant C e^{-v^{1-\de}}$.
\end{proof}
\begin{rem}{\em
Notice that estimate \eqref{prop2.1}, although not strictly necessary
for the proof of theorem \ref{mainth}, enforces the picture that in
the phase 2 a scattering event is occurring. The true wavefunction can
be approximated by the superposition of an incoming and an outgoing
wavefunction. At the end of this phase, only the outgoing wavefunction
is not negligibile.}
\end{rem}

\subsection{Phase 3}
Let us put $t_3=t_2+T\ln v$. We call ``phase 3'' the evolution of the system in the time interval $(t_2,t_3)$.
The approximation of $\Psi_t$ during this time interval is the content of theorem \ref{mainth}.

\n
We recall the following result (see \cite{[HMZ07]}).
\begin{prop}
\label{appAHMZ}
Let $\phi^{tr}_t$ and $\phi^{ref}_t$ be defined as it was done in equations \eqref{phiref} and \eqref{phitr} above. Then $\forall k\in \NN$ there exist two constants $c(k)>0$ and $\si(k)>0$ such that
\begin{equation}
\|\phi^\gamma_t\|_{L^2(\RE^-)}+\|\phi^\gamma_t\|_{L^\infty(\RE^-)}\leqslant \frac{c(k)(\ln v)^{\si(k)}}{v^{k(1-\de)}}
\label{shodo}
\end{equation}
for $\gamma = \{ \text{ref} , \text{tr} \}$, uniformly in $t\in[0,T\ln v]$.
\end{prop}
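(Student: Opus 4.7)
The plan is to exploit the Galilean invariance of cubic NLS to reduce the statement to a spatial decay estimate for a solution with $v$-independent initial data. Writing $\phi^{ref}_t(x) = e^{ivx/2 - iv^2 t/4}\,\tilde\phi_t(x - vt)$, the function $\tilde\phi_t(y)$ solves the same NLS with initial condition $\tilde r_H \phi(y - v^{1-\delta})$; the further translation $\psi_t(z) := \tilde\phi_t(z + v^{1-\delta})$ solves cubic NLS on the line with the $v$-independent Schwartz datum $\tilde r_H \phi(z)$, and one has $|\phi^{ref}_t(x)| = |\psi_t(x - vt - v^{1-\delta})|$. Thus evaluating $\phi^{ref}_t$ on $\RE^-$ amounts to evaluating $\psi_t$ on $\{z < -(vt + v^{1-\delta})\}$, which for $t \in [0, T\ln v]$ is contained in $\{z < -v^{1-\delta}\}$. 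The same reduction works for $\phi^{tr}$, with $\tilde t_H$ in place of $\tilde r_H$. It therefore suffices to show that $\|\psi_t\|_{L^2(z<-R)}$ and $\|\psi_t\|_{L^\infty(z<-R)}$ decay faster than any power of $R$ on the time horizon $T\ln v$.

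The main technical input is a polynomial-in-time weighted estimate for $\psi_t$. I would use the Galilean generator $X_t := z + 2it\partial_z$, which commutes with the free Schr\"odinger flow and therefore satisfies $\|X_t U_t\psi_0\| = \|z\psi_0\|$. For cubic NLS the commutator with the nonlinearity is of the schematic form $[X_t, |\psi|^2 \psi ] \sim t\,\psi^2\,\partial_z\bar\psi$, which is controlled via Strichartz estimates combined with the global $H^1$ well-posedness of NLS on the line. Iterating this construction to produce operators that behave asymptotically like $z^k$ and bootstrapping on the $H^k$ norm of $\psi$ (which is globally controlled for Schwartz data, ultimately by the infinite family of NLS conservation laws), one should obtain
\begin{equation*}
\|\langle z \rangle^k \psi_t\|_{L^2} \ \leqslant \ C(k)\, t^{\sigma(k)}, \qquad k \in \NN,
\end{equation*}
with some polynomial exponent $\sigma(k)$, uniformly in $v$.

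Once this weighted bound is available, the conclusion is immediate from Chebyshev: for any $R>0$,
\begin{equation*}
\|\psi_t\|_{L^2(z<-R)}^{2} \ \leqslant \ R^{-2k}\|\langle z\rangle^k \psi_t\|_{L^2}^{2}.
\end{equation*}
Taking $R = v^{1-\delta}$ and using $t \leqslant T\ln v$ yields $\|\psi_t\|_{L^2(z<-R)} \leqslant c(k)(\ln v)^{\sigma(k)}\, v^{-k(1-\delta)}$, which by the reduction above produces the announced $L^2$ bound for $\phi^{ref}_t$ on $\RE^-$. The $L^\infty$ bound is recovered by Sobolev embedding applied to an analogous weighted estimate for $\partial_z \psi_t$, derived by the same commutator method applied to $X_t\partial_z$ (or, equivalently, by iterating with $X_t^k$ and using $\|\partial_z \psi_t\|_{L^\infty} \lesssim \|\partial_z \psi_t\|^{1/2}\|\partial_z^2 \psi_t\|^{1/2}$).

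The main obstacle is the polynomial-in-time weighted bound: a naive Gronwall estimate on $\|X_t^k \psi_t\|$ would give an $e^{Ct}$ factor that, on the time horizon $t \sim \ln v$, inserts a $v^{C}$ factor and completely destroys the $v^{-k(1-\delta)}$ gain. Avoiding this requires using the \emph{conservation} of the higher NLS invariants — which do survive multiplication by $X_t$ in the sense that, after suitable gauging, their time derivative along the flow cancels to leading order — so that Gronwall is replaced by a polynomial-in-$t$ bootstrap. This is the point at which the integrable structure of cubic NLS enters in an essential way, and where the argument of \cite{[HMZ07]} is used verbatim in the present setting.
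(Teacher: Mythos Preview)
The paper does not prove this proposition: it is stated as a recalled result from \cite{[HMZ07]} and no argument is given in the present paper. Your sketch is therefore not to be compared against a proof in this paper but against the one in \cite{[HMZ07]}, and it captures that argument faithfully. The Galilean reduction is correct: with $\psi_0=\tilde r_H\phi$ (resp.\ $\tilde t_H\phi$), which is indeed $v$-independent because $\tilde r_H,\tilde t_H$ in \eqref{tildscoeff} do not depend on $v$, one has $|\phi^\gamma_t(x)|=|\psi_t(x-vt-v^{1-\delta})|$, so the $\RE^-$ restriction becomes a tail estimate at distance $\geqslant v^{1-\delta}$. The polynomial-in-time weighted bound $\|\langle z\rangle^k\psi_t\|_{L^2}\leqslant C(k)\langle t\rangle^{\sigma(k)}$ is exactly the spatial-localization lemma of \cite{[HMZ07]}, and you correctly identify both the mechanism (pseudoconformal operator $z+2it\partial_z$ plus the infinite hierarchy of cubic-NLS conserved quantities controlling all $H^k$ norms) and the obstruction a naive Gronwall would create on the $\ln v$ time horizon. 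The Chebyshev step and the $L^\infty$ recovery via a weighted $H^1$ bound are the standard endgame. In short: your proposal is a sound outline of the \cite{[HMZ07]} proof that the present paper merely quotes.
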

Notice that also the norms $\|\phi^\gamma_t\|_{L^p(\RE^-)}$, for $2 \leqslant p \leqslant \infty$, are estimated by the r.h.s. of
\eqref{shodo}.
Now we can prove theorem \ref{mainth}
\begin{proof}[Proof of Theorem \ref{mainth}.]
The strategy of the proof closely follows proposition \ref{fase1}. We will just sketch the common part of the proof while
proving in details the different estimates.

Let us define $ \Xi_t:=\Psi_t-\sum_{j=1}^3\Phi^j_t$ where the vectors
$\Phi^j_t$ were given in equation \eqref{Phijt} and fix $t_a\in
[t_2,t_3]$. From equations \eqref{intform} and \eqref{Phijt} it
follows that the vector $\Xi_t$ satisfies the following integral
equation
\begin{equation}
\label{amaranto}
\begin{aligned}
\Xi_t=&
e^{-iH (t-t_a)} \Xi_{t_a} \\
&+\sum_{j=1}^3\bigg[\Big(e^{-iH (t-t_a)}-e^{-iH_j (t-t_a)}\Big) \Phi^j_{t_a} +
i \int_{t_a}^t ds \, \Big( e^{-iH (t-s)}- e^{-i H_j (t-s)}\Big) | \Phi^j_s |^2 \Phi^j_s\bigg]\\
&+ i \int_{t_a}^t ds \, e^{-i
  H (t-s)} \sum_{j_1,j_2,j_3}(1-\de_{j_1j_2}\de_{j_2j_3})\overline{\Phi^{j_1}_s} \Phi^{j_2}_s \Phi^{j_3}_s\\
&+ i \int_{t_a}^t ds \, e^{-i
  H (t-s)} \bigg[\bigg| \Xi_s +\sum_{j=1}^3 \Phi^j_s\bigg|^2 \Xi_s +|\Xi_s|^2\sum_{j=1}^3 \Phi^j_s+2\Re\bigg[\overline{ \Xi_s}\sum_{j=1}^3 \Phi^j_s \bigg]\sum_{j=1}^3 \Phi^j_s\bigg]\\
=& e^{-iH (t-t_a)} \Xi_{t_a} + G(t_a,t) \\
&+ i \int_{t_a}^t ds \, e^{-i H (t-s)} \bigg[\bigg| \Xi_s +\sum_{j=1}^3 \Phi^j_s\bigg|^2 \Xi_s +|\Xi_s|^2\sum_{j=1}^3 \Phi^j_s+2\Re\bigg[\overline{ \Xi_s}\sum_{j=1}^3 \Phi^j_s \bigg]\sum_{j=1}^3 \Phi^j_s\bigg]\,.
\end{aligned}
\end{equation}

\n
Let us fix $t_b\in [t_a,t_3]$ and let $X_{t_a,t_b}= L^{\infty}_{[t_a, t_b] } L^2\cap L^{6}_{[t_a, t_b] }L^6 $. Using the Strichartz estimates as it was done in proposition \ref{fase1} (see equations \eqref{split} - \eqref{split3}), it is straightforward to prove that
\begin{equation*}
\| \Xi \|_{X_{t_a,t_b}} \leqslant \wt C\lf[ \| \Xi_{t_a} \| + \| G(t_a,\cdot)\|_{X_{t_a,t_b}} + (t_b-t_a)^{1/2} \lf( \| \Xi \|_{X_{t_a,t_b}}^2 + \| \Xi \|_{X_{t_a,t_b}}^3 \ri) \ri]
\end{equation*}
where $\wt C$ depends only on the constants appearing in the Strichartz estimates. Using lemma \ref{blocco} as it was done in the proof of proposition \ref{fase1} it follows that there exists $\tau>0$ such that, for any $t \in [t_2, t_3)$ one has
$\| \Xi \|_{X_{t, t + \tau}} \ls \ 2 \wt C \ \| \Xi_{t} \| + 2 \wt C \ \| G(t, \cdot) \|_{X_{t, t+ \tau}}$.

\n
We divide the interval $[t_2,t_3]$ in $N+1$ subintervals:
$[t_2+j\tau,t_2+(j+1)\tau)$, with $ j=0,...,N-1$; and $[t_2+N\tau,
t_3)$, and where $N$ is the integer part of $(t_3-t_2)/\tau$.

\n
Then proceeding by induction as we did in the proof of proposition
\ref{fase1}, see equations \eqref{indu} and \eqref{almost}, we get the
inequality
\begin{equation}
\label{perf}
\| \Xi \|_{X_{t_2,t_3}} \ \ls \sum_{j=0}^{N} (2\wt C)^{j+1} \| \Xi_{t_2}
\| + \sum_{j=0}^{N} \sum_{k = 0}^{j} (2 \wt C)^{j+1-k} \|
G (t_2+k\tau, \cdot) \|_{X_{t_2+k\tau, \min\{t_3,t_2+ (k+1) \tau\}}} \,.
\end{equation}

Now we estimate the initial  data $\| \Xi_{t_2}\|$ and the
source term $\|G (t_a, \cdot) \|_{X_{t_a,t_b}}$ with
$t_2\leqslant t_a\leqslant t_b\leqslant t_3$ and $t_b-t_a\leqslant\tau$.

\n
By proposition \ref{lemma2} (estimate \eqref{estA3}) and using the definitions \eqref{Phi1t2} - \eqref{Phi3t2} one has
\begin{equation*}
\Xi_{t_2} =
 \Psi_{t_2}-\sum_{j=1}^3\Phi^j_{t_2}=\Psi_{t_2}-\Phi^{S,out}_{t_2}
-\begin{pmatrix}
 \tilde t\,\phi_{x_0,-v}(t_2)\\ \\
\tilde r\,\phi_{x_0,-v}(t_2)\\ \\
\tilde t\,\phi_{x_0,-v}(t_2)
\end{pmatrix}
\,,
\end{equation*}
with $\| \Psi_{t_2}-\Phi^{S,out}_{t_2}\|\leqslant C v^{-\de/2}$. Since
\begin{equation*}
\int_0^\infty|\phi_{x_0,-v} (x,t_2)|^2dx=\int_0^\infty|\phi (x - x_0 + vt_2)|^2dx=
2\int_{v^{1-\de}}^\infty\sech(x)^2dx=\frac{4 e^{-2v^{1-\de}}}{1+e^{-2v^{1-\de}}}\leqslant 4 e^{-2v^{1-\de}}
\end{equation*}
we have
\begin{equation}
\label{ect}
\|\Xi_{t_2} \|\leqslant C\big(v^{-\f \de 2}+e^{-2v^{1-\de}}\big)\,\leqslant C v^{-\f \de 2} \,.
\end{equation}
Let us now consider the source term $G(t_a,t)$. We use the estimate $\|G(t_a,\cdot)\|_{X_{t_a,t_b}}\leqslant \|G(t_2,\cdot)\|_{X_{t_2,t_3}}$. To simplify the notation we set $G(t) \equiv G(t_2,t)$ and
\begin{equation*}
G_1 (t) : =\sum_{j=1}^3\bigg[\Big(e^{-iH(t-t_2)}-e^{-iH_j (t-t_2)}\Big) \Phi^j_{t_2} +
i \int_{t_2}^t ds \, \Big( e^{-iH(t-s)}- e^{-i H_j (t-s)}\Big) | \Phi^j_s |^2 \Phi^j_s\bigg]
\end{equation*}
\begin{equation*}
G_2(t) : = i \int_{t_2}^t ds \, e^{-i
  H (t-s)} \sum_{j_1,j_2,j_3}(1-\de_{j_1j_2}\de_{j_2j_3})\overline{ \Phi^{j_1}_s} \Phi^{j_2}_s \Phi^{j_3}_s\,.
\end{equation*}
By the definition of $G(t_a,t)$, see equation \eqref{amaranto} it follows that
\begin{equation*}
G(t)=G_1(t)+G_2(t)\,;
\end{equation*}
we estimate $G_1(t)$ and $G_2(t)$ separately.

We proceed first with the estimate of the term $G_1$. From equations \eqref{freepro}, \eqref{propdelta}, \eqref{propdeltaprime} and \eqref{twoedgepro-j} one can see that for any (column) vector $F=(F_1,F_2,F_3)\in L^2$
\begin{equation*}
\big[e^{-iH t}-e^{-iH_j t}\big]F= \mathbb{M}_j
\begin{pmatrix}
U^+_tF_1\\ \\
U^+_tF_2\\ \\
U^+_tF_3
\end{pmatrix}
-\frac{2}{3}\mathbb{J}
\begin{pmatrix}
v a \int_0^\infty e^{-u v a} \big[U^+_tF_1\big](u+\cdot) du\\ \\
v a \int_0^\infty e^{-u v a} \big[U^+_tF_2\big](u+\cdot) du\\ \\
v a \int_0^\infty e^{-u v a} \big[U^+_tF_3\big](u+\cdot) du
\end{pmatrix}
\qquad j=1,2,3\,,
\end{equation*}
where the constant $a$ and the matrices $ \mathbb{M}_j$ must be chosen
accordingly to the Hamiltonian $H$:\\
for $H=H_\de^{\tilde \alpha v}$,
\begin{equation*}
a=\frac{\tilde \alpha}{3}\;,\qquad\mathbb{M}_j=-\mathbb{I}+\frac{2}{3}\mathbb{J}-\mathbb{T}_j\,;
\end{equation*}
for $H=H_{\de'}^{\tilde \beta / v}$,
\begin{equation*}
a=\frac{3}{\tilde \beta}\;,\qquad
\mathbb{M}_j=\mathbb{I}-\mathbb{T}_j\,;
\end{equation*}
and the formula for $H=H_F$ can be obtained by setting $\tilde\alpha=0$ in the formula for $H=H_\de^\alpha$.

\n
Then, denoting by $\big(G_1(x,t)\big)_l$, $l=1,2,3$, the $l$-th component of the vector $G_1$ one has
\begin{equation*}
\big(G_1(x,t)\big)_l=\big(\widetilde{G}_1(x,t)\big)_l+\big(\widehat{G}_1(x,t)\big)_l\,,
\end{equation*}
with
\begin{equation*}
\big(\widetilde{G}_1(x,t)\big)_l:=
\sum_{j,k=1}^3\big(\mathbb{M}_j\big)_{lk}\bigg[\big[U^+_{t-t_2}\big(\Phi^j_{t_2}\big)_k\big](x)+
i\int_{t_2}^tds\big[U^+_{t-s}\big|\big(\Phi^j_{s}\big)_k\big|^2\big(\Phi^j_{s}\big)_k\big](x)\bigg]
\end{equation*}
\begin{equation*}
\big(\widehat{G}_1(x,t)\big)_l:=
-\frac{2va}{3}\sum_{j,k=1}^3\int_0^\infty e^{-vua}\bigg[\big[U^+_{t-t_2}\big(\Phi^j_{t_2}\big)_k\big](u+x)+
i\int_{t_2}^tds\big[U^+_{t-s}\big|\big(\Phi^j_{s}\big)_k\big|^2\big(\Phi^j_{s}\big)_k\big](u+x)\bigg]du\,.
\end{equation*}
From the definition of the vectors $\Phi^j_{t}$, see equations
\eqref{Phi123t} - \eqref{Phi1234t},
 we see that for each $l=1,2,3$ the function $\big(\widetilde{G}_1(x,t)\big)_l$ is a linear combination of four functions, $f^{\gamma,+}_t$ and $f^{\gamma,-}_t$, with $\gamma $ being equal to $ref$ and $tr$, given by
\begin{equation}
\label{fal+}
f^{\gamma,+}_t(x) := e^{-i \f {v^2} 4 t_2}e^{it_2}\bigg[\int_0^\infty U_{t-t_2}(x+y)\phi_0^\gamma(y)dy+i\int_{t_2}^tds\int_0^\infty U_{t-s}(x+y)|\phi^{\gamma}_{s-t_2}(y)|^2\phi^{\gamma}_{s-t_2}(y)dy\bigg]
\end{equation}
and
\begin{equation*}
f^{\gamma,-}_t(x) := e^{-i \f {v^2} 4 t_2}e^{it_2}\bigg[\int_0^\infty U_{t-t_2}(x+y) \phi_0^\gamma(-y)dy+i\int_{t_2}^tds\int_0^\infty U_{t-s}(x+y)|\phi^{\gamma}_{s-t_2}(-y)|^2\phi^{\gamma}_{s-t_2}(-y)dy\bigg]\,,
\end{equation*}
where the functions $\phi^{ref}_t$ and $\phi^{ref}_t$ were defined in equations \eqref{phiref} and \eqref{phitr} respectively.

Similarly one can see that for each $l=1,2,3$ the function $\big(\widehat{G}_1(x,t)\big)_l$ is a linear combination of
\begin{equation*}
va \int_0^\infty e^{-vua} f^{ref,+}_t(u+x)du\;,\quad
va \int_0^\infty e^{-vua} f^{tr,+}_t(u+x)du\;,
\end{equation*}
\begin{equation*}
va \int_0^\infty e^{-vua} f^{ref,-}_t(u+x)du\;,\quad
va \int_0^\infty e^{-vua} f^{tr,-}_t(u+x)du\;.
\end{equation*}
First we study the function $f^{\gamma,+}_t$. We notice that, adding and subtracting a suitable term in equation \eqref{fal+} and using the definitions \eqref{phiref} and \eqref{phitr}, $f^{\gamma,+}_t(x)$ can be written as
\begin{equation*}
f^{\gamma,+}_t(x)
= I(x,t) + II(x,t) + III(x,t)\,,
\end{equation*}
with
\begin{equation*}
 I(x,t):= e^{-i \f {v^2} 4 t_2}e^{it_2}\phi^{\gamma}_{t-t_2}(-x)
\end{equation*}
\begin{equation*}
 II(x,t):= -e^{-i \f {v^2} 4 t_2}e^{it_2}\int_0^{\infty} U_{t-t_2}(x-y)\phi_0^\al(-y)dy
 \end{equation*}
\begin{equation*}
 III(x,t):=-ie^{-i \f {v^2} 4 t_2}e^{it_2}\int_{t_2}^{t}ds\int_{0}^\infty U_{t-s}(x-y)|\phi^{\gamma}_{s-t_2}(-y)|^2\phi^{\gamma}_{s-t_2}(-y)dy\,.
\end{equation*}
Similarly to what was done above,
we set $X_{t_2,t_3}(\RE^\pm)= L^{\infty}_{[t_2, t_3] } L^2 (\RE^\pm)
\cap L^{6}_{[t_2, t_3] } L^6 (\RE^\pm) $. By proposition \ref{appAHMZ},
we have
\begin{equation}
\label{glad0}
\| I \|_{X_{t_2,t_3}(\RE^+)}=\|\phi^{\gamma}_{\cdot-t_2} \|_{X_{t_2,t_3}(\RE^-)}
\leqslant \frac{c'(k)(\ln v)^{\si '(k)}}{v^{k(1-\de)}}
\end{equation}
where $c'(k)$ and $\si '(k)$ are constants, different from the one
appearing in proposition \ref{appAHMZ}. For our purposes we do not
need to compute them.

\n
Using the one dimensional Strichartz estimates for $U_t$, we have
\begin{equation}
\label{glad1}
\| II\|_{X_{t_2,t_3}(\RE^+)}\leqslant C
\| \chi_- \phi_0^\gamma\|_{L^2(\erre) }
\leqslant C e^{-2v^{1-\de}}\,.
\end{equation}
Finally, the term ${III}$ can be estimated using the inhomogeneous Strichartz estimate and proposition \ref{appAHMZ}
\begin{equation}
\label{glad2}
\|III\|_{X_{t_2,t_3}(\RE^+)}
\leqslant C \|( \chi_- \phi^{\gamma}_{\cdot-t_2} )^3 \|_{L^1_{[t_2, t_3] } L^2(\erre)}
\leqslant C
\bigg[ \frac{c(k)(\ln v)^{\si(k)'}}{v^{k(1-\de)}}\bigg]^3\,.
\end{equation}
Collecting the estimates \eqref{glad0}, \eqref{glad1} and \eqref{glad2}, it follows that
\begin{equation*}
\|f^{\gamma,+}_\cdot\|_{X_{t_2,t_3}(\RE^+)}\leqslant C
\bigg[ \frac{c(k)(\ln v)^{\si(k)'}}{v^{k(1-\de)}}\bigg]\,.
\end{equation*}
The estimate of $f^{\gamma,-}_t$ is similar and we omit it. We have proved
that for some $c'(k)$ and $\si'(k)$ possibly bigger than $c(k)$ and $\si(k)$ we have:
\begin{equation*}
\|\widetilde G_1\|_{X_{t_2,t_3}} \leqslant C
\bigg[ \frac{c'(k)(\ln v)^{\si'(k)}}{v^{k(1-\de)}}\bigg]\,,
\end{equation*}
where $\widetilde G_1$ is the vector in $L^2$ with components $\big(\widetilde{G}_1(x,t)\big)_l$, $l=1,2,3$.

The estimate of $\widehat G_1(t)=\big(\big(\widehat{G}_1(t)\big)_1,\big(\widehat{G}_1(t)\big)_2,\big(\widehat{G}_1(t)\big)_3\big)$ is a trivial consequence of the fact that
\begin{equation*}
\bigg\|va \int_0^\infty e^{-vua} f^{\gamma,\pm}_\cdot(u+\,\cdot\,)du\bigg\|_{X_{t_2,t_3}(\RE^+)}
\leqslant
\big\|f^{\gamma,\pm}_\cdot\big\|_{X_{t_2,t_3}(\RE^+)}\leqslant
C
\bigg[ \frac{c(k)(\ln v)^{\si(k)'}}{v^{k(1-\de)}}\bigg]\,,
\end{equation*}
from which it follows that
\begin{equation*}
\|\widehat G_1\|_{X_{t_2,t_3}} \leqslant C
\bigg[ \frac{c'(k)(\ln v)^{\si'(k)}}{v^{k(1-\de)}}\bigg]\,;
\end{equation*}
and
\begin{equation}
\label{estG1}
\| G_1\|_{X_{t_2,t_3}} \leqslant C
\bigg[ \frac{c'(k)(\ln v)^{\si'(k)}}{v^{k(1-\de)}}\bigg]\,.
\end{equation}

\n
We analyse now the term $G_2$. Due to the presence of $(1-\de_{j_1j_2}\de_{j_2j_3})$ the components of the vector
\begin{equation*}
(1-\de_{j_1j_2}\de_{j_2j_3})\overline{ \Phi^{j_1}_s}\Phi^{j_2}_s\Phi^{j_3}_s
\end{equation*}
contains only terms (up to a phase) like
\begin{equation*}
 \phi^{\gamma_1}_{t-t_2} (x) \phi^{\gamma_2}_{t-t_2} (x) \phi^{\gamma_3}_{t-t_2} (-x)\quad\textrm{or}\quad
 \phi^{\gamma_1}_{t-t_2} (x) \phi^{\gamma_2}_{t-t_2} (-x) \phi^{\gamma_3}_{t-t_2} (-x)
\end{equation*}
where $\gamma_1$, $\gamma_2$ and $\gamma_3$ can be $ref$ or $tr$. This can easily be seen by using equations \eqref{Phi123t} - \eqref{Phi1234t}.
By Strichartz methods, it is sufficient to estimate the $L^1_{[t_2, t_3] } L^2(\erre^+)$ norm of these terms. Then using H\"older's inequality we have, for istance
\begin{equation*}
\|\phi^{\gamma_1}_{t-t_2}\phi^{\gamma_2}_{t-t_2}\phi^{\gamma_3}_{t-t_2}(-\cdot)\|_{L^{2}(\RE^+)}\leqslant
\|\phi^{\gamma_1}_{t-t_2}\|_{L^4(\RE^+)}\|\phi^{\gamma_2}_{t-t_2}\|_{L^4(\RE^+)}\|\phi^{\gamma_3}_{t-t_2}\|_{L^{\infty}(\RE^-)}
\leqslant C \frac{c(k)(\ln v)^{\si(k)}}{v^{k(1-\de)}}\,.
\end{equation*}
The second kind of terms can be estimated in the same way and we obtain
\begin{equation*}
\|G_2\|_{X_{t_2,t_3}} \leqslant C
\bigg[ \frac{c'(k)(\ln v)^{\si'(k)}}{v^{k(1-\de)}}\bigg]\,,
\end{equation*}
which, together with the estimate \eqref{estG1}, gives
\begin{equation}
\|G\|_{X_{t_2,t_3}} \leqslant C
\bigg[ \frac{c'(k)(\ln v)^{\si'(k)}}{v^{k(1-\de)}}\bigg]\,.
\label{kagu}
\end{equation}
Fix $k$ such that $k(1-\de)>2$ then for $v$ sufficiently large \eqref{kagu} implies
\begin{equation}
\|G\|_{X_{t_2,t_3}} \leqslant \f 1 v\,.
\label{unten}
\end{equation}

\begin{equation*}
\begin{aligned}
 \| \Xi \|_{X_{t_2,t_3}} \ \ls & Cv^{-\de/2} \sum_{j=0}^{N} (2\wt C)^{j+1} + v^{-1} \sum_{j=0}^{N} \sum_{k = 0}^{j} (2 \wt C)^{j+1-k}\\
\leqslant &
2 Cv^{-\de/2} (2\wt C)^{N+1} + 2 v^{-1} \sum_{j=0}^{N} (2 \wt C)^{j+1}
\\
\leqslant &
2 Cv^{-\de/2} (2\wt C)^{N+1} + 2 v^{-1} (2 \wt C)^{N+1}\leqslant \widehat C v^{-\de/2}(2 \wt C)^N\,.
\end{aligned}
\end{equation*}
Since $N$ is the integer part of $(t_3-t_2)/\tau=\frac{T}{\tau}\ln v$ we have
\begin{equation*}
 \| \Xi \|_{X_{t_2,t_3}} \ \ls \ \widehat C v^{- \delta/2 + \f T \tau
   \ln (2 \wt C)}
\end{equation*}
We can finally set $\tau_*\equiv \tau/\ln (2 \wt C)$ and $T_*=\de \tau_*/2$ and obtain
\begin{equation*}
 \| \Xi \|_{L^\infty_{[t_2,t_3]}L^2}\leqslant \| \Xi \|_{X_{t_2,t_3}} \leqslant \widehat C v^{ -\frac{T_*-T}{\tau_*}}\,,
\end{equation*}
which concludes the proof of theorem \eqref{mainth}.
\end{proof}

\vskip5pt

\vspace{20pt}
\n

\section{Conclusion and perspectives}

In the present paper we have given a first rigorous analysis of
nonlinear Schr\"odinger propagation on graphs. We have given a
preliminary proof of local and global well posedness of the dynamics,
and of energy and mass conservation laws for some distinguished vertex
couplings, i.e Kirchhoff, $\delta$ and $\delta'$ couplings. Then we
concentrated on the problem of collision of a fast solitary wave on
the graph vertex (with couplings as before). It turns out that the
solitary wave splits in reflected and transmitted components the form
of which are again of solitary type, but with modified amplitudes
controlled by scattering coefficient given by the linear graph
dynamics. This behaviour holds true over times of the order $\ln v$
where $v$ is the velocity of the impinging soliton.

We add some other remarks on the result and further analysis and
generalizations.\p To begin with, let us note that the real line with
a point interaction at $0$ can be interpreted as a degenerate graph
with two edges. The cited paper \cite{[HMZ07]} treats the special case
of a $\delta$ interaction on the line, and our description shows how
it could be possible to extend their results to other point
interactions; among the examples treated in the present paper there is a version
of the $\delta'$ interaction,
showing how to treat point interactions of a more singular character than the one given by a $\delta$.\par
Concerning more general issues, a sharper
description of the post interaction phase can be achieved by an
explicit characterization of the evolution of the modified solitary
profiles, i.e. of the $ \Phi^j_t$ . This last part is somewhat
delicate, and intersects with contemporary intense work on asymptotics
for solitons in integrable and quasi integrable PDE, so we limit
ourselves to the following remarks. In the case analysed in
\cite{[HMZ07]}, the asymptotic behaviour of nonlinear Schr\"odinger
evolution of solitary waveforms with modified amplitudes is given, and
making use of inverse scattering theory it is shown (appendix B of the
cited paper) that the evolution is close to a soliton up to times of
order $\ln v$ and an error the order of which is an inverse power of
$v$. Borrowing from these results, it is possible to get in our case
too, but we omit details, the asymptotics of the $ \Phi^j_t$, i.e. of
the free NLS evolution of the modified solitary profiles outgoing from
the phase two. It turns out that these outgoing wavefunctions can be
approximated, on the same logarithmic timescale of Theorem 1.1, as new
solitons with the same waveform of the unperturbed dynamics, modified
amplitudes and phases, plus a dispersive (``radiation")
contribution. The meaning of this statement is that the $L^{\infty}$
norm of the difference between the evolved modified solitary profiles
$ \Phi^j_t$ and such final outgoing solitons has the usual dispersive
behaviour, $|t-t_2|^{-\frac{1}{2}}$. Let us note that, following this strategy, at the end of
the phase three, there would be two types of errors: errors due to the
approximation procedure in phase one and two (${\mathcal{O}}_{L^2}$); and errors arising from neglecting dispersion in the reconstruction
of the outgoing solitons (${\mathcal{O}}_{L^{\infty}}$) .\par An
important question concerns the possibility of extending the timescale
of validity of approximation by the solitary outgoing waves. As a
quite generic remark, this possibility could be related to the
asymptotic stability of the system, or of systems immediately related
to it.\p More concretely, in a different type of model (scattering of
two solitons on the line) in the already cited paper \cite{[AbFS]},
some considerations are given on obtaining longer timescales of
quasiparticle approximation in dependence of the initial data and
external potential, but it is unclear whether similar
considerations can be applied to the present case.
\p
Another issue is the nonlinearity. The fundamental asymptotics proved
in \cite{[HMZ07]} and used in the present paper relies on the
integrable nature of cubic NLS, and it is not immediate to extend
these results to more general nonlinearities. One can conjecture that
for nonlinearities close to integrable ones which admit solitary
waves, the outgoing waves are close to solitons over suitable
timescales. Let us mention, however, the recent results of Perelman on
the asymptotics of colliding solitons for nonlinearity close to
integrable or $L^2$ critical on the line (\cite {[P1]} and \cite{[P2]}
). \par A final problem is the extension of results
of the present work to more general graphs. We believe that results
similar to the ones of the present papers are valid for more general
boundary conditions at the vertex of a star graphs, with the same
proof, under the condition of absence of eigenvalues for the linear
Hamiltonian describing the graph. 
In presence of eigenvalues, some
Strichartz estimates weaken, and a more refined analysis is needed
(see \cite{[DH]} for the analogous problem on the line with an
attractive $\delta$ interaction). \p Of course, the extension of the present results to the case of star graphs with more than three edges has to be considered straightforward, while the extension to
graphs having a less trivial topology is an open problem.\p Finally
let us comment briefly the recent paper \cite{[Sob]}. In this partly
heuristic paper the authors study a star graph (but also more general type of graphs are
considered) with a NLS in which on every edge there is a different strenght $\beta_k$  in front of the the cubic term.  The authors fix a boundary condition which guarantees that mass and energy of the solution are (formally) conserved. Moreover according to the authors it is possible to derive a condition on the strenghts ${\beta}_k$ which allow for complete transmission of an incoming
solitary wave across the vertex. In these same situations the authors
show that an infinite chain of conserved quantities exists, defined
analogously to the case of the NLS on the line. The result, if formal,
is interesting, and concerning the relation with ours we note the
following.  In the case of a three edge graph and more generally for a
odd edge number, the complete transmission is made possible exactly by
the fine tuning of the coupling constants in front of the
nonlinearities. For the case of a single medium with the same
nonlinearity on every edge and Kirchhoff boundary conditions, one can
prove (see \cite{[ACFN2]}, where more generally the case of nonlinear
bound states for $\delta$ boundary conditions is treated) that exact
travelling solitons exist only in the case of a graph with an even
number of edges, while in the case of an odd number of edges a
stationary state is formed which is given by half a free soliton on
every edge.

\vskip5pt {\bf Acknowledgements.}
 
\n The present research was partially supported by INDAM-GNFM research
project ``Equazione di Schr\"odinger non lineare interagente con
difetti sulla retta e su grafi''. The Hausdorff Research Institute for
Mathematics is also acknowledged for the support.  The authors are grateful to Sergio
Albeverio for comments and discussions.





\end{document}